\tikzset{node/.style = {circle, fill, minimum size=4pt, inner sep=0pt}}
\newcommand{\st}{\,:\,}
\newcommand{\paths}{\mathcal{P}}
\newcommand{\LPMRF}{(P)}
\newcommand{\LPMRFdual}{(D)}
\newcommand{\mrf}{$\textup{MRF}$}
\newcommand{\rni}{$\textup{RNI}$}
\newcommand{\mrfdec}{$\textup{MRF}^{\star}$}
\newcommand{\rnidec}{$\textup{RNI}^{\star}$}
\newcommand{\mrfr}{$\textup{MRF-R}$}
\newcommand{\mrfrx}{$\textup{MRF-R}^{\star}$}
\newcommand{\mrfm}{$\textup{MRF-M}$}
\newcommand{\mrfmx}{$\textup{MRF-M}^{\star}$}
\newcommand{\fraccol}[1]{\chi_{\text{f}}(#1)}
\newcommand{\NP}{N\!P}
\newcommand{\coNP}{coN\!P}
\def\mathrlap{\mathpalette\mathrlapinternal} 
\def\mathrlapinternal#1#2{\rlap{$\mathsurround=0pt#1{#2}$}}
\newcommand{\elsum}[1]{\sum_{\mathrlap{#1}}\;}
\spnewtheorem{observation}[theorem]{Observation}{\bfseries}{\itshape}
\crefname{observation}{Observation}{Observations}
\Crefname{appsec}{Appendix}{Appendices}
\title{Stronger Hardness for Maximum Robust Flow and Randomized Network Interdiction\vspace{-0.3cm}}
\titlerunning{Stronger Hardness for Maximum Robust Flow}
\author{Jannik Matuschke (KU Leuven)}
\authorrunning{Jannik Matuschke}
\institute{}
\begin{document}

\maketitle

\vspace*{-0.4cm}

\begin{abstract}
    We study the following fundamental network optimization problem known as \textsc{Maximum Robust Flow} (MRF):
    A planner determines a flow on $s$-$t$-paths in a given capacitated network.
    Then, an adversary removes $k$~arcs from the network, interrupting all flow on paths containing a removed arc.
    The planner's goal is to maximize the value of the surviving flow, anticipating the adversary's response (i.e., a worst-case failure of $k$ arcs).
    It has long been known that  MRF can be solved in polynomial time when $k = 1$ %(even when the flow is required to be integral)
    \cite{AnejaChandrasekaranNair2001}, whereas it is $\NP$-hard when $k$ is part of the input~\cite{DisserMatuschke2016}.
    However, the complexity of the problem for constant values of $k > 1$ has remained~elusive, in part due to structure of the natural LP description preventing the use of the equivalence of optimization and separation.
    This paper introduces a reduction showing that the basic version of MRF described above encapsulates the seemingly much more general variant where the adversary's choices are constrained to $k$-cliques in a compatibility graph on the arcs of the network.
    As a consequence of this reduction, we are able to prove the following results: 
    (1) {\mrf} is $\NP$-hard for any constant number $k > 1$ of failing arcs, answering an open question raised in~\citep{DisserMatuschke2016}. 
    (2) When $k$ is part of the input, {\mrf} is~$P^{\NP[\log]}$-hard. 
    (3) The integer version of {\mrf} is $\Sigma_2^P$-hard.
\end{abstract}

\section{Introduction}
\label{sec:intro}

Network flows are a fundamental concept in combinatorial optimization and operations research. They form the central ingredient of numerous optimization models with applications concerning communication, energy transmission, logistics, road traffic, and many more.
In many of these applications, the underlying network infrastructure is susceptible to failures, e.g., due to technical malfunctions, unforeseen load peaks, or intentional sabotage,
causing flow using links affected by such failures to be interrupted.
This motivates the study of robust network optimization models~\cite{BertsimasNasrabadiStiller2013,bertsimas2013power,matuschke2016protecting,matuschke2020rerouting,DisserMatuschke2016,gottschalk2016robust,aggarwal2002multiroute,AnejaChandrasekaranNair2001,bertsimas2003robust}, which aim to produce resilient solutions by anticipating worst-case failure scenarios.

A basic model for robustness of network flows is the \textsc{Maximum Robust Flow}~(MRF) problem proposed by \citet*{AnejaChandrasekaranNair2001}.
This problem can be seen as a game played between a \emph{planner} and an \emph{interdictor}, an adversary emulating worst-case failure scenarios in the network.
The input to {\mrf} consists of a digraph $D = (V, A)$, arc capacities~\mbox{$u \in \mathbb{Q}_+^A$}, a source $s \in V$, a sink $t \in V$, and an interdiction budget $k \in \mathbb{N}$.
The planner wants to send flow from $s$ to $t$ with the goal of maximizing the amount surviving a worst-case failure of $k$ arcs selected by the interdictor, where any flow on paths containing a failed arc is lost.
Formally, let~$\mathcal{P} \subseteq 2^A$ denote the set of $s$-$t$-paths in $D$, let~$X := \{x \in \mathbb{Q}_+^{\paths} \st \sum_{P : a \in P} x_P \leq u_a\ \forall\; a \in A\}$
denote the set of $s$-$t$-flows in $D$ respecting the capacities $u$, and let $\mathcal{S}_k := \{S \subseteq A \st |S| = k\}$ denote the set of possible failure scenarios.
The planner then wants to solve
\begin{align*}
    \textstyle \max_{ x \in X} \min_{S \in \mathcal{S}_k} \sum_{P \in \paths : P \cap S = \emptyset} x_P.
\end{align*}

Note that the use of path flows in the problem definition is crucial, as the amount of flow interrupted by the failure of multiple arcs does not depend on the arc-flow values alone~(see, e.g.,~\citep[Fig.~6]{BertsimasNasrabadiStiller2013}). Note further that $|\paths|$ might be exponential in the size of the digraph $D$ given as input.

{\mrf} is closely related to the extensively studied \textsc{Network Interdiction} problem~\citep{chestnut2017hardness,zenklusen2010network,wood1993deterministic,guruganesh2014improved}, in which the players act in the reverse order, i.e., the interdictor removes a set of $k$ arcs so as to minimize the max-flow value in the remaining network.
\citet{bertsimas2013power} observed that the optimal value of {\mrf} equals that of \textsc{Randomized Network Interdiction} ({\rni}), in which the interdictor selects a probability distribution over possible failure sets and the planner chooses a flow knowing the distribution, i.e., 
\begin{align*}
    \textstyle \min_{z \in \Delta(\mathcal{S}_k)} \max_{ x \in X}   \sum_{P \in \paths} (1 - \sum_{S \in \mathcal{S}_k : P \cap S \neq \emptyset} z_S) \cdot x_P 
\end{align*}
where $\Delta(\mathcal{S}_k) := \{z \in [0, 1]^{\mathcal{S}_k} : \sum_{S \in \mathcal{S}_k} z_S = 1\}$ denotes the distributions over $\mathcal{S}_k$.

Indeed, {\mrf} and {\rni} can be equivalently described by the following pair of primal and dual linear programs (see \cref{app:interdiction} for a detailed discussion):

\hspace*{-0.5cm}%
\begin{minipage}{0.3\textwidth}
\begin{alignat*}{3}
    \LPMRF~\max \quad && \elsum{P \in \paths} x_P & \;-\; \lambda \\
    \text{s.t.} \quad && \elsum{P \in \paths : a \in P} x_P & \; \leq \; u_a && \quad \forall\; a \in A \\
    && \elsum{P \in \paths : P \cap S \neq \emptyset} x_P & \; \leq \; \lambda && \quad \forall\; S \in \mathcal{S}_k \\
    && x & \; \geq \; 0
\end{alignat*}
\end{minipage}
\begin{minipage}{0.55\textwidth}
\begin{alignat*}{3}
    \LPMRFdual~\min \quad && \elsum{a \in A} u_a y_a \qquad\quad \\
    \text{s.t.} \quad && \elsum{a \in P} y_a + \elsum{S \in \mathcal{S}_k : P \cap S \neq \emptyset} z_S & \; \geq \; 1 && \quad \forall\; P \in \paths \\
    && \elsum{S \in \mathcal{S}_k} z_S & \; = \; 1 && \\
    && y, z & \; \geq \; 0
\end{alignat*}
\end{minipage}\\[5pt]
Note that, at optimum, variable $\lambda$ equals the amount of flow lost in a worst-case scenario.
However, both $\LPMRF$ and $\LPMRFdual$ feature an exponential number of~variables.

In this paper, we study the complexity of {\mrf} and {\rni}.
For this purpose, we will refer to the corresponding decision versions of the problems:
\begin{quote}
\textbf{\mrfdec:} Given an instance $(D = (V, A), s, t, u, k)$ of {\mrf} and $L \in \mathbb{Q}_+$, decide whether $\max_{x \in X} \min_{S \in \mathcal{S}_k} \sum_{P \in \paths : P \cap S = \emptyset} x_P \geq L$.\\[5pt]  
\textbf{\rnidec:} Given an instance $(D = (V, A), s, t, u, k)$ of {\mrf} and $L \in \mathbb{Q}_+$, decide whether \mbox{$\min_{z \in \Delta(\mathcal{S}_k)} \max_{ x \in X} \sum_{P \in \paths} (1 - \sum_{S \in \mathcal{S}_k : P \cap S \neq \emptyset} z_S) \cdot x_P < L$}.
\end{quote}

Note that {\mrfdec} and {\rnidec} are complements of one another.
Moreover, for constant values of $k$, the number of constraints in $\LPMRF$ is polynomial, implying that it has an optimal solution with polynomial support (though the number of variables may still be exponential).
Hence, when restricting to instances where~$k$ is bounded by a constant, {\mrfdec} is in $\NP$ and {\rnidec} is in $\coNP$.
In general, however, {\mrfdec} and {\rnidec} are not known to be contained in $\NP$ or $\coNP$.

\subsection{Previous results on the complexity of {\mrf}/{\rni}}
\label{sec:related-work}

\citet{AnejaChandrasekaranNair2001} observed that for $k = 1$, the worst-case amount of flow lost in any solution to {\mrf} equals the maximum flow through any arc. Thus, {\mrf} and {\rni} can be solved in polynomial time by projecting $\LPMRF$ to arc-flow variables. 
Interestingly, by a simple rounding argument, this result extends to the case where the flow is required to be integral.

An alternative way to obtain the result from \cite{AnejaChandrasekaranNair2001} lies in solving the separation problem of $\LPMRFdual$, which for $k = 1$ reduces to a standard shortest path problem. 
However,
\citet{DuChandrasekaran2007} showed that the dual separation problem becomes $\NP$-hard already for $k = 2$. 
They suggested that this implies, via the equivalence of optimization and separation~\citep{grotschel2012geometric}, that {\mrf}/{\rni} for $k = 2$ are likewise $\NP$-hard.
Unfortunately, \citet{DisserMatuschke2016} later observed that this implication does not hold:
The equivalence of optimization and separation implies that it is $\NP$-hard to optimize \emph{arbitrary} linear objectives over the polyhedron defined by~$\LPMRFdual$. However, the linear objectives that arise from {\mrf}/{\rni} are not arbitrary, as the $z$-variables do not appear in the objective.
Indeed, the instances constructed in~\citep{DuChandrasekaran2007} to establish $\NP$-hardness of dual separation contain $s$-$t$-cuts of cardinality~$2$. 
For such instances, both $\LPMRF$ and $\LPMRFdual$  have trivial solutions of value~$0$. 
Therefore, the authors of~\citep{DisserMatuschke2016} conclude:
``If \textsc{Maximum Robust Flow} for $k = 2$ is indeed $\NP$-hard, a reduction that shows this would need to construct considerably more involved instances of the problem.''
   
For the case that~$k$ is part of the input, \citet{DisserMatuschke2016} show that {\mrf} is strongly $\NP$-hard.
Their result relies heavily on the hardness of the interdictor's best-response problem, i.e., solving $\max_{S \in \mathcal{S}_k} \sum_{P : P \cap S \neq \emptyset} x_P$ for
given $x \in X$, which however can be solved by enumeration \mbox{when~$k$ is constant.}

See \cref{app:related-work} for a discussion of the approximability of {\mrf}/{\rni} and work on related problems, such as \textsc{Network Interdiction} and variants of {\mrf}/{\rni}.

\subsection{Our contribution}

We introduce a generalization of {\mrf} 
that we show to be equivalent to the original problem via a series of careful reductions.
This facilitates the analysis of {\mrf}/{\rni}, leading to a series of new strong hardness results~for~both~problems. 

\subsubsection*{Hardness results for {\mrf}/{\rni}.}

As mentioned above, the complexity of {\mrf}/{\rni} for constant values of $k > 1$ is an open question. 
We answer this question by showing that the problems are in fact hard already when~\mbox{$k = 2$}.

\begin{theorem}\label{thm:mrf-2-np}
    Let $k \in \mathbb{N}$ with $k > 1$. Then {\mrfdec} restricted to instances with interdiction budget $k$ is $\NP$-complete, and {\rnidec} restricted to instances with interdiction budget $k$ is $\coNP$-complete.
\end{theorem}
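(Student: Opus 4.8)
\emph{Membership and overall strategy.} Membership was essentially settled in the introduction: for fixed $k$ the family $\mathcal{S}_k$ has only $O(|A|^k)$ members, so $\LPMRF$ has polynomially many constraints and hence a basic optimal solution supported on polynomially many $s$-$t$-paths, which together with the value of $\lambda$ is a polynomial-time-checkable yes-certificate for {\mrfdec}; thus {\mrfdec}$\,\in\NP$ and {\rnidec}$\,\in\coNP$. What remains is $\NP$-hardness of {\mrfdec} for every fixed $k>1$, and I would obtain it through the clique-restricted generalization {\mrfr} announced in the introduction: an instance here additionally specifies a \emph{compatibility graph} $H=(A,E)$ on the arcs, and the admissible failure scenarios are exactly the $k$-cliques of $H$ (ordinary {\mrf} being the special case $E=\binom{A}{2}$). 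The plan is to reduce a suitable $\NP$-complete problem to the decision version {\mrfrx}, and then to reduce {\mrfrx} to {\mrfdec}. The detour pays off because $H$ gives fine-grained control over which combinations of arc failures are available to the adversary, so that the first reduction amounts essentially to a clever choice of $H$, with the genuine difficulty postponed to realizing an arbitrary $H$ inside a plain network.

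\emph{Hardness of {\mrfrx}.} For the first reduction I would build, from a graph $G$, a network consisting of (near-)parallel unit-capacity bundles indexed by $V(G)$, and choose $H$ so that every $k$-clique of $H$ corresponds to a small clique of $G$. The planner must then distribute one unit of flow among the bundles so as to survive the simultaneous destruction of any such clique; the resulting optimal surviving value is controlled by the fractional chromatic number $\fraccol{G}$ of a suitable auxiliary version of $G$, and by reducing from a known $\NP$-complete coloring/clique problem one can place the threshold $L$ strictly between the yes- and no-values. The most transparent case is $k=2$, where a scenario is a single edge of $H$; larger fixed $k$ follows by keeping $H$ triangle-free and adjoining $k-2$ universal dummy arcs carrying no flow, which puts the $k$-cliques of the padded compatibility graph into a value-preserving bijection with the edges of $H$.

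\emph{From {\mrfrx} to {\mrfdec}: the main obstacle.} The crux --- precisely what the authors of \cite{DisserMatuschke2016} anticipated would require ``considerably more involved instances'' --- is to simulate the compatibility constraint topologically, i.e.\ to construct an {\mrf} instance in which no set of $k$ \emph{arbitrary} arcs interrupts more flow than the best $k$-clique of $H$. I would do this by a gadget replacement ensuring that (i) every arc whose removal can destroy any flow lies in a group indexed by a vertex of $H$; (ii) every other arc has enough parallel capacity, or a disjoint bypass, that failing it is useless to the adversary; and (iii) the bundles are routed so that interrupting the groups of a vertex set $T\subseteq V(H)$ costs one failure per vertex of $T$ unless $T$ is a clique of $H$, in which case a single shared arc serves all of $T$ at once. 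Verifying both halves of the ensuing correspondence --- every $k$-clique of $H$ is matched by a $k$-arc scenario of equal damage, and every non-clique $k$-arc scenario is dominated by a clique scenario --- while keeping the optimal value and the threshold $L$ exactly on target is the heart of the argument; I expect it to require carrying out the reduction in stages (first to an intermediate variant {\mrfmx}, then to {\mrf}) together with careful capacity bookkeeping. Composing the two reductions gives $\NP$-hardness of {\mrfdec} for every fixed $k>1$; with the membership argument this yields $\NP$-completeness, and {\rnidec} is then $\coNP$-complete by complementation.
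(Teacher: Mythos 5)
Your overall architecture matches the paper exactly: membership in $\NP$/$\coNP$ via the polynomial-size $\LPMRF$ for fixed $k$, then hardness of {\mrfrx} via \textsc{Fractional Graph Coloring}, then a two-stage reduction {\mrfrx}$\to${\mrfmx}$\to${\mrfdec} that simulates the compatibility graph by forcing incompatible arcs to share flow on common paths. The ``heart of the argument'' you flag --- making a non-clique failure set strictly dominated by a clique failure set via shared flow --- is precisely what the multi-commodity gadgets in the paper's \cref{sec:mrfr-to-mrfm,sec:mrfm-to-mrf} do.

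The one concrete gap is in your padding from $k=2$ to a general constant $k>1$. Recall that {\mrfrx} has a \emph{fixed} threshold of $k-1$ built into its definition. If you adjoin $k-2$ universal dummy arcs ``carrying no flow,'' the loss at any $k$-clique equals the loss at its underlying $2$-clique, so in the YES case the maximum loss remains $\le 1 \le k-1$, fine --- but in the NO case a flow with loss, say, $1+\varepsilon$ at some edge of $H$ still has loss only $1+\varepsilon \le k-1$ at every padded $k$-clique, turning a NO instance into a YES instance. The dummy arcs must instead carry exactly one unit of flow each (with $\theta$ increased by $k-2$ accordingly) and lie in every large clique of the padded $H$, so that the loss at the $k$-clique becomes $(k-2) + \lambda(x,\{a,a'\})$, which is $\le k-1$ iff $\lambda(x,\{a,a'\}) \le 1$. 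Also, the ``keeping $H$ triangle-free'' hypothesis is not what the paper's construction does --- its $H$ contains the large cliques $A_v$ --- but this is harmless: in the YES case any flow path through one arc of $A_v$ traverses all of $A_v$, so cliques inside $A_v$ contribute loss exactly $1$, and the padding still goes through. The paper does not spell out the $k>2$ padding at all; you are right that it is needed, you just have the wrong flow assignment on the dummy arcs.
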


For the general case that $k$ is not restricted to a constant, we show that both problems are at least as hard as any problem in $P^{\NP[\log]}$, the class of  all decision problems that can be solved in polynomial time by an algorithm that makes a logarithmic number of calls to an $\NP$-oracle.%
\footnote{
An $\NP$-oracle is a subroutine that, given an instance of a problem from $\NP$, returns the correct answer in polynomial time.
The class $P^{\NP[\log]}$ (also known as $\theta_2^P$~\citep{buss1991truth,hemachandra1987strong}) is situated between $\NP$ and $\Sigma_2^P$ and contains the so-called boolean hierarchy~\citep{riege2006completeness}. %See \cref{app:complexity} for details on this and other complexity classes used in this~paper.
% and contains the so-called boolean hierarchy~\citep{riege2006completeness}.
}%

\begin{theorem}\label{thm:mrf-pnp}
    {\mrfdec} and {\rnidec} are $P^{\NP[\log]}$-hard.
\end{theorem}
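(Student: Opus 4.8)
\noindent\textit{Proof plan.}\quad
The plan is to reduce from the \emph{clique comparison} problem: given two graphs $G_1$ and $G_2$, decide whether $\omega(G_1) \geq \omega(G_2)$, where $\omega$ denotes the maximum clique size. This problem lies in $P^{\NP[\log]}$---one determines $\omega(G_1)$ and $\omega(G_2)$ by binary search, using $O(\log n)$ queries to the $\NP$-predicate ``$\omega(G_i) \geq j$''---and it is known to be $P^{\NP[\log]}$-complete; indeed, comparing the optima of two instances of an $\NP$-hard maximization problem is a canonical complete problem for this class. Since $P^{\NP[\log]}$ is closed under complement and {\mrfdec} and {\rnidec} are complements of one another, it suffices to give a polynomial-time many-one reduction from clique comparison to {\mrfdec}; the $P^{\NP[\log]}$-hardness of {\rnidec} then follows immediately.

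I would carry out the reduction inside the clique-constrained variant of {\mrf} introduced above---in which the interdictor must remove $k$ arcs forming a clique in a prescribed compatibility graph on $A$---and then compose with the reduction from that variant to ordinary {\mrf}, so that it is enough to produce a clique-constrained instance together with a threshold $L$ computable from $n$, $k$ and $|A|$ alone. The construction exploits both levels of the game: the planner's flow will act as an $\NP$-certificate that $\omega(G_1)$ is large, while the interdictor's choice of failure set will act as an $\NP$-certificate that $\omega(G_2)$ is larger still. After padding $G_1$ and $G_2$ to a common vertex count $n$, one builds an instance so that: (i) the planner can route a fixed ``base'' amount $N_0$ of flow through arcs that are pairwise incompatible and incompatible with everything else, hence untouchable by any $k$-clique with $k \geq 2$; (ii) in addition, the planner may commit to a value $\ell \in \{0,\dots,n\}$ and route one further ``bonus'' unit of flow through a gadget that is traversable precisely when $G_1$ has a clique of size $\ell$; and (iii) given the planner's commitment $\ell$, the interdictor can use its budget of $k$ arcs to destroy this bonus unit precisely when $G_2$ has a clique of size $\ell+1$, i.e.\ when $\omega(G_2) > \ell$. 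With these properties, the robust value is $N_0 + 1$ if there exists $\ell$ with $\omega(G_1) \geq \ell$ and $\omega(G_2) \leq \ell$---equivalently, iff $\omega(G_1) \geq \omega(G_2)$---and is $N_0$ otherwise; setting $L := N_0 + 1$ and composing with the reduction to ordinary {\mrf} yields the claim.

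The main obstacle is realizing (ii) and (iii) faithfully in the clique-constrained model with a \emph{fixed} interdiction budget $k$: in particular, the threshold $L$ must be computable from the input without knowing $\omega(G_1)$ or $\omega(G_2)$. A naive design that hangs two parallel ``clique bundles'', one per graph, off a single linear budget only yields a worst-case loss governed by $\omega(G_1)+\omega(G_2)$ versus $k$, or by a conjunction of two unrelated threshold conditions---never by $\omega(G_1)$ versus $\omega(G_2)$. Overcoming this is exactly where the planner's commitment to $\ell$ and the interdictor's attempt to refute it are needed; the delicate point is forcing the interdictor to spend an amount of budget that is \emph{coupled} to the planner's $\ell$ before it can collect the bonus loss, so that its success encodes the strict inequality $\omega(G_2) > \ell$ rather than some additive relation. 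Such gadgets should be obtainable by refining the clique-based interdictor constructions already used for the strong $\NP$-hardness of {\mrf} when $k$ is part of the input~\citep{DisserMatuschke2016}; it then remains to check that the fractional planner cannot defeat the clique-test gadget by spreading flow over several near-traversable paths (which is what pins $\ell$ down to an integer) and that the composed chain of reductions to ordinary {\mrf} is polynomial and preserves the equivalence.
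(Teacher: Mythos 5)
Your starting point—clique comparison ($\omega(G_1) \geq \omega(G_2)$?) as a canonical $P^{\NP[\log]}$-complete problem—is legitimate, and your high-level plan (planner commits to a level $\ell$, interdictor refutes it via a $G_2$-clique) is a reasonable intuition. However, the proposal has a genuine gap at precisely the point you yourself flag: the coupling between the planner's commitment $\ell$ and the interdictor's required budget is asserted but never constructed, and it is not clear how to realize it within the model. In {\mrfrx} the interdiction budget $k$ is a fixed parameter of the instance and the interdictor simply picks the worst clique of size $\leq k$ in the compatibility graph; there is no mechanism that forces the interdictor to ``waste'' $\ell$ units of budget proportional to the planner's choice before it may attack the bonus flow. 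The construction would have to build this coupling into the arc/compatibility structure, and you give no gadget that does so. Likewise, you acknowledge but do not resolve the fractional-planner issue (a fractional flow could hedge over several values of $\ell$), which is exactly what makes the ``commitment'' semantics delicate.

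The paper takes a different and more modular route. Rather than reducing from a comparison of two optimization instances, it invokes Wagner's meta-theorem: one must exhibit a polynomial-time map from a monotone sequence $I_1,\dots,I_{2\ell}$ of instances of an $\NP$-complete problem to a single {\mrfrx} instance that is a YES instance iff the number of YES instances in the sequence is even. The building block (\cref{lem:clique-flow}) is a gadget combining, for a pair $(I,I')$, a \textsc{Clique} encoding of $I$ inside the compatibility graph with a {\mrfrx} encoding of $I'$ obtained from the already-proved $\NP$-completeness for $k=2$ (\cref{thm:mrfr-2-np}); the resulting instance is a YES iff $I$ is NO or $I'$ is YES. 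Crucially, these pairwise gadgets are then \emph{composed in parallel} by identifying sources and sinks, and because the compatibility graph has no edges between subinstances, the interdictor can only attack one subinstance at a time, which turns the conjunction over pairs into a single {\mrfrx} instance (\cref{lem:all-yes,lem:exists-no}). This sidesteps entirely the need for a budget-coupling gadget and for pinning down an integral $\ell$. Your plan does not reuse \cref{thm:mrfr-2-np} and does not use the monotonicity of Wagner's sequences, which are the two levers the paper's proof relies on; without them, the gadget you would need appears to require an independent and non-trivial construction that the proposal does not supply.
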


\cref{thm:mrf-pnp} gives a strong indication that the two problems are neither contained in $\NP$ nor $\coNP$.
In fact, the result implies that it is not even possible to describe general instances of {\mrf} or {\rni} by boolean formulas over constantly many instances of problems from~$\NP$, unless the polynomial hierarchy collapses to its third level~\citep{chang1996boolean}.
From a practical point of view, this rules out, e.g., compact mixed integer programming formulations for the problems.
\cref{thm:mrf-2-np,thm:mrf-pnp} also reveal a surprising contrast between {\rni} and its integer version, \textsc{Network Interdiction} (NI).  
Note that a given solution $S \in \mathcal{S}_k$ to NI can be evaluated simply by computing the maximum flow value in the digraph~$(V, A \setminus S)$.
Hence, NI is in $\NP$, and for constant values of~$k$, it can be solved in polynomial time by enumeration.
Thus, {\rni} is an interesting case of a problem where the convexity introduced by allowing randomized strategies makes the problem harder rather than easier.

We also consider the integral version of {\mrf}, which requires the flow value on every path to be an integer. 
For this problem, we obtain hardness at the second level of the polynomial hierarchy.

\begin{theorem}\label{thm:mrf-sigma}
    Integral {\mrfdec} is $\Sigma_2^P$-hard.
\end{theorem}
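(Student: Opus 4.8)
The plan is to reduce from $\Sigma_2$-\textsc{sat}: given a formula $\Psi = \exists x_1 \dots x_n\, \forall y_1 \dots y_m\; \phi(x,y)$ with $\phi = \bigvee_{j=1}^{r} D_j$ in $3$-DNF (each term $D_j$ a conjunction of at most three literals), decide whether some assignment to the $x$-variables makes every assignment to the $y$-variables satisfy some term $D_j$. This $\Sigma_2^P$-complete problem mirrors the logical shape of integral {\mrfdec}, which asks whether there \emph{exists} an integral flow such that \emph{for every} scenario $S \in \mathcal{S}_k$ the surviving flow is at least $L$: I will have the planner's integral flow encode the $x$-assignment, and the adversary's choice of $k := r$ failing arcs encode a $y$-assignment together with, for each term, a witnessing literal of its falsification. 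As in our other hardness proofs, I would first construct an instance of the clique-constrained generalization of {\mrf} and then apply the reduction of this paper (in its integrality-preserving variant) to obtain an ordinary integral {\mrfdec} instance with the same answer.

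In the generalized instance, each variable $x_i$ gets a \emph{sign gadget} that carries a single unit of flow, routable through one of two internally disjoint chains, one per polarity of $x_i$; integrality forbids splitting this unit, so the planner must commit to a Boolean value for $x_i$. Each term $D_j$ has a dedicated \emph{reward path} of infinitesimal value $\varepsilon$ that runs through one segment per literal of $D_j$: the segment for an $x$-literal $\ell$ contains a bypass arc in parallel with an ``exposed'' unit-capacity arc $e_{j,\ell}$, while a segment for a $y$-literal contains only an exposed arc. The two chains of the $x_i$-gadget are threaded through the bypass arcs of exactly those $x$-literal segments that would be falsified by the respective value of $x_i$, so that routing the sign unit saturates them; consequently the reward of $D_j$ is forced onto $e_{j,\ell}$ precisely when the planner's assignment makes the literal $\ell$ false. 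The compatibility graph has the arcs $e_{j,\ell}$ as vertices, with $e_{j,\ell}$ adjacent to $e_{j',\ell'}$ iff $j \neq j'$ and $\ell,\ell'$ place no contradictory requirement on a common $y$-variable; its $r$-cliques are then exactly the consistent selections of one ``false'' literal per term. Killing such a clique destroys $D_j$'s reward iff the selected literal is genuinely false under the planner's $x$-assignment and the committed $y$-assignment, i.e.\ iff $D_j$ is unsatisfied. With $L$ set to the total sign-flow value plus $\varepsilon$ --- and noting that $\varepsilon$ is too small for the planner to gain by leaving a sign gadget slack --- the planner attains $L$ against every adversary response iff, for its chosen $x$-assignment, every $y$-assignment still satisfies some term, i.e.\ iff $\Psi$ is a yes-instance. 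Applying the reduction of this paper then transports this equivalence to ordinary integral {\mrfdec}; since only hardness is needed, no matching upper bound has to be argued.

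The crux is getting two coupled things exactly right. First, the planner's integral flow must be a genuine Boolean assignment: it must be impossible to hedge a variable towards ``both'' values, which is why each sign bit is a single unit forced through one of two chains (integrality then yields all-or-nothing) and why the term rewards are made infinitesimal, so that no amount of extra reward protection compensates for abandoning a sign bit. Second, the compatibility graph must forbid $y$-commitments that conflict across terms --- such conflicts genuinely enlarge the adversary's power, so this cannot be left to the players' incentives --- while at the same time it cannot itself certify the correctness of an $x$-literal falsification, since the planner's choice is invisible to the adversary's clique; that half of the logic therefore has to be enforced indirectly, through the routing of the sign chains across the reward paths. Tuning all capacities, the budget $k = r$, and the threshold $L$ so that both players are simultaneously pinned to their intended strategies, and finally verifying that the paper's clique-to-basic reduction carries integral instances to integral instances, is where the technical effort concentrates.
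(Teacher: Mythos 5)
Your high-level plan --- reduce $\Sigma_2$-SAT (in $\exists\forall$-3-DNF form) to \mrfrx{} with sign gadgets encoding the $x$-assignment and a compatibility graph encoding $y$-consistency, then apply \cref{thm:main-reduction} to land in integral \mrfdec{} --- is a genuinely different route from the paper, which instead reduces from \textsc{Clique Interdiction}~\citep{rutenburg1994propositional}, building a chain of binary-choice gadgets that encodes a hitting set $R \subseteq V$ and a compatibility graph that directly copies the input graph $G$. But your sketch has a concrete obstruction that the paper's reduction is precisely engineered to avoid: the destination problem of \cref{thm:main-reduction} is \mrfrx{}, which \emph{fixes} unit capacities $u \equiv 1$, a fixed demand $\theta$, and the fixed threshold ``$\lambda(x,S) \leq k-1$''. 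There is no tunable $L$, and in the integral version there are no infinitesimal flow values: a reward path either carries $0$ or $\geq 1$ unit. Your ``$\varepsilon$'' cannot be both nonzero (so that the budget $k-1$ bites) and small enough that the planner never trades a sign unit for reward protection; once $\varepsilon = 1$, a sign unit and a reward unit are fungible, and your stated rationale for why the planner must commit each $x_i$ evaporates. Scaling the sign gadgets up instead would require bundles of parallel unit arcs, which makes ``integrality forbids splitting'' false.

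The second gap is the converse direction, which you flag as ``the crux'' but do not argue. You need that \emph{any} integral $x \in X$ with $\sum_P x_P = \theta$ and $\max_{S \in \mathcal{S}_{H,k}} \lambda(x,S) \leq k-1$ induces a valid $x$-assignment $\alpha$. Because the sign chains are threaded through the bypass arcs \emph{inside} the reward paths, sign-chain nodes and reward-path nodes coincide, and an integral $\theta$-unit flow in a DAG with unit capacities is merely a collection of $\theta$ arc-disjoint $s$-$t$-paths --- it has no obligation to respect your labeling of arcs as ``sign'' or ``reward''. Hybrid paths that enter via a sign-chain prefix and exit via a reward-path suffix (or vice versa) can route around exposed arcs $e_{j,\ell}$ without saturating the corresponding bypasses. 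Ruling this out requires exactly the kind of cut-and-saturation argument the paper carries out (compare \cref{lem:sigma-path-s-t} and \cref{lem:sigma-paths-structure}, which pin down the full structure of any feasible integral flow before extracting the combinatorial object). As written, your proposal identifies the right difficulties but neither resolves the $\varepsilon$-vs-integrality conflict nor supplies the structural lemmas needed for the extraction step, so it is a plausible plan rather than a proof.
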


We remark that \cref{thm:mrf-sigma} does not follow from recent meta theorems on the hardness of robust/interdiction versions of $\NP$-hard problems~\citep{grune2024complexity,grune2025completeness}. Indeed, as maximum integral $s$-$t$-flows can be computed in polynomial time, this problem marks a curious example of a problem whose complexity jumps at least by two levels in the polynomial hierarchy when going to the robust version.

\subsubsection*{Technical overview: {\mrf} with Restricted Interdiction.}
At the core of all three results mentioned above is a generalized version of {\mrf}, which we call \textsc{Maximum Robust Flow with Restricted Interdiction} ({\mrfr}), in which we are given an additional \emph{compatibility graph} $H = (A, E_H)$ on the arcs of $D$ and where the interdictor's choices are restricted to $$\mathcal{S}_{H,k} := \{S \subseteq A \st |S| \leq k, S \text{ is a clique in } H\}.$$
For our purposes, it will suffice consider the following decision variant of the problem on directed acyclic graphs (DAGs) with unit capacities:
\begin{quote}
\textbf{\mrfrx:} Given an instance $(D = (V, A), s, t, u, k)$ of {\mrf}, where $D$ is a DAG and $u \equiv 1$,
a compatibility graph $H = (A, E_H)$, and a demand~\mbox{$\theta \in \mathbb{Q}_{+}$}, decide whether there exists $x \in X$ such that \mbox{$\sum_{P \in \paths} x_P = \theta$} and $\max_{S \in \mathcal{S}_{H,k}} \sum_{P \in \paths : S \cap P \neq \emptyset} x_P \leq k - 1$.    
\end{quote}

Our main technical contribution is the following reduction, showing that any {\mrfrx} instance can in fact be modeled by an {\mrfdec} instance.

\begin{theorem}\label{thm:main-reduction}
    There is a polynomial transformation from {\mrfrx} to  {\mrfdec} that preserves the value of $k$.
    The same transformation also applies to the integral versions of the problems. The capacities in the resulting {\mrfdec} instance are polynomial in the size of the network.
\end{theorem}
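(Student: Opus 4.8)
The plan is to turn an {\mrfrx} instance $(D = (V,A), s, t, u \equiv 1, k)$ with compatibility graph $H = (A, E_H)$ and demand $\theta$ into a single {\mrfdec} instance $(D' = (V', A'), s', t', u', k)$ with the \emph{same} interdiction budget $k$, together with a threshold $L := \theta - (k-1)$, so that the two instances have the same answer. The transformation will be \emph{local} in $H$: since $\mathcal{S}_{H,k}$ may contain exponentially many cliques but $H$ has only $O(|A|^2)$ (non-)edges, the construction will only refer to the pairwise compatibility relation. Concretely, I would keep a copy of $D$, replace each arc $a \in A$ by a small \emph{arc gadget} $\Gamma_a$, link $\Gamma_a$ and $\Gamma_b$ for each pair $\{a,b\}$ depending on whether $\{a,b\} \in E_H$, and attach an auxiliary capacitated ``entrance'' that forces the portion of any near-optimal planner flow entering the gadgetized copy of $D$ to equal $\theta$. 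Since $D$ is a unit-capacity DAG we may assume $\theta \le |A|$, so $L$ and all capacities of $D'$ can be kept integral and polynomial in $|A|$ and $k$. The gadgets are built around one property: a removal of $k$ arcs of $D'$ interrupts a substantial amount of flow only if those arcs ``project'' onto a clique $S \in \mathcal{S}_{H,k}$ of $D$ (the original arcs $a$ whose gadget $\Gamma_a$ the removal meets), and for such a removal the interrupted amount equals exactly the amount of the corresponding flow of $D$ interrupted by $S$; if the removed arcs project onto a set containing a non-edge of $H$, the internal arcs and cross-links of the gadgets must give the planner room to route its committed flow through $D'$ so as to survive the removal, so that such scenarios are dominated by clique scenarios.

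I would then establish equivalence in the two directions. For completeness, a feasible {\mrfrx} flow $x$ (value $\theta$, every clique in $\mathcal{S}_{H,k}$ interrupting at most $k-1$) lifts through the gadgets to a flow $x'$ of value $\theta$ in $D'$ in which the gadget-internal flow is routed so that $x'$ survives every non-clique $k$-removal with loss at most $k-1$ while every clique $k$-removal interrupts exactly as much as the corresponding clique interrupts in $x$; hence $x'$ has robust value at least $L$. For soundness, given any flow $x'$ of robust value at least $L$, the entrance structure forces $|x'| = \theta$ routed through the gadgetized copy of $D$ with worst-case loss at most $k-1$; since restricting the adversary to removals that project onto cliques only weakens her, the flow induced on $D$ has value $\theta$ and is interrupted by at most $k-1$ by every clique in $\mathcal{S}_{H,k}$, i.e.\ it is a feasible {\mrfrx} solution. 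Because all capacities of $D'$ are integers bounded by $\mathrm{poly}(|A|,k)$ and the gadgets are designed so that the non-clique robustness can be realized by an \emph{integral} rerouting (not merely by fractional spreading), the same construction witnesses the integral statement as well, with integral flows of $D'$ projecting to integral flows of $D$ and conversely.

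The crux is the gadget design. It must (i) keep the interdiction budget at exactly $k$ --- the clique restriction may not be ``paid for'' by enlarging the budget, and the mismatch between ``$\le k$'' (cliques in {\mrfrx}) and ``$= k$'' (scenarios in {\mrfdec}) must be rendered harmless, which I expect to arrange by letting an adversary with surplus budget waste removals on inconsequential arcs; (ii) stay polynomial, so $H$ may only be encoded through its pairwise relation and not through its cliques; and, most delicately, (iii) be calibrated so that each $\Gamma_a$ provides \emph{just enough} parallelism that the planner --- who must commit to $x'$ before seeing the attack, and must do so integrally in the integral case --- can route a single flow that simultaneously survives all of the exponentially many non-clique $k$-removals, yet \emph{not so much} that the amount interrupted by a genuine clique $k$-removal drops below what that clique interrupts in $D$ (which is precisely the $k-1$ versus $\ge k$ distinction the reduction must preserve). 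Getting this balance right, together with realizing the entrance structure with polynomial capacities, is where the real work lies; the rest is routine flow bookkeeping.
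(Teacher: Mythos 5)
Your proposal is a blueprint rather than a proof: it names the obligations a reduction must discharge (keep the budget at exactly $k$, encode $H$ only via pairwise relations, calibrate the gadgets so that a committed flow survives all non-clique $k$-removals while clique $k$-removals behave like they do on $D$), and then explicitly defers the one thing that is actually hard, namely the gadget design itself. The paper spends Sections \ref{sec:mrfr-to-mrfm}--\ref{sec:mrfm-to-mrf} plus two appendices on exactly that ``where the real work lies'' part, and it does so in two separate reductions rather than one, which is not incidental.

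The missing piece is not a routine detail. In plain {\mrf} the planner is free to route any $s'$-$t'$-flow, so nothing intrinsic forces the gadget-internal and cross-link flow to be where you need it; the whole difficulty is to \emph{compel} the planner to carry a specific auxiliary flow through the gadgets (so that an incompatible pair of arcs necessarily shares a lot of flow and is a bad choice for the interdictor) while still leaving room for a copy of the original $\theta$-flow. The paper solves this in two stages: first it moves to a multicommodity model where demand constraints ($\sum_{P\in\paths_i}x_P=d_i$) literally force the auxiliary flow (\cref{thm:mrfr-to-mrfm}, with a uniqueness lemma showing every feasible multicommodity flow has the same restriction to the auxiliary commodities), and only then does it simulate those demand constraints inside a single-commodity {\mrf} instance via a ``wrapper'' with dedicated sub-sources $s'_i$, bundles $\bar A$ and $A_i$, and carefully chosen capacities that make any near-optimal single-commodity flow saturate the wrapper in exactly the intended way (\cref{thm:mrfm-to-mrf}). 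Your plan collapses both stages into one step and asserts the resulting gadgets can be ``calibrated''; that is precisely the claim that needs a construction.

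Two quantitative points in your sketch are also off. First, your threshold $L=\theta-(k-1)$ presupposes that a clique removal in $D'$ interrupts \emph{exactly} the same amount as the corresponding clique interrupts in $D$. In the paper's reduction this is false: the gadget arcs carry a large auxiliary load (on the order of $M-1$ per heavy arc, with $M$ polynomial), so a clique $k$-removal interrupts about $k(M-1)+\lambda(x,S)$ and the threshold is $\Delta-(kM-1)$, where $\Delta$ is the total (much larger than $\theta$) flow value. That scaling is not a cosmetic choice; it is what makes the $k-1$ vs.\ $\ge k$ gap of the original instance separate the YES and NO cases after the auxiliary load is added. Any single-step gadget that carries auxiliary flow will face the same arithmetic, so a threshold of the form $\theta-(k-1)$ is not achievable. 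Second, the claim that non-clique removals are ``dominated by clique scenarios'' because the planner can route around them is not meaningful in {\mrf}: the planner commits before the attack. What actually happens in the paper is that, because the committed auxiliary flow forces incompatible arcs to share at least one unit on a common path, a non-clique $k$-removal double-hits that path and its interrupted amount is provably at most $kM-2 < kM-1$ regardless of the $x$-part. That argument depends on the forced auxiliary flow, which you have not constructed.

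In short, your plan is consistent with the strategy the paper uses, but it identifies the problem rather than solving it, relies on a threshold that cannot be right once auxiliary flow is introduced, and underestimates the difficulty of forcing the planner's behavior in plain {\mrf} -- the very obstacle that led the paper to introduce the multicommodity intermediate problem in the first place.
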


Our proof of \cref{thm:main-reduction} consists of two major steps.
First, we show that we can model the compatibility graph in {\mrfrx} by introducing a multicommodity version of {\mrf} (called {\mrfmx}), where the commodities are used to ensure that incompatible arcs from the {\mrfrx} instance necessarily share a significant portion of flow on a common path, making it unattractive for the interdictor to simultaneously interdict any pair of such incompatible arcs.
We discuss this reduction from {\mrfrx} to {\mrfmx} in \cref{sec:mrfr-to-mrfm}.

Second, we show that the multicommodity problem {\mrfmx} can be reduced to the single-commodity problem {\mrfdec} by carefully building a ``wrapper'' around the multicommodity network that ensures that any optimal single-commodity flow has to respect the source-sink pairings of the commodities in order to keep the loss at worst-case interdiction within a certain bound. This reduction is discussed in \cref{sec:mrfm-to-mrf}.

Note that the reduction in \cref{thm:main-reduction} is robust in the sense that it preserves the value of $k$ as well as integrality of flows.
In order to establish the hardness results for {\mrf} discussed earlier, it thus suffices to show the same hardness results for {\mrfrx}.
Accordingly, we provide the following three reductions.

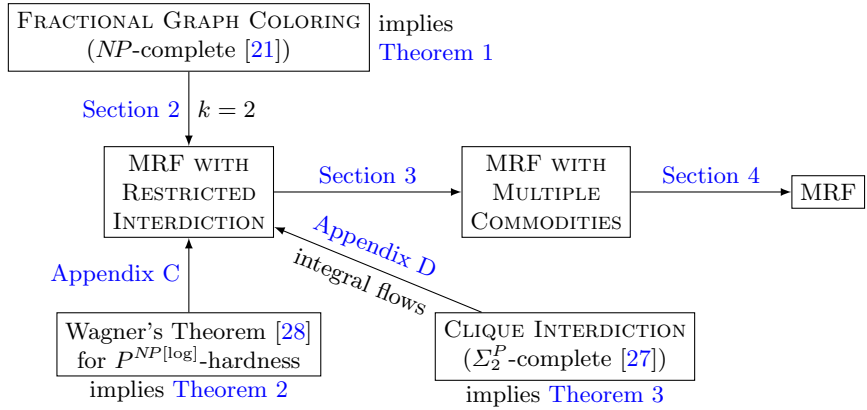
\begin{figure}[t]
    \centering

    \begin{tikzpicture}

        % graph G
        
        \path (0, 0)
            node[draw, align=center] (MRFR) {\textsc{{\mrf} with}\\ \textsc{Restricted}\\ \textsc{Interdiction}}
            ++(4.75, 0) node[draw, align=center] (MRFM) {\textsc{{\mrf} with}\\ \textsc{Multiple}\\ \textsc{Commodities}}
            ++(3.75, 0) node[draw, align=center] (MRF) {\mrf};

        \path (MRFR)
            +(0, 1.6) node[draw, align=center, anchor=south] (FracCol) {\textsc{Fractional Graph Coloring}\\ ($\NP$-complete~\citep{lund1994hardness})}
            +(0, -1.6) node[draw, align=center, anchor=north] (Wagner) {Wagner's Theorem~\citep{wagner1987more}\\ for $P^{\NP[\log]}$-hardness}
            +(5, -1.6) node[draw, align=center, anchor=north] (CliqueInterdiction) {\textsc{Clique Interdiction}\\
            ($\Sigma_2^P$-complete~\citep{rutenburg1994propositional})};

        \path (FracCol.east) ++(0, 0) node[anchor=west, align=left] {implies\\ \cref{thm:mrf-2-np}};
        
        \path (Wagner.south) ++(0, -0.2) node {implies \cref{thm:mrf-pnp}};

        \path (CliqueInterdiction.south) ++(0, -0.2) node {implies \cref{thm:mrf-sigma}};
        
        \draw[-latex] (MRFR) edge node[above, align=center] {\cref{sec:mrfr-to-mrfm}} (MRFM);

        \draw[-latex] (MRFM) edge node[above, align=center] {\cref{sec:mrfm-to-mrf}} (MRF);

        \draw[-latex] (CliqueInterdiction) edge 
        node[above, sloped, pos=0.55] {\cref{app:mrfr-sigma}}
        node[below, sloped, pos=0.55] {integral flows}
        (MRFR);

        \draw[-latex] (FracCol) edge 
        node[left] {\cref{sec:mrfr-np}} 
        node[right] {$k=2$} 
        (MRFR);

        \draw[-latex] (Wagner) edge node[left, align=center] 
        {\cref{app:mrfr-pnp}} (MRFR);
        
    \end{tikzpicture}

    \vspace*{-0.3cm}
    
    \caption{Overview of the reductions presented in this paper.}
    \vspace*{-0.3cm}
    \label{fig:reducions-overview}
\end{figure}

\begin{enumerate}
    \item To prove \cref{thm:mrf-2-np}, we provide a reduction from \textsc{Fractional Graph Coloring}, an $\NP$-complete LP relaxation of the classic \textsc{Graph Coloring} problem, to {\mrfrx} with $k = 2$.
    Here, we crucially use the flexibility won by the introduction of the compatibility graph to force the planner to send flow only along paths corresponding to feasible color classes.
    By the reduction from \textsc{Fractional Graph Coloring}, whose hardness stems from the PCP theorem~\citep{lund1994hardness}, our proof avoids the issue of the inapplicability of the equivalence of optimization and separation that has been a major obstacle in previous attempts to resolving the complexity of {\mrf} for constant $k$.    
    We discuss this reduction in \cref{sec:mrfr-np}.
    
    \item To prove \cref{thm:mrf-pnp}, we make use of \citeauthor{wagner1987more}'s Theorem~\citep{wagner1987more}, a meta theorem that establishes $P^{\NP[\log]}$-hardness by reducing from the question whether certain monotone sequences of instances of an $\NP$-complete problem contain an odd or even number of YES instances.
    To obtain our result, we reduce the even instances in the sequence to {\mrfrx} (using the preceding result), while we encode  the odd instances as \textsc{Clique} problems in the compatibility graph.
    In \cref{app:mrfr-pnp} we discuss how this can be done in such a way that a single  {\mrfrx} instance suffices to decide Wagner's parity problem.

    \item To prove \cref{thm:mrf-sigma}, we reduce the $\Sigma_2^P$-complete \textsc{Clique Interdiction} problem, which asks for a small subset of vertices to be removed from a graph $G$ so as to reduce its clique number, to \textsc{Integral \mrfrx}.
    The flexibility of {\mrfrx} allows us to encode the cliques of $G$ directly into the compatibility graph, and a careful construction forces the planner to send flow along a path that intersects with every sufficiently large clique in the compatibility graph.
    See \cref{app:mrfr-sigma} for details.
\end{enumerate}
A schematic overview of all reductions in this paper is given in \cref{fig:reducions-overview}.

\subsection{Notation}
\label{sec:preliminaries}

We introduce some notation that we will use throughout the paper. 
For $n \in \mathbb{N}$, we let $[n] := \{1, \dots, n\}$.
For a digraph $D = (V, A)$, we denote the set of arcs leaving/entering~$U \subseteq V$ by $\delta_D^+(U) := \{(v, w) \in A \st v \in U, w \in V \setminus U\}$ and $\delta_D^-(U) := \{(v, w) \in A \st v \in V \setminus U, w \in U\}$.
Finally, for any flow~$x \in X$, we use
\begin{align*}
    \textstyle x[a] := \sum_{P \in \paths : a \in P} x_P \quad\text{and}\quad \lambda(x, S) := \sum_{P \in \paths : P \cap S \neq \emptyset} x_P
\end{align*}
to denote the flow on arc $a \in A$ and the flow lost at failure of $S \subseteq A$.

\section{$\NP$-completeness of {\mrfrx} for $k = 2$}
\label{sec:mrfr-np}

In this section, we establish the following theorem, which together with \cref{thm:main-reduction} implies \cref{thm:mrf-2-np}. Its proof is based on a reduction from \textsc{Fractional Graph Coloring}, which introduce before describing the reduction.

\begin{theorem}\label{thm:mrfr-2-np}
    Let $k \in \mathbb{N}$ with $k > 1$. Then {\mrfrx} restricted to instances with interdiction budget $k$ is $\NP$-complete.
\end{theorem}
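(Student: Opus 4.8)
The plan is to show membership in $\NP$ directly and then to obtain $\NP$-hardness by a reduction from \textsc{Fractional Graph Coloring}. For membership, fix the constant $k$; a YES instance admits a flow $x\ge 0$ with $\sum_{P\in\paths}x_P=\theta$ meeting the $|A|$ capacity constraints $x[a]\le 1$ and the $|\mathcal{S}_{H,k}|=O(|A|^{k})$ scenario constraints $\lambda(x,S)\le k-1$. This is feasibility of a linear program with $O(|A|^{k})$ inequalities besides nonnegativity, so a nonempty solution set has a vertex with at most $O(|A|^{k})$ positive path variables and polynomial encoding length; presented as a list of paths with their flow values, such a point is a polynomial certificate (one checks that each listed path is an $s$-$t$-path, forms the arc flows, and evaluates the polynomially many scenario constraints by enumeration). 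This is the same reasoning that already puts {\mrfdec} for constant $k$ in $\NP$.

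For hardness I would reduce from \textsc{Fractional Graph Coloring}: decide, given a graph $G=(V_G,E_G)$ and a rational $q$, whether $\fraccol{G}\le q$; this is $\NP$-complete~\citep{lund1994hardness}. I describe the reduction for $k=2$. Since $\fraccol{G}\le|V_G|$ always, we may assume that $q$ is a polynomially bounded positive integer with $q\ge 2$ and that $G$ has at least one edge (otherwise the instance is trivial; rational $q$ is handled by scaling). The DAG $D$, with $u\equiv 1$, consists of two consecutive phases of parallel-arc bundles placed in series. In the \emph{edge phase} the edges $e_1,\dots,e_m\in E_G$ are processed in turn; at $e_j=v_jw_j$ a path routes through the arc $a^{v_j}_{e_j}$ (``include $v_j$''), the arc $a^{w_j}_{e_j}$ (``include $w_j$''), or one of $q-2$ parallel ``neither'' arcs. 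In the \emph{claim phase} the vertices are processed; at $v$ a path routes through the arc $\alpha_v$ (``claim $v$'') or one of $q-1$ parallel ``don't-claim'' arcs. With demand $\theta:=q$, every bundle layer has total capacity $q$, so any flow of value $\theta$ saturates every layer, which forces $x[\alpha_v]=1$ for all $v$ and $x[a^{v_j}_{e_j}]=x[a^{w_j}_{e_j}]=1$ for all edges. Finally, $H=(A,E_H)$ with $E_H=\{\,\{a^v_{e_j},\alpha_v\}\st e_j\in E_G,\ v\in e_j\,\}$, which is a disjoint union of stars, hence triangle-free.

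The reduction works because $a^v_{e_j}$ and $\alpha_v$ are saturated, so the constraint $\lambda(x,\{a^v_{e_j},\alpha_v\})\le 1$ is equivalent to the property that every positive-flow path through $\alpha_v$ also uses $a^v_{e_j}$; thus a path claiming $v$ is forced to route ``include $v$'' at every edge incident to $v$. Consequently the set of vertices claimed by any positive-flow path is independent in $G$ (claiming two adjacent vertices $v,w$ would require routing both ``include $v$'' and ``include $w$'' at the single bundle layer of the edge $vw$), and aggregating a feasible flow by claimed independent set yields a fractional coloring of $G$ of value $\theta=q$ whose coverage at each $v$ equals $x[\alpha_v]=1$; hence $\fraccol{G}\le q$. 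Conversely, a coloring of value $\le q$ may be assumed --- by shifting weight from each class $I$ to $I\setminus\{v\}$ at over-covered vertices $v$, and then padding the empty class --- to cover every vertex exactly once and to have value exactly $q$; routing each class along the matching $D$-path is feasible, because no class contains both endpoints of an edge, so $a^{v_j}_{e_j}$ and $a^{w_j}_{e_j}$ together carry exactly $2$ units and the remaining $q-2$ fit into the $q-2$ ``neither'' arcs (and analogously $q-1$ into the ``don't-claim'' arcs), and then every available scenario loses at most one unit. For $k>2$ I would reduce from the case $k=2$ by appending $k-2$ parallel dummy $s$-$t$-arcs, raising the demand by $k-2$, and making the dummies pairwise adjacent in $H$ and adjacent to every other arc; every clique of size at most $k$ then splits into $j\le k-2$ dummy arcs plus a clique of size at most $2$ in the original $H$, so the loss bound $k-1$ translates exactly into the bound $1$ for the $k=2$ instance (using triangle-freeness of $H$ and the fact that $D$ has an $s$-$t$-cut of capacity $\theta$).

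The step I expect to be the crux is the joint design of the claim phase and $H$: interdicting $\{a^v_{e_j},\alpha_v\}$ is cheap exactly when these two saturated arcs carry the same flow, which is precisely what encodes the implication that a claimed vertex $v$ must route ``include $v$'' at every incident edge, and hence the independence of every claimed set. Calibrating the bundle sizes and the demand so that the relevant arcs are \emph{forced} to be saturated --- so that this implication actually bites --- and disposing of the rational-scaling and degenerate cases is where the real work lies; this is exactly the kind of considerably more involved instance that~\citep{DisserMatuschke2016} anticipated a hardness proof for constant $k$ would require.
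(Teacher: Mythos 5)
Your proof is correct, and it takes a genuinely different (though closely related) route from the paper's. The paper uses a single-phase DAG with $m = |E|$ bundle layers of size $\ell$ and couples arcs representing the same vertex \emph{directly}, so its compatibility graph is a disjoint union of cliques $A_v$. You instead insert an explicit claim phase, one bundle layer per vertex, and couple each $a^v_{e_j}$ to the hub arc $\alpha_v$, making $H$ a disjoint union of stars and hence triangle-free. Both encode the same consistency mechanism --- two saturated arcs joined in $H$ must carry identical flow under a loss budget of $1$, so any flow-carrying path that commits to $v$ at one incident edge commits to $v$ everywhere --- and both aggregations by ``claimed vertex set'' produce the desired fractional coloring.

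The genuine dividend of your variant is the $k>2$ case. The paper's Lemma~\ref{lem:coloring-path-all-vertices} is stated and proved only for $k=2$, and the forward direction of its argument (``rearranging yields $\sum_{P : a, a' \in P} x_P = 1$'') silently relies on the loss budget being exactly $1$; the same computation with budget $k-1$ yields only $\sum_{P : a, a' \in P} x_P \geq 3-k$, which is vacuous for $k \geq 3$. So the construction as written does not directly prove hardness for higher $k$, and the text's assertion that Theorem~\ref{thm:mrfr-2-np} (which is stated for all $k>1$) ``follows from'' that lemma leaves the lift to $k>2$ implicit. Your dummy-arc reduction --- append $k-2$ parallel $s$-$t$ arcs, make them pairwise $H$-adjacent and adjacent to every other arc, raise $\theta$ by $k-2$ --- supplies exactly this missing step, and it is clean precisely because your $H$ is triangle-free: every size-$k$ clique in the augmented $H$ then consists of $j\le k-2$ saturated dummies plus at most two original arcs, so the new loss bound $k-1$ translates verbatim to the old bound $1$ on the $k=2$ instance via the capacity-$\theta$ $s$-$t$-cut. (One can also make the dummy-arc trick work over the paper's clique-shaped $H$ by observing that the forward-constructed flow has loss exactly $1$ on any clique of any size, but triangle-freeness avoids that extra case analysis.) Your $\NP$-membership argument via a polynomially supported vertex of the scenario LP is the standard one and is fine.

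Minor remarks: the source problem in the paper is already integer-valued ($\ell \in \mathbb{N}$), so the scaling remark for rational $q$ is unnecessary, and your edge-phase assumes $q \ge 2$ so that the ``neither'' sub-bundle is nonempty --- a harmless assumption you already flag. Overall the proposal is sound and somewhat more complete on the constant-$k$ lift than the paper's stated proof.
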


\subsubsection*{\textsc{Fractional graph coloring.}}
The \emph{fractional chromatic number} of an undirected graph $G = (V, E)$ is given by
\begin{alignat*}{3}
    \chi_{\text{f}}(G) := \min \quad && \textstyle \sum_{I \in \mathcal{I}(G)} y_I \quad\ \   \\
    \text{s.t.} \quad && \textstyle \sum_{I \in \mathcal{I}(G) : v \in I} y_I & \; = \; 1 && \quad \forall\; v \in V \\
    && y & \; \geq \; 0
\end{alignat*}
where $\mathcal{I}(G) \subseteq 2^V \setminus \{\emptyset\}$ is the set of independent sets in $G$. 
It is well-known that~$\chi_{\text{f}}(G)$ logarithmically approximates the chromatic number of $G$, from which it inherits $|V|^{-\delta}$-hardness of approximation for some constant $\delta > 0$~\citep{lund1994hardness}.
For our purposes the following consequence of this hardness suffices:
\begin{theorem}[{\citep[Theorem~2.9]{lund1994hardness}}]\label{thm:fractional-coloring-hardness}
    The following problem is $\NP$-complete: Given a graph $G$ and an integer $\ell \in \mathbb{N}$, decide whether $\fraccol{G} \leq \ell$.
\end{theorem}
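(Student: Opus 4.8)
The plan is to prove membership in $\NP$ and $\NP$-hardness separately, so that together they give $\NP$-completeness of the decision problem ``$\fraccol{G} \leq \ell$''.

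\textbf{Membership in $\NP$.} The linear program defining $\fraccol{G}$ has one equality constraint per vertex and all coefficients in $\{0,1\}$, so its data has bit-length polynomial in $|V|$. It is feasible (any proper coloring yields a $0$/$1$ solution) and its feasible region is bounded (if $y_I > 1$ for some nonempty independent set $I$, then any $v \in I$ violates its constraint), hence it has an optimal \emph{basic} feasible solution. Such a solution has support of size at most $|V|$, and by Cramer's rule its entries are quotients of determinants of $0$/$1$ submatrices, hence of polynomial bit-length by Hadamard's bound. A YES-certificate therefore consists of at most $|V|$ independent sets $I_1,\dots,I_m \in \mathcal{I}(G)$ together with rational weights $y_1,\dots,y_m \geq 0$; a verifier checks in polynomial time that each $I_j$ is independent, that $\sum_{j\,:\,v \in I_j} y_j = 1$ for every $v \in V$, and that $\sum_j y_j \leq \ell$. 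Such a certificate exists if and only if $\fraccol{G} \leq \ell$.

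\textbf{$\NP$-hardness.} I would reduce from the gap version of graph coloring that underlies the $\NP$-hardness of approximating the chromatic number~\citep{lund1994hardness}: there is a constant $\delta > 0$ and a polynomial-time computable map sending a \textsc{Sat} instance to a graph $G$ on $n$ vertices together with a positive integer $g$ such that $\chi(G) \leq g$ if the formula is satisfiable and $\chi(G) > g \cdot n^{\delta}$ otherwise. The bridge to the fractional chromatic number is the classical sandwich inequality $\fraccol{G} \leq \chi(G) \leq (1 + \ln |V|)\,\fraccol{G}$, whose right-hand bound is the integrality gap of the set-cover LP applied to covering $V$ by independent sets. The reduction outputs the pair $(G, \ell)$ with $\ell := g$. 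If the formula is satisfiable, then $\fraccol{G} \leq \chi(G) \leq g = \ell$. If it is unsatisfiable, then $\fraccol{G} \geq \chi(G)/(1 + \ln n) > g\, n^{\delta}/(1 + \ln n) > g = \ell$, where the last step is valid once $n$ is large enough that $n^{\delta} > 1 + \ln n$; this may be assumed throughout by padding $G$ with isolated vertices. Hence the formula is satisfiable iff $\fraccol{G} \leq \ell$, completing the reduction.

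\textbf{Main obstacle.} The crux is securing the chromatic-number inapproximability in exactly the form used above: a polynomial-time computable pair $(G, g)$ with $g$ integral and with a \emph{super-logarithmic} (here polynomial, $n^{\delta}$) separation between the satisfiable and unsatisfiable chromatic numbers. A merely constant gap would be destroyed by the $(1 + \ln |V|)$ factor in the sandwich inequality, so the PCP-based gap amplification is essential; once it is in hand, the reduction itself is a one-line substitution $\ell := g$. The remaining points are routine: $g$ may be taken to be a positive integer since it upper-bounds the integer-valued $\chi(G)$ in the YES case, and the degenerate regimes $\ell \in \{0,1\}$, where ``$\fraccol{G} \leq \ell$'' is decidable directly, do not interfere with the reduction.
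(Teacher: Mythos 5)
The paper never proves this statement itself: it is quoted directly from Lund and Yannakakis \citep[Theorem~2.9]{lund1994hardness}, so the only "proof" in the paper is the citation. Your reconstruction follows what is essentially the original derivation of that cited result: membership in $\NP$ via an optimal \emph{basic} solution of the covering LP (support of size at most $|V|$, entries of polynomial bit-length by Cramer's rule and Hadamard's bound), and $\NP$-hardness by combining the $|V|^{\delta}$-inapproximability of the chromatic number with the sandwich inequality $\fraccol{G} \leq \chi(G) \leq (1+\ln |V|)\,\fraccol{G}$. Both halves are sound in outline, and this is exactly the route by which the source obtains the theorem, so your proposal matches the intended argument rather than diverging from it.

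One detail in your hardness step is stated backwards: padding $G$ with isolated vertices does not secure the condition $n^{\delta} > 1+\ln n$. Padding leaves $\chi(G)$ and $\fraccol{G}$ unchanged while increasing the vertex count to $n' > n$, and the gap guarantee $\chi(G) > g\,n^{\delta}$ remains in terms of the original $n$; after padding you would need $n^{\delta} > 1+\ln n'$, which is a \emph{stronger} requirement, so padding only hurts. The easy and standard fix is to note that $n^{\delta} > 1+\ln n$ fails only for $n$ below a constant threshold depending on $\delta$; the finitely many instance sizes below that threshold can be decided outright and mapped to fixed trivial YES/NO instances inside the reduction, which remains polynomial-time. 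With that substitution your argument is complete.
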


\subsubsection{Constructing an {\mrfrx} instance.} 
Let~$G = (V, E)$ be an undirected graph and let $\ell \in \mathbb{N}$ (w.l.o.g.\ $\ell \geq 2$). 
We construct a digraph~$D = (V', A)$ with source $s$ and sink $t$ and a compatibility graph $H$.
Let $e_1, \dots, e_m$, with $m := |E|$, 
be an arbitrary ordering of the edges of $G$.
We define $V' := \{z_1, \dots, z_{m+1}\}$ and let $s := z_1$ and~\mbox{$t := z_{m+1}$}.
The arc set $A$ of $D$ is comprised by bundles of parallel arcs $A_i$ for each $i \in [m]$, where bundle $A_i$ contains $\ell$ arcs from $z_i$ to $z_{i+1}$.
For each bundle~$A_i$, representing edge $e_i = \{v, w\}$ in $G$, we fix two arbitrary but distinct arcs $a_{iv}$ and $a_{iw}$.
We further let $E_H := \big\{\{a_{iv}, a_{jv}\} \st i, j \in [m], i \neq j, v \in e_i \cap e_j\big\}$ be the edge set of the compatibility graph $H = (A, E_H)$, i.e.,~$H$ contains an edge between any two arcs representing the same vertex of~$G$. 
See \cref{fig:fractional-coloring-reduction} for a depiction of the construction.

\cref{thm:mrfr-2-np} follows from the following lemma, which shows that the {\mrfrx} instance with $k=2$ and $\theta = \ell$ is a YES instance if and only if $\chi_{\text{f}}(G) \leq \ell$.

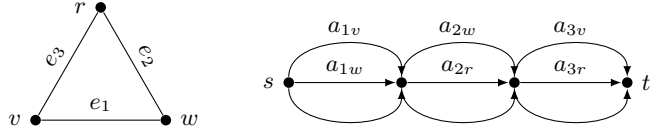
\begin{figure}[t]
    \centering

    \begin{tikzpicture}

        % graph G
        
        \node[node, label=left:{$v$}] (v1) at (210:1) {};
        \node[node, label=right:{$w$}] (v2) at (330:1) {};
        \node[node, label=left:{$r$}] (v3) at (90:1) {};
        
        \draw (v1) -- node[above] {$e_1$} (v2);
        \draw (v2) -- node[above, sloped] {$e_2$} (v3);
        \draw (v3) -- node[above, sloped] {$e_3$} (v1);

        % digraph D
        
        \path (2.5, 0) node[node, label=left:{$s$}] (ze1) {}
            ++(1.5, 0)  node[node] (ze2) {}
            ++(1.5, 0)  node[node] (ze3) {}
            ++(1.5, 0)  node[node, label=right:{$t$}] (t) {};
        
        \draw[-latex, bend left=90] (ze1) edge node[above, sloped] {$a_{1v}$} (ze2);
        \draw[-latex] (ze1) edge node[above, sloped] {$a_{1w}$} (ze2);
        \draw[-latex, bend right=90] (ze1) edge (ze2);
        
        \draw[-latex, bend left=90] (ze2) edge node[above, sloped] {$a_{2w}$} (ze3);
        \draw[-latex] (ze2) edge node[above, sloped] {$a_{2r}$} (ze3);
        \draw[-latex, bend right=90] (ze2) edge (ze3);
        
        \draw[-latex, bend left=90] (ze3) edge node[above, sloped] {$a_{3v}$} (t);
        \draw[-latex] (ze3) edge node[above, sloped] {$a_{3r}$} (t);
        \draw[-latex, bend right=90] (ze3) edge (t);
    \end{tikzpicture}

    \vspace*{-0.2cm}
    
    \caption{Illustration of the reduction from \textsc{Fractional Graph Coloring} (graph on the left) to {\mrfrx} with $k = 2$ (digraph on the right). The compatibility graph of the constructed {\mrfrx} instance has the edges $\{a_{1v}, a_{3v}\}$, $\{a_{1w}, a_{2w}\}$, and $\{a_{2r}, a_{3r}\}$.}
    \label{fig:fractional-coloring-reduction}

    \vspace*{-0.4cm}
\end{figure}

\begin{lemma}\label{lem:coloring-path-all-vertices}
    There is an $s$-$t$-flow $x$ in $D$ for capacities $u \equiv 1$ with $\sum_{P \in \paths} x_P = \ell$ and $\max_{S \in \mathcal{S}_{H,2}} \lambda(x, S) \leq 1$ if and only if $\fraccol{G} \leq \ell$.
\end{lemma}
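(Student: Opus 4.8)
The plan is to establish a tight correspondence between positive-weight $s$-$t$-flow paths in the constructed instance and independent sets of $G$. Before the two directions, I would record a few structural facts. The digraph $D$ is just the path $z_1, \dots, z_{m+1}$ in which bundle $A_i$ supplies the $\ell$ parallel arcs from $z_i$ to $z_{i+1}$; hence a set of arcs is an $s$-$t$-path iff it uses exactly one arc of each $A_i$, and conversely every family of such paths is a feasible path flow. In particular, if $x \in X$ satisfies $\sum_{P \in \paths} x_P = \ell$, then $\sum_{a \in A_i} x[a] = \sum_{P} x_P = \ell = |A_i|$ for each $i$, so (as $x[a] \le u_a = 1$) necessarily $x[a] = 1$ for \emph{every} arc $a$; I will call such $x$ \emph{saturating}. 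Furthermore, the only cliques of size at most $2$ in $H$ are $\emptyset$, the singletons, and the pairs $\{a_{iv}, a_{jv}\}$ with $i \ne j$ and $v \in e_i \cap e_j$ (all non-special arcs being isolated in $H$), and for such a pair inclusion--exclusion gives $\lambda(x, \{a_{iv}, a_{jv}\}) = x[a_{iv}] + x[a_{jv}] - \sum_{P : \{a_{iv}, a_{jv}\} \subseteq P} x_P$. Finally I may assume w.l.o.g.\ that $G$ has no isolated vertices: such a vertex contributes nothing to $D$ or $H$ and leaves $\fraccol{G}$ unchanged as long as $G$ has an edge, while if $G$ is edgeless the statement is trivial, since then $\fraccol{G} = 1 \le \ell$, $s = t$, and $\mathcal{S}_{H,2} = \{\emptyset\}$.

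For the ``only if'' direction (existence of such a flow implies $\fraccol{G} \le \ell$), suppose $x \in X$ is saturating with $\max_{S \in \mathcal{S}_{H,2}} \lambda(x, S) \le 1$. Evaluating this bound at $S = \{a_{iv}, a_{jv}\}$ yields $\sum_{P : \{a_{iv},a_{jv}\} \subseteq P} x_P \ge x[a_{iv}] + x[a_{jv}] - 1 = 1$, while trivially $\sum_{P : \{a_{iv},a_{jv}\} \subseteq P} x_P \le x[a_{iv}] = 1$; hence equality holds, so no positive-weight path uses $a_{iv}$ without also using $a_{jv}$ (and vice versa). Consequently, for each positive-weight path $P$ the set $I_P := \{v \in V : a_{iv} \in P \text{ for some edge } e_i \ni v\}$ equals $\{v : a_{iv} \in P \text{ for every } e_i \ni v\}$ and is an independent set of $G$, because two adjacent $u, v \in I_P$ would force $P$ to traverse the two distinct arcs $a_{iu}, a_{iv}$ of the bundle corresponding to the edge $\{u,v\}$. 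Setting $y_I := \sum_{P : I_P = I} x_P$ for $I \in \mathcal{I}(G)$, saturation gives $\sum_{I \in \mathcal{I}(G) : v \in I} y_I = \sum_{P : a_{iv} \in P} x_P = x[a_{iv}] = 1$ for every $v$, and $\sum_{I \in \mathcal{I}(G)} y_I \le \sum_{P} x_P = \ell$. Thus $y$ is feasible for the LP defining $\fraccol{G}$ with objective value at most $\ell$, i.e.\ $\fraccol{G} \le \ell$.

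For the ``if'' direction (assuming $\fraccol{G} \le \ell$), fix a rational optimal solution $y$ of that LP, scaled to a common denominator $q$, so that $p_I := q\,y_I \in \mathbb{Z}_{\ge 0}$ satisfies $\sum_{I \ni v} p_I = q$ for every $v$ and $\sum_I p_I = q\cdot\fraccol{G} \le q\ell$. I would then route a flow given by a multiset of $q\ell$ $s$-$t$-paths, each of weight $1/q$: for every independent set $I$ with $p_I > 0$, take $p_I$ paths that in each bundle $A_i$ with $e_i = \{v,w\}$ use $a_{iv}$ if $v \in I$ and $a_{iw}$ if $w \in I$ (at most one case occurs, as $I$ is independent) and a non-special arc of $A_i$ otherwise; add $q\ell - \sum_I p_I \ge 0$ ``padding'' paths using only non-special arcs; and, in each bundle $A_i$, spread the paths that are free at $A_i$ (of which, one checks, there are exactly $(\ell-2)q$, using that no independent set contains both endpoints of an edge) evenly over its $\ell - 2$ non-special arcs, $q$ to each — possible since the per-bundle arc choices are mutually independent. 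Then $\sum_P x_P = q\ell\cdot\tfrac1q = \ell$, and a short count shows every arc carries exactly $q$ of the paths, so $x[a] = 1$ and $x \in X$ is saturating; moreover the paths through $a_{iv}$ are precisely those coming from independent sets containing $v$, which also pass through $a_{jv}$, so $\sum_{P : \{a_{iv},a_{jv}\} \subseteq P} x_P = \sum_{I \ni v} y_I = 1$ and hence $\lambda(x, \{a_{iv}, a_{jv}\}) = 1$. Since every remaining scenario in $\mathcal{S}_{H,2}$ is a singleton (losing $x[a] = 1$) or $\emptyset$, we conclude $\max_{S \in \mathcal{S}_{H,2}} \lambda(x, S) \le 1$.

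I expect the ``if'' direction to be the main obstacle. A fractional colouring only certifies $\fraccol{G} \le \ell$ with \emph{total} weight possibly strictly below $\ell$, whereas the lemma demands a flow of value exactly $\ell$; the slack $\ell - \fraccol{G}$ has to be absorbed by the non-special arcs without exceeding their unit capacities, and the crucial device for that is to split the LP solution into equal weight-$1/q$ sub-paths (rather than routing one path per colour class, which can overload a single non-special arc). The boundary case $\ell = 2$ also warrants an explicit check, but there $G$ having an edge forces $\fraccol{G} = 2$, so every positive-weight colour class meets every edge, no non-special arcs are needed, and the construction degenerates gracefully.
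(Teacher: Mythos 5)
Your proof takes essentially the same approach as the paper's: in the ``only if'' direction you define $y_I$ as the total flow on paths inducing independent set $I$ and use saturation plus the $\lambda$-bound on pairs $\{a_{iv},a_{jv}\}$ to force $\sum_{I\ni v} y_I = 1$; in the ``if'' direction you distribute the fractional colouring over $s$-$t$-paths so that every arc is saturated and the paths through $a_{iv}$ and $a_{jv}$ coincide. The only cosmetic differences are that you scale to an integer-weight multiset of paths and balance the non-special arcs bundle by bundle, whereas the paper averages each class $I$'s weight uniformly over the transversals of its free bundles (same effect), and that you make explicit the benign assumption that $G$ has no isolated vertices, which the paper leaves implicit.
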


\begin{proof}
    We show how to construct a fractional coloring from a given flow $x$ with the properties stated in the lemma. 
    For the reverse direction see \cref{app:mrfr-np}.
    We start by defining
    $A_v := \{a \in A \st a = a_{iv} \text{ for some } i \in [m]\}$ for $v \in V$ and $I_P := \{v \in V \st A_v \subseteq P\}$ for $P \in \paths$.
    Note that $I_P \in \mathcal{I}(G)$ for every $P \in \paths$, as~$P$ cannot contain both parallel arcs $a_{iv}$ and $a_{iw}$ for any edge $e_i = \{v, w\} \in E$.
    
    Define $y_I := \sum_{P \in \paths : I_P = I} x_P$.
    Note that $\sum_{I \in \mathcal{I}(G) : v \in I} y_P  = \sum_{P \in \paths : v \in I_P} x_P$. We show that the latter equals $1$, which implies that $y$ is a fractional coloring of~$G$ and hence $\chi_{\text{f}}(G) \leq \sum_{I \in \mathcal{I}(G)} y_I = \sum_{P \in \paths} x_P = \ell$.

    Let $v \in V$ and fix any $a \in A_v$. 
    Observing that $x$ must saturate all arcs, we obtain $1 \geq \lambda(x, \{a, a'\}) = \sum_{P \in \paths : P \cap \{a, a'\} \neq \emptyset} x_P = 2 - \sum_{P \in \paths : a, a' \in P} x_P$ 
    for all~$a' \in A_v \setminus \{a\}$, where the first inequality follows from $\{a, a'\} \in E_H$ (as~$A_v$ is a clique in $H$).
    Rearranging yields $\sum_{P \in \paths : a, a' \in P} x_P = 1$ for all $a' \in A_v \setminus \{a\}$.
    Thus, if $a \in P$ for some $P \in \paths$ with $x_P > 0$, then $a' \in P$ for all $a' \in A_v$, or in other words: $v \in I_P$.
    Hence $\sum_{P \in \paths : v \in I_P} x_P = x[a] = 1$.
    \qed
\end{proof}

\section{Reducing {\mrfrx} to {\mrfmx}}
\label{sec:mrfr-to-mrfm}

For our reduction from {\mrfrx} to {\mrfdec}, we introduce a multi-commodity variant of {\mrf} ({\mrfm}) as an intermediate problem.
The input to this problem is a digraph $D = (V, A)$ with capacities $u$,  interdiction budget $k \in \mathbb{N}$, and a set of commodities $K$ with source $s_i$, sink $t_i$, and demand $d_i$ for each $i \in K$.
For~$i \in K$, let $\paths_i$ denote the set of $s_i$-$t_i$-paths in $D$ and let $\paths = \bigcup_{i \in K} \paths_i$.
We let 
\begin{align*}
    \textstyle X_K := \{x \in \mathbb{Q}_+^{\paths} \st \sum_{P \in \paths : a \in P} x_P \leq u_a\; \forall\, a \in A,\ \sum_{P \in \paths_i} x_P = d_i\; \forall\, i \in K\}
\end{align*}
denote the set of multi-commodity flows in $D$ for commodities $K$ that send $d_i$ units of flow from each $s_i$ to $t_i$ while respecting arc capacities $u$.
For our purposes it will again be useful to consider a particular decision variant of the problem:
\begin{quote}
\textbf{\mrfmx:} Given an input as described above, decide whether there is $x \in X_K$ with $\max_{S \in \mathcal{S}_k} \sum_{P \in \paths : P \cap S \neq \emptyset} x_P \leq kM - 1$, for $M := \max_{a \in A} u_a$.
\end{quote}

We show how to transform instances of {\mrfrx} to instances of {\mrfmx} with some additional properties that will be useful later when we continue the reduction in \cref{sec:mrfm-to-mrf} to finally obtain an {\mrfdec} instance.

\begin{restatable}{theorem}{restateThmMRFRtoMRFM}\label{thm:mrfr-to-mrfm}
    There is a polynomial transformation from {\mrfrx} to  {\mrfmx} that preserves the value of $k$.
    The same transformation also applies to the integral versions of the problems. 
    Moreover, the resulting {\mrfmx} instance has the following additional properties:
    \begin{enumerate}
    \item Capacities are integral and $M := \max_{a \in A} u_a$ is polynomial in $|A|$.
    \label{prop:capacities-M}
    \item All $i \in K$ fulfill $\sum_{a \in \delta^+(s_i)} u_a = \sum_{a \in \delta^-(t_i)} u_a = d_i$ and 
    $\delta^-(s_i) = \delta^+(t_i) = \emptyset$.\label{prop:commodity-source-sink}
    \item There is a commodity $i_0 \in K$ such that $\sum_{i \in K \setminus \{i_0\}} d_i \leq M - 3$.\label{prop:i0}
    \item For any $x \in X_K$ it holds that     
    $\sum_{P \in \mathcal{P}_i : a \in P} x_P \geq 2$ for all $i \in K \setminus \{i_0\}$ and all $a \in A$ with $u_a \geq M - 2$.\label{prop:commodity-arc}
    \item There is $U \subseteq V$ such that $\{s_i \st i \in K\} \subseteq U$, $\{t_i \st i \in K\} \subseteq V \setminus U$, 
    $\sum_{a \in \delta^+(U)} u_a = \sum_{i \in K} d_i$, and $|\{a \in \delta^+(U) \st u_a = M\}| = k - 1$.\label{prop:cut-M}
\end{enumerate}
\end{restatable}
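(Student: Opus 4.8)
The plan is to construct, from a given {\mrfrx} instance $(D = (V,A), s, t, u \equiv 1, k, H, \theta)$, a multicommodity instance whose \emph{unrestricted} interdictor is nonetheless forced to behave like the clique-restricted interdictor of the original problem. The guiding idea is the one sketched in \cref{sec:intro}: scale all capacities up by a polynomial factor $M$, and for every \emph{non-edge} $\{a, b\}$ of $H$ (a pair the original interdictor is forbidden to pick jointly) introduce a dedicated ``gadget commodity'' together with a handful of new vertices and arcs, arranged so that in \emph{every} feasible multicommodity flow a fixed constant amount of that commodity is routed along a common path traversing both $a$ and $b$. Destroying $a$ and $b$ simultaneously then wastes budget: the flow lost is no more than what is lost by destroying a single one of the two, so the adversary gains nothing from an incompatible pair, exactly as in the restricted model. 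The main commodity $i_0$ carries an $M$-scaled copy of the original single-commodity flow, and a small ``wrapper'' is attached near the sources so that all flow is funneled through a tight cut containing precisely $k - 1$ arcs of maximum capacity $M$; this is what Property~\ref{prop:cut-M} records and what makes the subsequent reduction in \cref{sec:mrfm-to-mrf} possible.

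Concretely, I would subdivide each original arc of $D$ and route the gadget commodity for $\{a, b\}$ through freshly inserted detour arcs, so that---regardless of whether $a$ and $b$ lie on a common path in the DAG $D$---the new network contains an $s_{ab}$-$t_{ab}$-path through the subdivisions of both $a$ and $b$. Placing the terminals so that $\sum_{e \in \delta^+(s_{ab})} u_e = d_{ab}$ with $\delta^-(s_{ab}) = \delta^+(t_{ab}) = \emptyset$, and giving the shared backbone arcs capacity $M$, squeezes any feasible routing of the commodity through these arcs; this yields Properties~\ref{prop:commodity-source-sink} and~\ref{prop:commodity-arc}. Since there are at most $\binom{|A|}{2}$ gadget commodities, choosing each gadget demand a small constant and $M$ a suitable polynomial in $|A|$ makes $\sum_{i \ne i_0} d_i \le M - 3$ (Property~\ref{prop:i0}) while keeping all capacities integral and polynomially bounded (Property~\ref{prop:capacities-M}). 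The whole transformation is built from subdivisions, parallel copies, and integral capacity scaling, so it preserves the value of $k$ and applies verbatim to the integral versions.

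For correctness I would argue both directions. Forward: from a flow $x$ witnessing a YES for {\mrfrx}, build $x'$ by scaling $x$ into a flow of the main commodity and routing each gadget commodity along its designated common path; then check $\lambda(x', S) \le kM - 1$ for every $S \in \mathcal{S}_k$ by a short case distinction on whether the restriction of $S$ to the ``choice'' arcs forms a clique in $H$. If it does, the bound follows from the scaled original guarantee $\lambda(x, \cdot) \le k - 1$ together with the $k - 1$ maximum-capacity wrapper arcs and the fact that the total gadget demand is at most $M - 3$; if it does not, $S$ contains an incompatible pair $\{a, b\}$ whose common gadget path absorbs enough shared flow to push the total strictly below $kM$. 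Reverse: starting from a feasible multicommodity flow $x'$, use Properties~\ref{prop:commodity-source-sink}--\ref{prop:cut-M} (tightness of the cuts at the terminals and across $U$, capacity $M$ on the backbone) to show that $x'$ is forced into its intended shape---the gadget commodities traverse their prescribed common paths, and the main commodity, projected back onto $D$, induces a feasible unit-capacity $s$-$t$-flow $x$ of value $\theta$. It remains to show $\max_{S \in \mathcal{S}_{H,k}} \lambda(x, S) \le k - 1$; otherwise a clique $S^{\star}$ of $H$ with $\lambda(x, S^{\star}) \ge k$ would lift, together with the appropriate wrapper arcs, to a set in $\mathcal{S}_k$ of loss $\ge kM$ in the new instance, contradicting feasibility of $x'$.

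The step I expect to be the main obstacle is the reverse direction, and in particular the claim that a feasible multicommodity flow has \emph{no room to cheat}: one must rule out that the planner circumvents the compatibility restriction by routing a gadget commodity along an unintended path, or by exploiting the newly added detour arcs for the main commodity. This is exactly what dictates the quantitative bookkeeping behind Properties~\ref{prop:i0} and~\ref{prop:commodity-arc}---the gadget demands must be large enough that the forced overlap on the backbone is genuinely at least $2$, yet small enough in total ($\le M - 3$) not to eat into the main commodity's capacity budget---and behind Property~\ref{prop:cut-M}, which pins the number of maximum-capacity cut arcs to exactly $k - 1$. Designing the detour gadget so that it provides the required common path for \emph{every} pair of arcs of $D$ without creating spurious paths or spurious interdiction targets is the other delicate point; once the construction is fixed, the remaining verifications are routine inequality chasing.
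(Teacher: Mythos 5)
Your high-level plan matches the paper's: subdivide each arc of $D$, introduce one gadget commodity per incompatible pair (non-edge of $H$), route each gadget so it forces shared flow on the corresponding pair of subdivided arcs, and wrap the sources so that a tight cut with exactly $k-1$ capacity-$M$ arcs exists (achieved in the paper by preprocessing: splitting the source to make $|\delta^+_D(s)|=\theta$ and planting $k-1$ parallel $s$-$t$ arcs that form a clique in $H$). The paper also orders the arcs topologically so that the shortcut arc $(v^-_a, v^+_{a'})$ is added only for $a\succ a'$, guaranteeing the new arcs create no cycles and no spurious $s$-$t$-paths for the main commodity; you should make this explicit rather than appealing generically to "detour arcs."

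However, there is a genuine gap in your handling of Property~\ref{prop:commodity-arc}. You describe each gadget commodity as routing its demand along "a common path traversing both $a$ and $b$" and then claim this yields Property~\ref{prop:commodity-arc}, with each gadget demand "a small constant." But Property~\ref{prop:commodity-arc} requires that \emph{every} commodity $i\neq i_0$ send at least $2$ units through \emph{every} capacity-$\geq M-2$ arc of the network, not just through the two arcs of its own incompatible pair. If the gadget commodity for $\{a,b\}$ touches only $a$ and $b$, the property fails on every other backbone arc, and the downstream reduction to {\mrfdec} breaks: \cref{lem:multi-to-single} uses Property~\ref{prop:commodity-arc} precisely to bound the marginal loss contributed by an arbitrary high-capacity arc in a scenario that also contains a wrapper arc $(s'_i,s_i)$. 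The paper resolves this by giving the gadget commodity $(a,a')$ demand $d_{a,a'} = 2(|A|-1)$ (linear, not constant) and routing $2$ units along the joint path through $a$ and $a'$ \emph{and} $2$ units through every other subdivided arc $\bar a\in A\setminus\{a,a'\}$; the total gadget demand is then $2(|A|-1)|F|$, and $M$ is chosen as $\max\{2(|A|-1)|F|+3,\,\theta+2\}$ so that Property~\ref{prop:i0} still holds. Without this "route-through-everything" design, your construction does not certify Property~\ref{prop:commodity-arc}.

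A second, related gap is the mechanism behind the reverse direction. You correctly identify "no room to cheat" as the crux, but you only gesture at cut tightness, whereas the paper actually proves a stronger structural statement (\cref{lem:commodities-determined}): the flows of \emph{all} commodities $i\neq i_0$ are \emph{uniquely determined} by the instance, independent of the planner's choices. The proof is an induction along the topological order $\succ$ using flow conservation at the subdivided nodes $v^+_a, v^-_a$ together with the fact that every source's out-capacity and every sink's in-capacity exactly equals the commodity's demand (Property~\ref{prop:commodity-source-sink}). This uniqueness is what makes the "cheating" possibility vanish entirely, and it simultaneously delivers Property~\ref{prop:commodity-arc} as a byproduct. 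Your sketch would need to develop this argument; tightness of the cut $U$ from Property~\ref{prop:cut-M} is needed only for the subsequent reduction to {\mrfdec}, not here.
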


\subsubsection*{Constructing an {\mrfmx} instance.}
To illustrate the main principle behind the reduction, we describe a simplified construction which ignores properties~\ref{prop:capacities-M} to \ref{prop:cut-M}.
The complete proof of \cref{thm:mrfr-to-mrfm}
can be found in~\cref{app:mrfr-to-mrfm}.
Given an instance of {\mrfrx} with DAG $D = (V, A)$, $s, t \in V$, interdiction budget $k$, demand $\theta$, and compatibility graph $H = (A, E_H)$, we construct an instance of {\mrfmx} with the same interdiction budget $k$ and a digraph $D' = (V', A')$ with capacities $u'$ and commodities $K$ as follows.

We first fix a linear order $\succ$ of the arcs of $D$ that is consistent with the DAG structure in the sense that for $a, a' \in A$ with $a \succ a'$ there is no path in $D$ from the tail of $a$ to the head of $a'$.
We define $F := \{(a, a') \st \{a, a'\} \notin E_H, a \succ a'\}$, i.e., $F$ contains all pairs of incompatible arcs, internally ordered by $\succ$. 
To construct $D'$, we subdivide every arc $a = (v, w)$ of $D$ by introducing the nodes $v^+_a$ and $v^-_a$ and replacing $a$ with the path $v$-$v^+_a$-$v^-_a$-$w$. We further introduce the arc $(v^-_a, v^+_{a'})$ for every $(a, a') \in F$.
We set the capacities $u'$ in $D'$ to $M:=|A|$ for the arcs of the form $(v^+_a, v^-_a)$ for $a \in A$, and to $1$ for all other arcs.
The set of commodities is $K := \{0\} \cup F \cup A$, where $s_0 = s$, $t_0 = t$, $d_0 = \theta$, $s_{a,a'} = v^+_a$, $t_{a,a'} = v^-_{a'}$, $d_{a,a'} = 1$, and $s_{a} = v^+_a$, $t_{a} = v^-_{a}$, $d_{a} = |\{a' \in A : \{a, a'\} \in E_H\}|$.
See \cref{fig:mrfx-mrfm-reduction-overview} for a depiction of the construction.

\begin{figure}[t]
    \centering

    \begin{tikzpicture}

        \path (0, 0) node[node, label=left:{$s$}] (s0) {}
                     ++(1.5, 0) node[node, label=above:{$v$}] (v0) {}
                     ++(1.5, 0) node[node, label=right:{$t$}] (t0) {};

        \draw[-latex] (s0) edge[bend left=30] node[above] {$a_1$} (v0);
        \draw[-latex] (s0) edge[bend right=30] node[below] {$a_2$} (v0);
        \draw[-latex] (v0) edge[bend left=30] node[above] {$a_3$} (t0);
        \draw[-latex] (v0) edge[bend right=30] node[below] {$a_4$} (t0);
        
        \path (4, 0) node[node, label=left:{$s$}] (s) {}
            ++(1, 0.5) node[node, label=above:{$v^+_{a_1}$}] (v11) {}
            +(0, -1) node[node, label=below:{$v^+_{a_2}$}] (v21) {}
            ++(1.5, 0) node[node, label=above:{$v^-_{a_1}$}] (v12) {}
            +(0, -1) node[node, label=below:{$v^-_{a_2}\;\;$}] (v22) {}
            ++(1, -0.5) node[node, label=above:{$v$}] (v) {}
            ++(1, 0.5) node[node, label=above:{$v^+_{a_3}$}] (v31) {}
            +(0, -1) node[node, label=below:{$v^+_{a_4}$}] (v41) {}
            ++(1.5, 0) node[node, label=above:{$v^-_{a_3}$}] (v32) {}
            +(0, -1) node[node, label=below:{$\;\;v^-_{a_4}$}] (v42) {}
            ++(1, -0.5) node[node, label=right:{$t$}] (t) {};

        \draw[-latex] (s) edge (v11);
        \draw[-latex] (s) edge (v21);
        \draw[-latex] (v11) edge (v12);
        \draw[-latex] (v21) edge (v22);
        \draw[-latex] (v12) edge (v);
        \draw[-latex] (v22) edge (v);
        
        \draw[-latex] (v42) edge (v31);
        \draw[-latex] (v42) edge[bend left=50] (v21);
        \draw[-latex] (v22) edge (v11);
        
        \draw[-latex] (v) edge (v31);
        \draw[-latex] (v) edge (v41);
        \draw[-latex] (v31) edge (v32);
        \draw[-latex] (v41) edge (v42);
        \draw[-latex] (v32) edge (t);
        \draw[-latex] (v42) edge (t);

        \draw[-latex, dotted] (v21) edge node[pos=0.66, fill=white, inner sep=0] {\scriptsize $1$} (v12);
        \draw[-latex, dotted] (v41) edge node[pos=0.66, fill=white, inner sep=0] {\scriptsize $1$} (v32);
        \draw[-latex, dotted] (v41) edge node[fill=white, midway, inner sep=0] {\scriptsize $1$} (v22);
        
        \draw[-latex, dotted, bend left=30] (v11) edge node[fill=white, midway, inner sep=0] {\scriptsize \raisebox{0.07cm}{$2$}} (v12);
        \draw[-latex, dotted, bend right=30] (v21) edge node[fill=white, midway, inner sep=0] {\scriptsize $1$} (v22);
        \draw[-latex, dotted, bend left=30] (v31) edge node[fill=white, midway, inner sep=0] {\scriptsize \raisebox{0.07cm}{$2$}} (v32);
        \draw[-latex, dotted, bend right=30] (v41) edge node[fill=white, midway, inner sep=0] {\scriptsize $1$} (v42);
        
    \end{tikzpicture}

    \vspace*{-0.6cm}
    
    \caption{From the \mbox{\mrfrx} instance with the digraph on the left and compatibility graph with edges $\{a_1, a_3\}, \{a_1, a_4\}, \{a_2, a_3\}$, we obtain the {\mrfmx} instance depicted on the right.
    The solid arcs represent the network $D'$, the dotted arcs indicate demands, i.e., a dotted arc $(s_i, t_i)$ represents a commodity~$i$ with source $s_i$, sink $t_i$, and demand~$d_i$ as indicated on the label  (we omit commodity $0$).
    Capacities are $1$ except for the four arcs $(v^+_{a_1}, v^-_{a_1}), \dots, (v^+_{a_4}, v^-_{a_4})$, each of which has capacity $M = 4$.}
    \label{fig:mrfx-mrfm-reduction-overview}

    \vspace*{-0.3cm}
\end{figure}
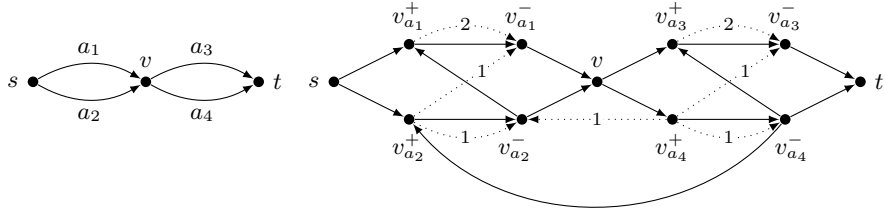

\subsubsection*{Analysis.}
Let $X_K$ denote the set of multicommodity flows in $D'$ for capacities~$u'$ fulfilling demands $d$ for commodities $K$ and let $\mathcal{S}'_k := \{S' \subseteq A' \st |S'| \leq k\}$.
The following lemma completes the proof of \cref{thm:mrfr-to-mrfm} (ignoring properties~\ref{prop:capacities-M} to \ref{prop:cut-M}).

\begin{lemma}
    There is $x' \in X_K$ with $\max_{'S \in \mathcal{S}'_k} \lambda(x', 'S) \leq kM - 1$
    if and only if there is $x \in X$ with $\sum_{P \in \paths} x_P = \theta$ and $\max_{S \in \mathcal{S}_{H,k}} \lambda(x, S) \leq k - 1$.
\end{lemma}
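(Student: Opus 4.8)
The plan is to prove the two implications separately, in both cases exploiting the rigid structure that the subdivision forces on \emph{any} feasible multicommodity flow in $D'$. Write $n := |A| = M$, let $\deg_H(a)$ denote the degree of arc $a$ in $H$, and call $(v^+_a,v^-_a)$ the \emph{middle arc} of $a$ and each arc $(v^-_a,v^+_{a'})$ with $(a,a') \in F$ a \emph{shortcut}; recall that $\succ$ is chosen so that every incompatible pair of arcs is comparable (hence appears exactly once, ordered, in $F$). The first step records facts that hold for \emph{every} $x' \in X_K$, regardless of the interdictor. Since $v^+_a$ has out-degree $1$ and $v^-_a$ in-degree $1$ in $D'$: (a) commodity $a$ routes all $\deg_H(a)$ of its units on the middle arc of $a$; (b) every commodity $(a,a') \in F$ routes its unit along a path whose first arc is the middle arc of $a$ and whose last arc is the middle arc of $a'$; and (c) a shortcut-free $v^+_a$-$v^-_{a'}$-path in $D'$ would project to a $\mathrm{tail}(a)$-$\mathrm{head}(a')$-path in $D$, which is impossible for $(a,a') \in F$ since $a \succ a'$, so each commodity in $F$ consumes at least one unit of shortcut capacity. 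As there are exactly $|F|$ shortcuts, each of capacity $1$, summing (c) over $F$ forces all shortcut capacity to be used by $F$-commodities; hence \emph{commodity $0$ uses no shortcut}. Consequently the commodity-$0$ part of $x'$ lives on the subdivided copy of $D$ and projects to an $s$-$t$-flow $x$ in $D$ with $\sum_{P \in \paths} x_P = \theta$; and since the middle arc of $a$ then carries exactly $(n-1) + x[a]$ units while having capacity $M = n$, we get $x[a] \le 1$, i.e.\ $x \in X$. Conversely, any $x \in X$ of value $\theta$ lifts, via the subdivision and the canonical routings of (a) and (b), to a feasible $x' \in X_K$.

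For the ``if'' direction, take $x \in X$ with $\sum_{P \in \paths} x_P = \theta$ and $\max_{S \in \mathcal{S}_{H,k}} \lambda(x,S) \le k-1$, let $x'$ be its canonical lift, and fix $S' \in \mathcal{S}'_k$. From $\lambda(x',S') \le \sum_{a' \in S'} x'[a']$ and the per-arc values above, if $S'$ contains at most $k-1$ middle arcs then $\lambda(x',S') \le (k-1)n + 1 \le kn - 1$ (we may assume $n \ge 2$; $n = 1$ is trivial). So it remains to treat $S' = \{(v^+_a,v^-_a) : a \in Q\}$ with $|Q| = k$. Counting which flow paths meet $S'$ — commodity $0$ loses $\lambda(x,Q)$; each $a \in Q$ loses its $\deg_H(a)$ units; each $(b,b') \in F$ with $\{b,b'\} \cap Q \neq \emptyset$ loses its unit — gives, after simplification, $\lambda(x',S') = \lambda(x,Q) + k(n-1) - m_Q$, where $m_Q$ is the number of incompatible pairs inside $Q$; the correction $-m_Q$ reflects that each such pair carries a commodity counted twice by $S'$, costing the interdictor a unit of budget. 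If $Q$ is a clique then $m_Q = 0$ and $\lambda(x,Q) \le k-1$; otherwise $m_Q \ge 1$ and $\lambda(x,Q) \le |Q| = k$ because $D$ has unit capacities. In both cases $\lambda(x',S') \le k(n-1) + (k-1) = kM - 1$.

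For the ``only if'' direction, let $x' \in X_K$ satisfy $\max_{S' \in \mathcal{S}'_k} \lambda(x',S') \le kM - 1$ and let $x$ be its commodity-$0$ projection, an $s$-$t$-flow of value $\theta$ in $D$ by the first step. For a clique $S$ of $H$ with $|S| < k$ the required bound $\lambda(x,S) \le |S| \le k-1$ is immediate from the unit capacities. For $|S| = k$, apply the interdiction $S' := \{(v^+_a,v^-_a) : a \in S\} \in \mathcal{S}'_k$: the same accounting as above, now needing only a lower bound and with $m_S = 0$ since $S$ is a clique, yields $\lambda(x',S') \ge \lambda(x,S) + k(n-1)$, whence $\lambda(x,S) \le kM - 1 - k(n-1) = k-1$. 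Therefore $\max_{S \in \mathcal{S}_{H,k}} \lambda(x,S) \le k-1$, which completes the equivalence.

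I expect the main difficulty to be the structural step in the first paragraph: showing that feasibility alone pins down the routings of commodities $a$ and $(a,a')$ and, via the shortcut-capacity count together with the ordering $\succ$, forces commodity $0$ to avoid shortcuts so that it projects to a genuine flow in $D$. The remaining interception count is elementary but unforgiving, since the capacity $M = |A|$ and the threshold $kM - 1$ are calibrated to make the governing inequalities tight; one must track carefully how incompatible arc pairs become shared flow (the ``wasted-budget'' mechanism captured by $m_Q$) and how the unit capacities of $D$ dispatch cliques of size below $k$ for free.
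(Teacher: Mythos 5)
Your proof is correct and follows the same overall strategy as the paper's. The forward direction matches closely: you build the canonical lift, note that only sets $S'$ consisting of $k$ middle arcs can threaten the bound, and trade $\lambda(x,Q)$ against the wasted budget $m_Q$ from incompatible pairs inside $Q$ --- this is exactly the paper's observation that an incompatible pair in $Q$ forces $S'$ to intersect a commodity path twice. Where you genuinely improve on the exposition is the reverse direction, which the paper merely sketches by asserting that \emph{every} $x'\in X_K$ decomposes as $\phi(x)+\hat x$. You instead isolate the two weaker structural facts that actually matter (commodity $0$ is shortcut-free because the shortcut-capacity count is tight, and each commodity $(b,b')$ necessarily crosses the middle arcs of $b$ and $b'$) and observe that these suffice for the \emph{lower} bound $\lambda(x',S')\ge\lambda(x,S)+k(M-1)$ when $S$ is a $k$-clique of $H$, from which $\lambda(x,S)\le k-1$ follows immediately. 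This is cleaner than the paper's sketch, since it avoids having to show the $F$-commodities route canonically (a claim that is not obviously forced for the simplified construction; the paper only proves an analogous uniqueness statement in the appendix, for the more elaborate construction with dedicated per-commodity sources and sinks). One small wording issue: ``the middle arc of $a$ then carries exactly $(n-1)+x[a]$ units'' should be ``at least'' --- an $F$-commodity might, in principle, traverse additional middle arcs --- but ``at least'' is all you use to deduce $x[a]\le 1$, so this does not affect the argument.
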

\begin{proof}
    Let $x \in X$ with $\sum_{P \in \paths} x_P = \theta$ and $\max_{S \in \mathcal{S}_{H,k}} \lambda(x, S) \leq k - 1$.
    Let $\phi(x)$ be the $s$-$t$-flow in $D'$ obtained from $x$ by replacing each arc~\mbox{$a = (v, w) \in A$} with its corresponding subpath $v$-$v^+_a$-$v^-_a$-$w$.
    Moreover, let $\hat{x}$ be the flow on commodities in $K \setminus \{0\}$ that routes the entire demand of commodity $a \in A$ along the one-arc path~$(v^+_a, v^-_a)$ and routes the entire demand of commodity $(a, a') \in F$ via the path~$v^+_{a}$-$v^-_{a}$-$v^+_{a'}$-$v^-_{a'}$. 
    Note that $\hat{x}[(v^+_a, v^-_a)] = M - 1$ for all $a \in A$ and that~\mbox{$\hat{x}[(v^-_a, v^+_{a'})] = 1$} for all $(a, a') \in F$.
    In particular, $\hat{x}$ leaves spare capacity exactly $1$ on each arc corresponding to the subdivision of $D$, while saturating all other arcs.
    Thus, $x' := \phi(x) + \hat{x} \in X_K$ is a feasible multicommodity flow (note that $\phi(x)[a] \leq 1$ for all $a \in A$ and $\phi(x)$ sends $\theta$ units from $s_0$ to $t_0$).
    
    Now assume by contradiction that there is $S' \in \mathcal{S}'_k$ with $\lambda(x', S') > kM - 1$.
    Note that $S'$ must consist of $k$ arcs of capacity $M$. Hence $S' = \{(v^+_a, v^-_a) \st a \in S\}$ for some $S \subseteq A$ with $|S| = k$.
    Moreover, if there is $(a, a') \in F$ with $a, a' \in S$, then $S'$ intersects the path $v^+_{a}$-$v^-_{a}$-$v^+_{a'}$-$v^-_{a'}$ of commodity $(a, a')$ twice, which would imply $\lambda(x', S') \leq kM - 1$.
    Thus, $\{a, a'\} \in E_H$ for all $a, a' \in S$ with~$a \neq a'$, which implies~$S \in \mathcal{S}_{H,k}$ and therefore $\lambda(x, S) \leq k - 1$.
    We obtain the contradiction $\lambda(x', S') = \lambda(\hat{x}, S') + \lambda(\phi(x), S') = k(M-1) + \lambda(x, S) \leq kM - 1$.

    Conversely, one can show that in fact any $x' \in X_K$ is of the form~\mbox{$\phi(x) + \hat{x}$} for some $x \in X$ with $\sum_{P \in \paths} x_P = \theta$ and that any $S \in \mathcal{S}_{H,k}$ with~$\lambda(x, S) > k - 1$ induces a set $S' \in \mathcal{S}'_k$ with~$\lambda(S', x') > kM - 1$. \qed
\end{proof}    

\section{Reducing {\mrfmx} to {\mrf}}
\label{sec:mrfm-to-mrf}

We prove the following theorem, which with \cref{thm:mrfr-to-mrfm} implies \cref{thm:main-reduction}.

\begin{theorem}\label{thm:mrfm-to-mrf}
    There is a polynomial transformation from {\mrfmx}, restricted to instances fulfilling properties~\ref{prop:capacities-M} to~\ref{prop:cut-M} from \cref{thm:mrfr-to-mrfm}, to {\mrfdec}. The transformation preserves the value of $k$.
    The same transformation also applies to the integral versions of the problems. 
\end{theorem}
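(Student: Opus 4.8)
The plan is to realize the ``wrapper'' announced in \cref{thm:main-reduction}. Given an {\mrfmx} instance consisting of a digraph $D = (V, A)$, capacities $u$ with $M := \max_{a} u_a$, budget $k$, and commodities $K$ satisfying properties~\ref{prop:capacities-M}--\ref{prop:cut-M}, I would build an {\mrfdec} instance on a digraph $D^* = (V^*, A^*)$ obtained from $D$ by adding a super-source $s^*$, a super-sink $t^*$, an arc $(s^*, s_i)$ and an arc $(t_i, t^*)$ of capacity $d_i$ for every commodity $i \in K$, and an auxiliary gadget whose sole purpose is to make any pairing-violating flow vulnerable to interdiction. By property~\ref{prop:commodity-source-sink} the arcs $(s^*, s_i)$ and $(t_i, t^*)$ are forced to be saturated by any flow of maximum value, and property~\ref{prop:cut-M} bounds the maximum value of an $s^*$-$t^*$-flow in $D^*$ by $T := \sum_{i \in K} d_i$ (it extends the cut $U$). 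I would set the {\mrfdec} threshold to $L := T - (kM - 1)$, so that the {\mrfdec} instance asks precisely for an $s^*$-$t^*$-flow of value $T$ whose loss under any $k$ arc failures is at most $kM - 1$, mirroring the {\mrfmx} objective.

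\emph{Forward direction.} Given $x \in X_K$ with $\max_{S \in \mathcal{S}_k} \lambda(x, S) \le kM - 1$, I would extend $x$ to an $s^*$-$t^*$-flow $x^*$ of value $T$ by prefixing $(s^*, s_i)$ and appending $(t_i, t^*)$ to every commodity-$i$ path and routing the gadget in its canonical, pairing-respecting way. Since, by properties~\ref{prop:capacities-M} and~\ref{prop:i0}, every arc outside the core $D$ carries capacity strictly below $M$ (with the single large commodity $i_0$ kept isolated), interdicting a connector or gadget arc is dominated by interdicting an $M$-capacity arc of $D$; hence a worst-case scenario for $x^*$ may be assumed to lie entirely in $A$, where $\lambda(x^*, S) = \lambda(x, S) \le kM - 1$. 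Thus $x^*$ witnesses the {\mrfdec} instance.

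\emph{Reverse direction.} Conversely, from an $s^*$-$t^*$-flow $x^*$ with $\min_{S \in \mathcal{S}_k} \sum_{P \in \paths : P \cap S = \emptyset} x^*_P \ge L$, one has $\sum_{P \in \paths} x^*_P - \max_{S} \lambda(x^*, S) \ge L$. The first step is to show that $x^*$ in fact has value $T$, hence $\max_S \lambda(x^*, S) \le kM - 1$: if the value were smaller, the flow would have to avoid the $k-1$ capacity-$M$ arcs of $\delta^+(U)$ (property~\ref{prop:cut-M}), but then interdicting those $k-1$ arcs together with one further arc would destroy more than $kM-1$ units. The second, and main, step is to show that $x^*$ \emph{respects the commodity pairings}: in a path decomposition, essentially no flow may enter the core at $s_i$ and leave at a mismatched $t_j$. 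If it did, the gadget together with the $k-1$ capacity-$M$ cut arcs would supply a set of $k$ arcs whose removal destroys strictly more than $kM-1$ units, contradicting robustness; here property~\ref{prop:commodity-arc} (every non-$i_0$ commodity pushes at least $2$ units through each near-saturated arc) is what makes the interdictor's hit count double on mismatched flow and pins the bound to exactly $kM-1$. Once pairing is established, the restriction of $x^*$ to $A$ lies in $X_K$ and its worst-case loss over $\mathcal{S}_k$ is at most that of $x^*$, hence at most $kM-1$, so the {\mrfmx} instance is a YES instance.

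Finally, all added capacities are integers at most $M$, so by property~\ref{prop:capacities-M} the resulting instance has polynomially bounded capacities and size, and every step of the construction (the lifting, and the path decomposition used in the reverse direction) maps integral flows to integral flows, so the same transformation handles the integral versions. I expect the main obstacle to be the design and analysis of the auxiliary gadget: it must be simultaneously invisible to any pairing-respecting flow (contributing no interdiction target beyond the $kM-1$ already tolerated) and fatal to every pairing-violating flow, uniformly over all commodities and at the exact threshold $kM-1$ — which is precisely the role for which properties~\ref{prop:commodity-source-sink}--\ref{prop:cut-M} were engineered.
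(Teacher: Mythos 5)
Your plan correctly identifies what the reduction must accomplish — a wrapper network with source/sink connectors of capacity $d_i$, threshold $L = T - (kM-1)$, a YES/NO-equivalence argument in both directions, and a "gadget" enforcing the commodity pairing — but it leaves the gadget, which is the entire substance of the theorem, as an unconstructed black box, and the constraints you impose on it in your two directions are mutually inconsistent.

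In your forward direction you argue that a worst-case $S$ may be assumed to lie in $A$ because "every arc outside the core $D$ carries capacity strictly below $M$," so that gadget arcs are dominated. But in your reverse direction you argue that a pairing-violating flow is punished because "the gadget together with the $k-1$ capacity-$M$ cut arcs would supply a set of $k$ arcs whose removal destroys strictly more than $kM-1$ units." These cannot both hold: if every gadget arc has capacity $c < M$, then interdicting one such arc together with $k-1$ arcs of capacity $M$ destroys at most $c + (k-1)M < kM$, and to push this above $kM-1$ you would need $c > M-1$, i.e., $c \ge M$. So the very arcs you invoke to punish mismatched flow would, under your forward-direction assumption, never be worth interdicting. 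The paper's construction resolves exactly this tension in the opposite way: the critical wrapper arcs $(s'_i, s_i)$ have capacity $M' = M + 2k - 3 > M$, making them attractive interdiction targets; the forward direction (\cref{lem:multi-to-single}) then cannot simply ignore them and instead must show that $(s'_i, s_i)$ shares enough flow on common paths with every other arc $a$ (via the base flow $\bar{x}$, property~\ref{prop:commodity-arc}, and the bundles $\bar{A}$, $A_i$) that the marginal contribution of $a$ is at most $M-2$, giving $M' + (k-1)(M-2) = kM-1$. Conversely, the reverse direction (\cref{lem:hat-x}) exploits the same high-capacity arcs: the threat of the interdictor picking $(s'_i, s_i)$, one arc of $A_i$, and $k-2$ arcs of $\bar{A}$ forces the single-commodity flow to share one unit of flow between $(s'_i, s_i)$ and each arc of $A_i$, which is precisely what enforces the pairing. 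Without this "high capacity plus guaranteed shared flow" mechanism — and without the explicit base flow $\bar{x}$, the bundles $\bar{A}$, $A_i$, and the immune arcs filling all residual capacity — your sketch does not yield a construction, and the specific structural claims you rely on (gadget arcs being dominated; mismatched flow being caught by small-capacity arcs) fail.
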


\vspace*{-0.5cm}

\subsubsection*{Constructing an {\mrfdec} instance.}

    Consider an instance of {\mrfmx} given by a digraph $D = (V, A)$, capacities $u$ and commodities $K$ and interdiction budget~$k$, and that fulfills properties~\ref{prop:capacities-M} to~\ref{prop:cut-M} stated in \cref{thm:mrfr-to-mrfm}.
    For simplicity of notation, assume that $K = \{0, 1, \dots, m\}$ for $m = |K| - 1$, with $i_0 = 0$ for the commodity $i_0$ given by property~\ref{prop:i0}.

    We construct the following {\mrf} instance on digraph $D' = (V', A')$ with source~$s'$, sink~$t'$, capacities~$u'$, and the same interdiction budget~$k$.
    To simplify the description, we make use of \emph{immune arcs} that cannot be interdicted (this is w.l.o.g.\ in the context of this reduction;
    see \cref{app:mrfm-to-mrf-immune}).
    Define $K' := K \setminus \{0\}$ and let~$M' := M + 2k - 3$.
    The node set is 
    $V' := V \cup \{s', t', \bar{s}\} \cup \{s'_i \st i \in K'\}$ and the arc set $A'$ contains the following arcs:\\[-18pt]
    \begin{itemize}
        \item all arcs of $A$, with their original capacities $u$,
        \item immune arcs $(s', s_0)$ and $(t_0, t')$, each of capacity $d_0$,
        \item an immune arc $(s', s_1)$ of capacity $M'$,
        \item for each $i \in \{2, \dots, m\}$, an immune arc $(s', s_i)$, each of capacity $M' - 2k$,
        \item an immune arc $(s', \bar{s})$ of capacity $2k \cdot (M - 2)$,
        \item a bundle $\bar{A}$ of $2k$ arcs from $\bar{s}$ to $t'$, each with capacity $M - 1$;
        \item moreover, for every $i \in K'$:
        \begin{itemize}
            \item an arc $(s'_i, s_i)$ with capacity $M'$,
            \item an immune arc $(s_i, s'_{i+1})$ of capacity $2k$ (where $s'_{m+1} := \bar{s}$),
            \item for each $j \in K' \setminus \{i\}$, an immune arc $(s_i, t_j)$ of capacity $d_j$,
            \item a bundle $A_{i}$ of $d_i$ arcs from $t_i$ to $t'$, each with capacity $M - 1$,
            \item an immune arc $(s_i, t)$ of capacity $\xi^s_i := M' - 2k - \sum_{j \in K'} d_{j}$,
            \item an immune arc $(s', t_i)$ of capacity $\xi^t_i := d_i \cdot (M - 1 - m)$.
        \end{itemize}
    \end{itemize}
    The construction is depicted in \cref{fig:mrfm-mrf-reduction-overview}. Note that $\xi^s_i \geq 0$ due to property~\ref{prop:i0} and that with the exception of $s'$, $t'$, and the internal nodes of $D$, the in-capacity of a node equals its out-capacity.

\begin{figure}[t]
    \centering

    \begin{tikzpicture}

        \draw[fill=black!10!white] (4, 3) rectangle ++(5, -5.1);
        \draw (6.5, 2.75) node {$D$};
        
        \path node[node, label=left:{$s'$}] (s) {}
            ++(4.5, 2.7) node[node, label=below:{$s_0$}] (s0) {}
            ++(-2, -1.2) node[node, label=below:{$s'_1$}] (s01) {}
            +(2, 0)  node[node, label=below:{$s_1$}] (s1) {}
            ++(0, -1.5)  node[node, label=below:{$s'_2$}] (s02) {}
            +(2, 0)  node[node, label=below:{$s_2$}] (s2) {}
            ++(0, -1.5)  node[node, label=below:{$s'_3$}] (s03) {}
            +(2, 0)  node[node, label=below:{$s_3$}] (s3) {}
            ++(0, -1.5)  node[node, label=left:{$\bar{s}\;$}] (s04) {};

        \path (s1) ++(0.7, 0.9) node (s1p) {\scriptsize to $t'$};
        \draw[-latex, bend left=10] (s1) edge[dashed] node[pos=0.5, left] {\scriptsize  $\xi_1^s$} (s1p);
        \draw[-latex] (s2) edge[dashed] node[pos=0.25, left] {\scriptsize $\xi_2^s$} (s1p);
        \draw[-latex, bend right=15] (s3) edge[dashed] node[pos=0.15, left] {\scriptsize $\xi_3^s$} (s1p);

        \draw[-latex, bend left=35] (s) edge[dashed] node[above, sloped] {$d_0$} (s0);
        \draw[-latex] (s) edge[dashed] node[pos=0.6, above, sloped] {$M'$} (s01);
        \draw[-latex] (s) edge[dashed] node[pos=0.6, above, sloped] {$M' - 2k$} (s02);
        \draw[-latex] (s) edge[dashed] node[pos=0.6, above, sloped] {$M' - 2k$} (s03);
        \draw[-latex] (s) edge[dashed] node[pos=0.6, below, sloped] {$2k\cdot(M-2)$} (s04);
        
        \draw[-latex] (s01) edge node[above, sloped] {$M'$} (s1);
        \draw[-latex] (s02) edge node[above, sloped] {$M'$} (s2);
        \draw[-latex] (s03) edge node[above, sloped] {$M'$} (s3);
        
        \draw[-latex] (s1) edge[dashed] node[above, sloped] {$2k$} (s02);
        \draw[-latex] (s2) edge[dashed] node[above, sloped] {$2k$} (s03);
        \draw[-latex] (s3) edge[dashed] node[above, sloped] {$2k$} (s04);

        \path (s0) ++(4, 0) node[node, label=below:{$t_0$}] (t0) {}
            ++(0, -1.2)  node[node, label=below:{$t_1$}] (t1) {}
            ++(0, -1.5)  node[node, label=below:{$t_2$}] (t2) {}
            +(3, 0)  node[node, label=right:{$t'$}] (t) {}
            ++(0, -1.5)  node[node, label=below:{$t_3$}] (t3) {};

        \path (t1) ++(-0.9, 0.9) node (t1p) {\scriptsize from $s'$};
        \draw[-latex, bend left=10] (t1p) edge[dashed] node[pos=0.5, right] {\scriptsize $\xi_1^t$} (t1);
        \draw[-latex] (t1p) edge[dashed] node[pos=0.75, right] {\scriptsize $\xi_2^t$} (t2);
        \draw[-latex, bend right=5] (t1p) edge[dashed] node[pos=0.85, right] {\scriptsize $\xi_3^t$} (t3);

        \draw[-latex, bend left=35] (t0) edge[dashed] node[above, sloped] {$d_0$} (t);
        \draw[-latex] (t1) edge[double] node[above, sloped] {$d_1 \times (M - 1)$} (t);
        \draw[-latex] (t2) edge[double] node[above, sloped, pos=0.4] {$d_2 \; \times$} (t);
        \draw[-latex] (t2) edge[double] node[below, sloped, pos=0.4] {$(M - 1)$} (t);
        \draw[-latex] (t3) edge[double] node[sloped, below] {$d_3 \times (M - 1)$} (t);
        
        \draw[-latex] (s04) edge[double, out=0, in=250] node[above, sloped, pos=0.4] {$2k \times (M - 1)$} (t);

        \draw[-latex] (s1) edge[dashed] node[below, sloped, pos=0.8] {$d_2$} (t2);
        \draw[-latex] (s1) edge[dashed] node[below, pos=0.8] {$d_3$} (t3);

        \draw[-latex] (s2) edge[dashed, bend left=20] node[above, sloped, pos=0.7] {$d_1$} (t1);
        \draw[-latex] (s2) edge[dashed, bend right=20] node[below, sloped, pos=0.75] {$d_3$} (t3);

        \draw[-latex] (s3) edge[dashed] node[above, sloped, pos=0.75] {$d_1$} (t1);
        \draw[-latex] (s3) edge[dashed] node[below, sloped, pos=0.8] {$d_2$} (t2);
        
    \end{tikzpicture}

    \vspace*{-0.6cm}
    
    \caption{Illustration of the reduction from {\mrfmx} with digraph $D$ and four commodities to {\mrf}. 
    Dashed arcs represent immune arcs, with the label indicating the capacity.
    Single-lined solid arcs represent regular arcs, with the label indicating the capacity.
    Double-lined arcs labeled $\ell \times c$ represent bundles of $\ell$ arcs with capacity $c$ each.
    The figure omits the internal nodes and arcs of the digraph~$D$.}
    \label{fig:mrfm-mrf-reduction-overview}

    \vspace*{-0.5cm}
\end{figure}

\subsubsection*{Analysis.}
Let $\mathcal{S}_k := \{S \subseteq A \st |S| \leq k\}$ and $\mathcal{S}'_k := \{S \subseteq A' \setminus A_{\text{I}} \st |S| \leq k\}$, where $A_{\text{I}}$ is the set of immune arcs. 
Let~$X_K$ be the set of multicommodity flows in $D$ for capacities $u$, sending $d_i$ units from $s_i$ to $t_i$ for all $i \in K$. Let~$\paths'$ be the set of $s'$-$t'$-paths in $D'$ and let~$X'$ be the set of $s'$-$t'$-flows in $D'$ for capacities~$u'$.
To complete the reduction, we prove the following lemma, where~$\Delta := \sum_{a \in \delta_{D'}^+(s')} u'_a$.
\begin{lemma}\label{lem:equivalence-mflow-sflow}
    There is $x \in X_K$ with $\max_{S \in \mathcal{S}_k} \lambda(x, S) \leq kM - 1$ if and only if there is $x' \in X'$ with $\min_{S \in \mathcal{S}'_k} \sum_{P \in \paths' : P \cap S = \emptyset} x'_P \geq \Delta - (kM - 1)$.
\end{lemma}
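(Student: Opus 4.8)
Throughout, write $v(x') := \sum_{P \in \paths'} x'_P$ for the value of an $s'$-$t'$-flow $x' \in X'$. Since every flow path starts at $s'$, we have $v(x') \leq \Delta$ and $\sum_{P \in \paths' : P \cap S = \emptyset} x'_P = v(x') - \lambda(x', S)$ for every $S$, so the condition ``$\min_{S \in \mathcal{S}'_k} \sum_{P \in \paths' : P \cap S = \emptyset} x'_P \geq \Delta - (kM-1)$'' on $x'$ is equivalent to
\begin{align*}
    v(x') - \max_{S \in \mathcal{S}'_k} \lambda(x', S) \;\geq\; \Delta - (kM - 1),
\end{align*}
which in particular forces $\max_{S \in \mathcal{S}'_k} \lambda(x', S) \leq kM - 1$. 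The plan is to prove the two implications of the lemma separately.

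For the direction from {\mrfmx} to {\mrfdec}, I would start from a flow $x \in X_K$ with $\lambda(x, S) \leq kM - 1$ for all $S \in \mathcal{S}_k$ and build $x'$ explicitly: route the commodity-$0$ flow of $x$ along $(s', s_0)$, through the embedded copy of $D$, and along $(t_0, t')$; for every $i \in K'$ fill the gateway arc $(s'_i, s_i)$ and split the flow at $s_i$ into (a)~the commodity-$i$ flow of $x$, traversing $D$ to $t_i$ and then continuing on the bundle $A_i$, (b)~flow on the immune bypass arcs $(s_i, t_j)$ towards the other sinks, and (c)~the remaining units on the immune spine arc $(s_i, s'_{i+1})$, which I would keep on a single spine path $s_1 \to s'_2 \to s_2 \to \dots \to s_m \to \bar s$; finally, saturate the immune arcs $(s', t_i)$ and route all flow arriving at $\bar s$ through the bundle $\bar A$. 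Using the properties from \cref{thm:mrfr-to-mrfm} one checks that $x' \in X'$, and the inequality above follows from a case distinction over the interdiction scenarios $S \in \mathcal{S}'_k$: because the restriction of $x'$ to $D$ realizes exactly the multicommodity flow $x$, interdicting a subset $S \cap A$ of $D$-arcs costs precisely $\lambda(x, S \cap A) \leq kM - 1$; bundle arcs carry at most $M - 1$ units; the gateway arcs pairwise overlap only in the $2k$ spine units, and commodity-$i$ flow uses both its gateway arc and $D$-arcs; and property~\ref{prop:cut-M} (the $k - 1$ arcs of capacity $M$ in the separating cut $\delta^+(U)$) governs the worst mixed scenario, leaving $v(x') - \max_S \lambda(x', S) \geq \Delta - (kM-1)$.

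For the direction from {\mrfdec} to {\mrfmx}, suppose $x' \in X'$ satisfies the inequality above, so in particular $v(x') \geq \Delta - (kM - 1)$. I would proceed in three steps. \emph{(1)~Forced saturation.} Since $v(x')$ is within $kM - 1$ of the out-capacity $\Delta$ of $s'$, a min-cut argument on $D'$ forces $x'$ to use all but a bounded amount of the in-capacity of $t'$; as $(t_0, t')$ and all bundle arcs have capacity at most $M - 1$, this essentially fixes the flow on them, and since the immune arcs into a sink $t_i$ supply only $d_i(M - 2)$ units in total, at least $d_i$ units of flow must reach $t_i$ through the embedded copy of $D$, in fact exactly $d_i$ by property~\ref{prop:commodity-source-sink}, and similarly exactly $d_0$ units traverse $D$ from $s_0$ to $t_0$. \emph{(2)~Commodities are respected.} The heart of the argument is that the portion of $x'$ inside $D$ decomposes into $s_i$-$t_i$-flows of value $d_i$ for $i \in K'$ together with an $s_0$-$t_0$-flow of value $d_0$; in particular no flow crosses $D$ from $s_i$ to a sink $t_j$ with $j \neq i$. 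If it did, then by flow conservation across the separating cut $\delta^+(U)$, whose total capacity equals $\sum_{i \in K} d_i$ by property~\ref{prop:cut-M}, some sink $t_{j'}$ would receive fewer than $d_{j'}$ units through $D$, its bundle $A_{j'}$ would be undersaturated, and $v(x')$ would drop below $\Delta - (kM-1)$, contradicting step~(1); property~\ref{prop:commodity-arc} excludes the remaining degenerate reroutings along the near-maximum-capacity arcs. \emph{(3)~Transferring the loss.} Let $x$ be the restriction of $x'$ to the arcs of $D$, regarded as the multicommodity flow obtained in step~(2); then $x \in X_K$. For any $S \in \mathcal{S}_k$ we have $S \subseteq A \subseteq A' \setminus A_{\text{I}}$, hence $S \in \mathcal{S}'_k$, and since every $s'$-$t'$-path of $x'$ meeting $S$ corresponds to an $s_i$-$t_i$-path of $x$ meeting $S$ (and conversely), $\lambda(x, S) = \lambda(x', S) \leq \max_{S' \in \mathcal{S}'_k} \lambda(x', S') \leq kM - 1$.

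Both constructions are integral-preserving (all capacities are integers and no fractional values are introduced), so the same arguments apply to the integral versions of {\mrfmx} and {\mrfdec}. I expect the main obstacle to be step~(2) of the second implication: proving that no admissible flow $x'$ can reroute flow between distinct commodities inside $D$ is exactly the guarantee the ``wrapper'' around $D$ is engineered to provide, and verifying it requires a careful accounting of the capacities along the spine $s' \to s'_1 \to s_1 \to \dots \to s_m \to \bar s \to t'$ and the immune bypass arcs in conjunction with the separating cut of property~\ref{prop:cut-M}; this is where the precise wrapper capacities ($M' = M + 2k - 3$, the capacity-$2k$ spine arcs, and the capacity-$d_i(M - 1 - m)$ sink-feeding arcs) are used.
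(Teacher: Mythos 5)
Your overall structure matches the paper's: an explicit construction for the forward direction ({\mrfmx} $\Rightarrow$ {\mrfdec}) and a decomposition argument for the backward direction ({\mrfdec} $\Rightarrow$ {\mrfmx}). For the forward direction your sketch of the base flow and the splitting of $x'$ into a ``wrapper'' part and the embedded multicommodity flow is the right idea, but the case analysis over $S$ is too loose and misattributes the role of property~\ref{prop:cut-M}; in the paper's argument, that property is not used at all in the forward direction. What actually bounds the worst mixed scenario when $S$ contains a gateway arc $(s'_i, s_i)$ of capacity $M' > M$ is a marginal-contribution argument: one shows that every other non-immune arc $a$ shares at least one unit (at least two units when $u'_a \geq M - 1$, via property~\ref{prop:commodity-arc}) of flow with $(s'_i, s_i)$ on common paths, so its marginal contribution to $\lambda(x', S)$ is at most $M-2$, yielding $\lambda(x', S) \le M' + (k-1)(M-2) = kM - 1$.

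The backward direction has a genuine gap in your step~(2). You argue that commodity-pairing is forced because cross-routing (flow traversing $D$ from $s_i$ to $t_j$ with $j \neq i$) would cause some sink $t_{j'}$ to receive fewer than $d_{j'}$ units through $D$, undersaturating its bundle. This does not hold: two paths that swap sources and sinks inside $D$ (e.g., one from $s_1$ to $t_2$ and one from $s_2$ to $t_1$) keep the separating cut $\delta^+(U)$ fully saturated and deliver exactly $d_j$ units to every sink $t_j$, so neither the cut argument nor the bundle-saturation argument detects the violation. Capacity bookkeeping alone cannot distinguish a correctly paired multicommodity flow from a cross-routed one. What \emph{does} force the pairing is a targeted interdiction argument, which is the missing key idea: for each $i \in K'$ and each bundle arc $a \in A_i$, take $S := \{(s'_i, s_i), a\} \cup \bar{S}$ with $\bar{S}$ any $k-2$ arcs of $\bar{A}$; a flow computation (using that the spine paths are saturated, which gives the shared units between $(s'_i, s_i)$ and $\bar{A}$) shows $\lambda(x', S) = kM - \gamma_{ai}$, where $\gamma_{ai}$ is the flow on common paths of $(s'_i, s_i)$ and $a$. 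The constraint then forces $\gamma_{ai} \geq 1$, and since the spine arcs $(s_i, s'_{i+1})$ are already saturated, every such common path must enter $D$ at $s_i$ and contain an $s_i$-$t_i$-path; summing over $a \in A_i$ yields at least $d_i$ units of commodity-$i$ flow, hence exactly $d_i$ by the capacity of $\delta_D^+(s_i)$. Your invocation of property~\ref{prop:commodity-arc} here is also off target: that property concerns flows already in $X_K$ and is used in the forward direction, not to exclude reroutings in the backward direction.
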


We sketch the main ideas for proving the lemma. See \cref{app:mrfm-to-mrf} for the details. We construct the following ``base flow'' $\bar{x}$ that sends
\begin{itemize}
    \item $2k$ units of flow along paths $s'$-$s'_1$-$s_1$-$s'_2$-$s_2$-$\dots$-$s'_m$-$s_m$-$\bar{s}$-$t'$, 
    \item $2k \cdot (M-2)$ units of flow along paths $s'$-$\bar{s}$-$t'$,
    \item $\xi^s_i$ units of flow along the path $s'$-$s'_i$-$s_i$-$t'$ for each $i \in K'$,
    \item $\xi^t_i$ units of flow along paths $s'$-$t_i$-$t'$ for each $i \in K'$,
    \item $d_j$ units of flow along paths $s'$-$s'_i$-$s_i$-$t_j$-$t'$ for each $i, j \in K'$ with $i \neq j$,
\end{itemize}
where we split flow equally among the arcs of the bundle if the path described by the node sequence contains such a bundle.

Note that $\bar{x}$ sends $\Delta - \sum_{i \in K} d_i$ units of flow from $s'$ to $t'$ and that it does not use any arc in $A \cup \{(s', s_0), (t_0, t')\}$. Moreover, for $i \in K'$, it leaves a spare capacity of $d_i$ along the path $s'$-$s'_i$-$s_i$ and a spare capacity of~$1$ on each of the $d_i$ arcs in $A_i$.
We can thus combine $\bar{x}$ with a multicommodity flow $x \in X_K$, whose paths we appropriately extend to obtain an $s'$-$t'$-flow $\psi(x)$:
For every $P \in \paths_i$ and $a \in A_i$ for $i \in K'$, let $Q_{P, a}$ be the path derived from $P$ by attaching the path $s'$-$s'_i$-$s_i$ as a prefix and $a \in A_i$ as a suffix, and set $\psi(x)_{Q_{P,a}} := \frac{x_P}{d_i}$;
for every $P \in \paths_0$, set $\psi(x)_{R_P} := x_P$, where $R_P$ is the concatenation $(s', s_0) \circ P \circ (t_0, t')$.

\begin{restatable}{lemma}{restateLemMultiToSingle}\label{lem:multi-to-single}
    Let $x \in X_K$ with $\max_{S \in \mathcal{S}_k} \lambda(x, S) \leq kM - 1$ and let $x' := \psi(x) + \bar{x}$. Then $x' \in X'$ and $\min_{S \in \mathcal{S}'_k} \sum_{P \in \paths' : P \cap S = \emptyset} x'_P \geq \Delta - (kM - 1)$.
\end{restatable}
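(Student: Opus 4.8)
The plan is to verify the two claimed properties of $x' := \psi(x) + \bar x$ in turn: feasibility ($x' \in X'$), and the robustness bound $\min_{S \in \mathcal{S}'_k} \sum_{P \in \paths' : P \cap S = \emptyset} x'_P \geq \Delta - (kM-1)$. For feasibility, I would first check that $\psi(x)$ is a well-defined $s'$-$t'$-flow: every path $Q_{P,a}$ (for $P \in \paths_i$, $a \in A_i$, $i \in K'$) and $R_P$ (for $P \in \paths_0$) genuinely runs from $s'$ to $t'$ in $D'$, using that the arcs $(s', s'_i)$, $(s'_i, s_i)$, $(s_i, s'_{i+1})$ are present, that property~\ref{prop:commodity-source-sink} guarantees $\delta^-(s_i) = \delta^+(t_i) = \emptyset$ in $D$ so the $D$-part of $P$ attaches cleanly, and that $a \in A_i$ leads into $t'$. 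Then I would add $\bar x$ and check capacities arc by arc: on the arcs of $A$, only $\psi(x)$ contributes and it inherits $u$-feasibility from $x \in X_K$; on $(s', s'_i)$ and $(s'_i, s_i)$, $\bar x$ already carries $2k + \xi^s_i + \sum_{j \neq i} d_j = M' - 2k = $ (capacity $M'$ minus $d_i$), leaving exactly $d_i$ for $\psi(x)$, which routes exactly $\sum_{P \in \paths_i} x_P = d_i$ there by property~\ref{prop:commodity-source-sink}; on each arc of the bundle $A_i$, $\bar x$ uses $M-2$ (from the $s'$-$s'_i$-$s_i$-$t_j$-$t'$ and $s'$-$s'_i$-$s_i$-$t'$ paths, summed and split) — wait, more carefully, $\bar x$ puts $\xi^t_i = d_i(M-1-m)$ through the $d_i$ arcs of $A_i$ via $s'$-$t_i$-$t'$, i.e.\ $M-1-m$ per arc, plus the $(m-1)$ flows of value $d_j$ routed $s'$-$s'_j$-$s_j$-$t_i$-$t'$ contribute a further $m-1$ per arc, totalling $M-2$ per arc, so spare capacity $1$ per arc; and $\psi(x)$ adds $\frac{x_P}{d_i}$ summed over paths through that arc, which is at most $\frac{1}{d_i}\sum_{P \in \paths_i} x_P = 1$. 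The immune arcs need no capacity check but I would still confirm the flow-conservation bookkeeping at $s'_i$, $s_i$, $\bar s$, $t_j$, $t_0$ using the in-capacity-equals-out-capacity remark after the construction. Finally, $\bar x$ sends $\Delta - \sum_{i \in K} d_i$ units and $\psi(x)$ sends $\sum_{i \in K} d_i$ units (preserving $x$'s throughput), so $x'$ has value $\Delta$, i.e.\ saturates $\delta^+_{D'}(s')$.

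For the robustness bound, fix any $S \in \mathcal{S}'_k$ (so $|S| \le k$ and $S$ avoids immune arcs) and bound the lost flow $\lambda(x', S) = \sum_{P \in \paths' : P \cap S \neq \emptyset} x'_P$ from above by $kM - 1$; since $x'$ has total value $\Delta$, this gives the surviving-flow lower bound. The key structural point is that the non-immune, interdictable arcs split into three groups: (i) arcs of $A$ (the original $D$); (ii) arcs $(s'_i, s_i)$ of capacity $M'$; (iii) arcs in the bundles $A_i$ ($i \in K'$), $\bar A$, and the arc $(t_0,t')$ — all of capacity at most $M-1$ except $(t_0,t')$ of capacity $d_0$. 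I would argue that interdicting a ``cheap'' arc (capacity $\le M - 1 \le M$, or $d_0 \le M-3$ by property~\ref{prop:i0}) costs the interdictor at most $M$ units of $x'$-flow crossing it, whereas interdicting an expensive arc $(s'_i, s_i)$ kills at most $d_i + $ (the $\bar x$-flow through it); but $\bar x$'s flow on $(s'_i, s_i)$ is $M' - 2k = M - 3$, and the $\psi(x)$-flow is $d_i \le M - 3$ (property~\ref{prop:i0}), for at most $M - 3 + \ldots$ — the point being each single interdicted arc destroys strictly less than $M$ units... except this is exactly where the delicate balancing with the $2k$-arc bundles comes in, so the real argument must be: the only way for the interdictor to destroy close to $M$ units per arc is to hit arcs of $A$ of capacity close to $M$; and by property~\ref{prop:cut-M} there are only $k-1$ such arcs in the cut $\delta^+_{D'}(U)$ (the image of the big cut from the {\mrfmx} instance), so the $k$-th interdicted arc must be ``cheap'', and moreover hitting two arcs of $A$ that lie on a common path of some commodity double-counts. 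Concretely, I would partition the interdicted set as $S = S_A \cup S_{\text{rest}}$ with $S_A := S \cap A$, relate $\lambda(\psi(x), S_A)$ to $\lambda(x, S_A)$ in $D$ (using that $\psi$ just prepends/appends arcs, so a path of $\psi(x)$ meets $S_A$ iff the underlying commodity path meets $S_A$), invoke $\max_{S \in \mathcal{S}_k} \lambda(x, S) \le kM - 1$, and then carefully account for (a) the extra $\bar x$-flow that $S_A$ may kill — but $\bar x$ uses no arc of $A$! — and (b) the flow killed by $S_{\text{rest}}$, showing that routing through the wide bundles $A_i$, $\bar A$ forces each such arc to carry only a $1/(d_i)$ or $1/(2k)$ fraction of the relevant commodity flow, so that $S_{\text{rest}}$ contributes a deficit that is compensated by $S_A$ being correspondingly smaller (if $|S_A| = k - r$ then $\lambda(x, S_A) \le (k-r)M$, leaving budget $r$ for cheap arcs each costing $< M$, and the slack $-1$ is preserved).

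I expect the main obstacle to be the interdiction bound, specifically the bookkeeping that shows no interdiction set $S \in \mathcal{S}'_k$ achieves $\lambda(x', S) > kM - 1$. The subtlety is that the bundles $A_i$ and $\bar A$ each have more than $k$ arcs precisely so that the interdictor cannot profitably attack them (attacking $j \le k$ of the $2k$ arcs of $\bar A$ destroys only $j(M-1) + \ldots$ of $\bar x$'s $2k(M-2) + 2k$ units routed there, and loses the $2k - j$ un-interdicted copies), and one must show the arithmetic $M' = M + 2k - 3$, the bundle sizes $2k$, the capacities $M-1$, and the immune-arc budgets $\xi^s_i, \xi^t_i$ are chosen exactly so that every interdiction pattern — pure-$A$, pure-bundle, or mixed — yields $\lambda(x',S) \le kM - 1$, with equality attainable precisely by the interdiction patterns that correspond to the worst-case scenarios of the original {\mrfmx} instance (which by hypothesis lose at most $kM-1$). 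The bulk of the work is a careful case split on how $S$ distributes among $A$, the arcs $(s'_i,s_i)$, and the bundles, and I would organize it by first reducing to the case $S \subseteq A \cup \{(s'_i,s_i) : i\} \cup A_I'$-complement via the observation that replacing a bundle arc by... — in any event, once the case analysis is set up the inequalities are routine but numerous, so I would present it as a sequence of claims bounding $\lambda(x', S)$ under increasingly general assumptions on $S$.
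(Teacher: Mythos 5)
Your overall plan — verify $x' \in X'$ by an arc-by-arc capacity check, then bound $\lambda(x', S) \leq kM - 1$ for every $S \in \mathcal{S}'_k$ via a case analysis on which kinds of arcs $S$ contains — is the same as the paper's, and your feasibility computation (the $\bar{x}$-flow leaves spare $1$ per arc of each $A_i$ and spare $d_i$ on $(s'_i, s_i)$, the bookkeeping at the internal nodes, and $\sum_P x'_P = \Delta$) is essentially correct. You also correctly spot that $\bar{x}$ uses no arc of $A$, so when $S \subseteq A$ and all capacities equal $M$, the claim reduces cleanly to $\lambda(x', S) = \lambda(x, S) \leq kM - 1$.

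The genuine gap is in the crucial case where $S$ contains an arc $a' = (s'_i, s_i)$ of capacity $M' = M + 2k - 3 > M$. Here a naive capacity bound gives $M' + (k-1)(M-1) = kM + k - 2 > kM - 1$, so one \emph{must} argue that the remaining $k-1$ interdicted arcs contribute strictly less than $M - 1$ each once $a'$ is already interdicted, and the right quantity is the marginal contribution $\zeta_a := x'[a] - \sum_{P : a, a' \in P} x'_P$. The paper proves $\zeta_a \leq M - 2$ for every non-immune $a$ by a five-way case distinction (other $(s'_j, s_j)$; $a \in A$ using property~\ref{prop:commodity-arc}; $a \in \bar{A}$ via the $s'$-$s'_1$-$\cdots$-$\bar{s}$-$t'$ paths; $a \in A_j$ via the $s'$-$s'_i$-$s_i$-$t_j$-$t'$ paths or via $\psi(x)$'s extensions; $u'_a \leq M - 2$ trivially), which yields $\lambda(x', S) \leq M' + (k-1)(M-2) = kM - 1$ exactly. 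You gesture at this — "hitting two arcs that lie on a common path double-counts", "the wide bundles force each arc to carry only a $1/d_i$ or $1/(2k)$ fraction" — but you never formulate the marginal quantity or prove the $M - 2$ bound, and your sketch trails off mid-sentence exactly at this point. Your $S = S_A \cup S_{\text{rest}}$ split also does not by itself address this case, since $(s'_i, s_i) \in S_{\text{rest}}$ and the budget-$r$/deficit heuristic you propose does not close the $k-2$ gap. Additionally, the specific claim that "$\bar{x}$'s flow on $(s'_i,s_i)$ is $M' - 2k$" is a slip; it is $M' - d_i$ (with $\psi(x)$ contributing the remaining $d_i$, so $x'[(s'_i,s_i)] = M'$). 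You have identified the hard part correctly but have not supplied the argument that resolves it.
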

\begin{proof}[sketch]
    As argued above, $\psi(x)$ fits into the spare capacity left by $\bar{x}$ and thus $x' \in X'$ with $\sum_{P \in \paths'} x_P = \Delta$. 
    Moreover, we establish that $\lambda(x', S) \leq kM - 1$ for any $S \in \mathcal{S}'_k$.
    If $S \subseteq A$, this follows from $\lambda(x', S) = \lambda(x, S) \leq kM - 1$.
    If~$S \not\subseteq A$, then $\lambda(x', S) > kM - 1$ is only possible if $S$ contains at least one arc of capacity~$M'$, which must be of the form $(s'_i, s_i)$.
    For this case, one verifies that our construction ensures that~$(s'_i, s_i)$ shares a sufficiently high amount of flow on common paths with any other arc $a$ so that the marginal contribution of $a$ to $\lambda(x', S)$ is bounded by $M-2$ and hence $\lambda(x', S) \leq M' + (k-1)(M-2) = kM - 1$. 
    For $a \in A$, e.g., this follows from property~\ref{prop:commodity-arc} of \cref{thm:mrfr-to-mrfm}, while for $a \in A_i$ this follows from the fact that there is one unit of flow on paths containing both~$(s'_i, s_i)$ and $a$ as extensions of $s_i$-$t_i$-paths from $x$. \qed
\end{proof}

Conversely, one can show that every $x \in X'$ with the desired properties must actually be a combination of $\bar{x}$ with such an extended multicommodity flow. 
     
    \begin{restatable}{lemma}{restateLemHatX}\label{lem:hat-x}
        Let $x' \in X'$ with $\min_{S \in \mathcal{S}'_k} \sum_{P \in \paths' : P \cap S = \emptyset} x'_P \geq \Delta - (kM - 1)$.
        Define~$x$ by $x_P := \sum_{Q \in \paths' : P = Q \cap A} x'_Q$ for each $P \in \paths_i$ and $i \in K$.
        Then $x \in X_K$ and $\max_{S \in \mathcal{S}_k} \lambda(x, S) \leq kM - 1$.
    \end{restatable}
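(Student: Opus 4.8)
The plan is to verify the two assertions of the lemma — that $x \in X_K$ and that $\max_{S\in\mathcal{S}_k}\lambda(x,S)\le kM-1$ — in three stages: (i) the robustness bound, (ii) capacity feasibility of $x$, and (iii) the demand equalities $\sum_{P\in\paths_i}x_P = d_i$. Stages (i) and (ii) are essentially formal, while (iii) is where the ``wrapper'' construction really gets used. For (i), fix $S\subseteq A$ with $|S|\le k$. Since the arcs of $A$ are non-immune in $D'$ we have $S\in\mathcal{S}'_k$, and since $S\subseteq A$ we have $Q\cap S=(Q\cap A)\cap S$ for every $Q\in\paths'$; hence any $Q$ with $Q\cap A=P$ and $P\cap S\neq\emptyset$ also satisfies $Q\cap S\neq\emptyset$. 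Summing over the disjoint fibres $\{Q : Q\cap A=P\}$ this gives $\lambda(x,S)=\sum_{P:\,P\cap S\neq\emptyset}x_P\le\sum_{Q:\,Q\cap S\neq\emptyset}x'_Q=\lambda(x',S)$, and $\lambda(x',S)=\sum_{Q\in\paths'}x'_Q-\sum_{Q:\,Q\cap S=\emptyset}x'_Q\le\Delta-\bigl(\Delta-(kM-1)\bigr)=kM-1$, using $\sum_{Q\in\paths'}x'_Q\le\Delta$ (the capacity of $\delta_{D'}^+(s')$) together with the hypothesis. The same projection argument gives (ii): for $a\in A$, $\sum_{P:\,a\in P}x_P\le x'[a]\le u'_a=u_a$.

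For (iii) I would first show that $x'$ is a maximum $s'$-$t'$-flow, i.e.\ $\sum_{Q\in\paths'}x'_Q=\Delta$. This uses that both $\delta_{D'}^+(s')$ and $\delta_{D'}^-(t')$ have capacity exactly $\Delta$ (a direct check against the chosen capacities, using Properties~\ref{prop:i0} and~\ref{prop:cut-M}) together with the hypothesis, which leaves no slack. Granting saturation, one propagates flow conservation along the ``spine'' $s'\to s'_1\to s_1\to s'_2\to\dots\to s_m\to\bar s$: at each $s_i$ with $i\in K'$ the capacity $M'$ of the (now saturated) arc $(s'_i,s_i)$ equals the total capacity of $s_i$'s outgoing arcs $\delta_D^+(s_i)\cup\{(s_i,s'_{i+1})\}\cup\{(s_i,t_j):j\in K'\setminus\{i\}\}\cup\{(s_i,t')\}$ — here Properties~\ref{prop:capacities-M},~\ref{prop:commodity-source-sink}, and~\ref{prop:i0} are invoked — so \emph{all} of these arcs are saturated; in particular exactly $d_i$ units of $x'$ enter $D$ through $\delta_D^+(s_i)$, and exactly $d_0$ units enter $D$ at $s_0$. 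A symmetric accounting at each sink $t_i$, together with flow conservation inside $D$, then forces exactly $d_i$ units of $x'$ to leave $D$ through $\delta_D^-(t_i)$ for every $i\in K$.

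The remaining and, I expect, hardest point is that the flow of $x'$ inside $D$ respects the source–sink pairing: the $d_i$ units entering at $s_i$ are exactly the $d_i$ units leaving at $t_i$; equivalently, no positive flow of $x'$ runs inside $D$ from a source $s_i$ to a sink $t_j$ with $t_j\neq t_i$. If it did, then the $d_i$ arcs of the bundle $A_i$ (each saturated, of capacity $M-1$) would carry strictly fewer than $d_i$ units of flow that also passes through the saturated capacity-$M'$ arc $(s'_i,s_i)$, so taking $S$ to be $(s'_i,s_i)$ together with $k-1$ suitably chosen arcs of $A_i$ — and, if $k-1>d_i$, further high-capacity arcs whose overlap with $(s'_i,s_i)$ is controlled via Property~\ref{prop:commodity-arc} — makes the total marginal loss of those $k-1$ arcs strictly exceed $(k-1)(M-2)$, yielding $\lambda(x',S)>M'+(k-1)(M-2)=kM-1$ and contradicting stage~(i); the case involving commodity $i_0=0$ and the arc $(t_0,t')$ is handled analogously. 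With no crossing, the restriction of $x'$ to $A$ decomposes, for each $i$, into a flow of value $d_i$ from $s_i$ to $t_i$, which is precisely $\sum_{P\in\paths_i}x_P$; together with (i) and (ii) this gives $x\in X_K$ and finishes the proof. The delicate feature exploited throughout stage~(iii) is that the capacities were tuned so that the bound $kM-1$ of stage~(i) is met \emph{with no slack} by the intended interdiction — so any deviation from the intended routing must be detectable by some interdiction set — and getting the averaging/selection argument to work uniformly for all $k>1$ is the main obstacle.
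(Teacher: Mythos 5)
Your stages (i) and (ii) are correct and match the paper's closing argument. The trouble is in stage~(iii), which glosses over the two technical pillars that the paper proves separately (as \cref{lem:sauration}) and gets one key step wrong.

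First, showing that $x'$ saturates $\delta^+_{D'}(s')$, i.e.\ $\sum_{Q\in\paths'}x'_Q=\Delta$, does not follow from ``a direct check against the chosen capacities, which leaves no slack.'' The hypothesis only guarantees $\sum_Q x'_Q\ge\Delta-(kM-1)$ (take $S=\emptyset$), so there is in principle $kM-1$ of slack. The paper closes this gap by constructing specific interdiction sets: take the $k-1$ arcs of capacity $M$ in $\delta^+_{D'}(U')$ supplied by property~\ref{prop:cut-M} plus one arc of $\bar{A}$, and push the cut/robustness inequalities until they are forced to be tight. This also yields the stronger fact — which your proposal takes for granted under the heading ``propagates flow conservation'' — that the \emph{path decomposition} of $x'$ on the spine is the intended one: each arc $(s_i,s'_{i+1})$ is used only by the $2k$ paths $s'\text{-}s'_1\text{-}\dots\text{-}\bar{s}\text{-}t'$. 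Flow conservation gives you the correct arc flows on the spine, but not which paths carry them; and without the path-level statement you cannot later conclude that a flow path containing $(s'_i,s_i)$ and an arc of $A_i$ must traverse an $s_i$-$t_i$-path inside~$D$.

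Second, your averaging argument for the no-crossing claim picks $(s'_i,s_i)$ together with $k-1$ arcs chosen from $A_i$, padded (when $d_i<k-1$) by arcs of $A$ whose overlap with $(s'_i,s_i)$ is ``controlled via property~\ref{prop:commodity-arc}.'' This padding goes the wrong way: property~\ref{prop:commodity-arc} gives a \emph{lower} bound ($\ge 2$) on the shared flow, hence an \emph{upper} bound of $M-2$ on the marginal loss of such an arc, which cannot push $\lambda(x',S)$ above $kM-1$. Moreover property~\ref{prop:commodity-arc} is a statement about $x\in X_K$, which you have not yet established at this point, so invoking it is circular. The device the construction provides for exactly this purpose is the bundle $\bar{A}$: each $\bar{a}\in\bar{A}$ is saturated and (once the spine structure is pinned down) shares \emph{exactly} one unit of flow with $(s'_i,s_i)$, so its marginal loss is \emph{exactly} $M-2$. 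The paper's proof of \cref{lem:hat-x} uses $S=\{(s'_i,s_i),a\}\cup\bar{S}$ with $\bar{S}\in\binom{\bar{A}}{k-2}$ to force $\gamma_{ai}\ge 1$ for every single $a\in A_i$; your averaging variant can be made to work (choose $S=\{(s'_i,s_i)\}\cup A_i\cup\bar{S}$ with $|\bar{S}|=k-1-d_i$ arcs from $\bar{A}$ when $d_i<k-1$), but only with $\bar{A}$ as the filler, not with arcs of $A$.
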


    \begin{proof}[sketch]
        We can show that $x'$ must saturate all arcs in $A' \setminus A$, as otherwise we can construct sets $S \in \mathcal{S}'_k$ whose failure leaves less than $\Delta - (kM-1)$ units of flow surviving (for constructing these sets we make use of a cut derived from property~\ref{prop:cut-M} of \cref{thm:mrfr-to-mrfm} and the bundle~$\bar{A}$).
        In fact, $x'$ must use the same paths as $\bar{x}$ on $A' \setminus A$.
        In addition, for each $a \in A_i$, the flow $x'$ must send one unit on paths containing both $(s'_i, s_i)$ and $a$, as otherwise $\lambda(x', S) > kM - 1$ for any set~$S$ containing $(s'_i, s_i)$, $a$, and $k-2$ arcs from $\bar{A}$.
        Because $\bar{x}$ does not send any flow on paths containing both $(s'_i, s_i)$ and $a$, these $|A_i| = d_i$ units of flow must traverse $s_i$-$t_i$-paths in $A$ for $i \in K'$.
        By a capacity argument, one obtains the same for $i = 0$.
        Thus $x \in X_K$, and $\lambda(x, S) \leq \lambda(x', S) \leq kM-1$ for all $S \in \mathcal{S}_k$.
        \qed    
    \end{proof}

\vspace*{-0.2cm}

\section{Conclusion}
\label{sec:conclusion}

Several interesting questions concerning {\mrf}/{\rni} remain.
Notably, the approximability of the problems is wide open:
All known approximation algorithms~\citep{BertsimasNasrabadiStiller2013,bertsimas2013power,baffier2016parametric} achieve approximation ratios linear in $k$, with no hardness of approximation results known (except for strong $\NP$-hardness ruling out fully polynomial-time approximation schemes). 
Our transformation from {\mrfrx} to {\mrfdec} might serve as a promising starting point for obtaining stronger inapproximability results. However, some significant extensions would be necessary so that the constructions can work for near-optimal instead of optimal solutions.

While our results indicate that {\mrfdec} and {\rnidec} are unlikey to be contained in $\NP$ or $\coNP$, pinpointing their exact complexity remains an intriguing question. At the moment, it is not even known whether the problems are contained in PSPACE. This question seems closely related to the size of the support of optimal solutions to the LPs~$\LPMRF$ and $\LPMRFdual$. Are there instances where optima necessarily have exponential support?

Finally, {\mrf}/{\rni} are an example of a combinatorial zero-sum game (where strategy sets are of exponential size and given only implicitly), and exponentially sized LPs arise frequently in this context. 
At the moment, showing hardness for such problems is challenging, and identifying more ``LP-type'' problems to reduce from (such as {\textsc{Fractional Coloring}, {\mrf}, and {\rni}) could facilitate the analysis of such games.

\subsubsection*{Acknowledgements.} The author thanks Martijn van Ee for pointing him to the complexity class $\Delta_2^P$, which led to the proof of \cref{thm:mrf-pnp}. 
This work was supported by the special research fund of KU Leuven (project~C14/22/026).

\bibliographystyle{splncs04nat}
\renewcommand{\bibsection}{\section*{References}}
\bibliography{references}

\clearpage

\appendix

\crefalias{section}{appsec}
\crefalias{subsection}{appsec}
\crefalias{subsubsection}{appsec}

\section{Additional discussion for \cref{sec:intro}}

\subsection{The LP formulations $\LPMRF$ and $\LPMRFdual$}
\label{app:interdiction}

In this appendix, we have a closer look at the LP formulations $\LPMRF$ and $\LPMRFdual$, first introduced in~\citep{DuChandrasekaran2007}. In particular, we show that these LPs actually provide exact descriptions of {\mrf} and {\rni}, respectively.
For the purpose of this discussion, we restate both LPs.

\hspace*{-0.5cm}%
\begin{minipage}{0.3\textwidth}
\begin{alignat*}{3}
    \LPMRF~\max \quad && \elsum{P \in \paths} x_P & \;-\; \lambda \\
    \text{s.t.} \quad && \elsum{P \in \paths : a \in P} x_P & \; \leq \; u_a && \quad \forall\; a \in A \\
    && \elsum{P \in \paths : P \cap S \neq \emptyset} x_P & \; \leq \; \lambda && \quad \forall\; S \in \mathcal{S}_k \\
    && x & \; \geq \; 0
\end{alignat*}
\end{minipage}
\begin{minipage}{0.55\textwidth}
\begin{alignat*}{3}
    \LPMRFdual~\min \quad && \elsum{a \in A} u_a y_a \qquad\quad \\
    \text{s.t.} \quad && \elsum{a \in P} y_a + \elsum{S \in \mathcal{S}_k : P \cap S \neq \emptyset} z_S & \; \geq \; 1 && \quad \forall\; P \in \paths \\
    && \elsum{S \in \mathcal{S}_k} z_S & \; = \; 1 && \\
    && y, z & \; \geq \; 0
\end{alignat*}
\end{minipage}\\[5pt]

First, consider any optimal solution $(x^{\star}, \lambda^{\star})$ to $\LPMRF$. 
Note that $x^{\star} \in X$ by the first set of constraints.
Note further that $\lambda^{\star} = \max_{S \in \mathcal{S}_k} \sum_{P \in \paths : P \cap S \neq \emptyset} x^{\star}_P$, as otherwise $\lambda$ can be decreased, contradicting optimality.
Hence, the optimal value of {\LPMRF} equals
\begin{align*}
    \max_{x \in X} \sum_{P \in \paths} x_P - \max_{S \in \mathcal{S}_k} \ \ \ \elsum{\ P \in \paths : P \cap S \neq \emptyset} x_P \ = \ \max_{x \in X} \min_{S \in \mathcal{S}_k} \ \ \ \elsum{\ P \in \paths : P \cap S = \emptyset} x_P,
\end{align*}
i.e., the $x^{\star}$ is an optimal solution to {\mrf}.

Second, consider any optimal solution $(y^{\star}, z^{\star})$ to $\LPMRFdual$.
Note that $z^{\star} \in \Delta(\mathcal{S}_k)$ by feasibility.
Moreover, optimality implies that $y^{\star}$ must be an optimal solution to the following LP:
\begin{alignat*}{3}
    \min \quad && \elsum{a \in A} u_a & y_a \\
    \text{s.t.} \quad && \elsum{a \in P} y_a & \; \geq \; 1 - {\textstyle \sum_{S \in \mathcal{S}_k : P \cap S \neq \emptyset} z^{\star}_S} && \quad \forall\; P \in \paths \\
    && y & \; \geq \; 0
\end{alignat*}
By duality, we conclude that $\sum_{a \in A} u_a y^{\star}_a$  equals the optimal value of 
\begin{alignat*}{3}
    \max \quad && \elsum{P \in \paths} \big(1 - & {\textstyle \sum_{S \in \mathcal{S}_k : P \cap S \neq \emptyset} z^{\star}_S \big) \cdot x_P}\\
    \text{s.t.} \quad && \elsum{P \in \paths : a \in P} x_P & \; \leq \; u_a && \quad \forall\; a \in A \\
    && x & \; \geq \; 0
\end{alignat*}
and thus the optimal value of $\LPMRFdual$ equals
\begin{align*}
    \textstyle \min_{z \in \Delta(\mathcal{S}_k)} \max_{x \in X} \sum_{P \in \paths} (1 - \sum_{S \in \mathcal{S}_k : P \cap S \neq \emptyset} z_s) \cdot x_P,
\end{align*}
i.e., $z^{\star}$ is an optimal solution to {\rni}.

\subsubsection*{Number of variables.}
We further observe that $\LPMRF$ has $|\paths| + 1$ variables, which in general is exponential in the number of arcs $|A|$ of the underlying digraph.
Moreover, $\LPMRFdual$ has $|A| + \binom{|A|}{k} = \theta(|A|^k)$ variables.

\subsection{Further related work}
\label{app:related-work}

\subsubsection*{Approximation results.}
Various approximation algorithm have been devised for {\mrf} and {\rni}.
\citet{BertsimasNasrabadiStiller2013} use a restriction of $\LPMRF$ to obtain an approximation for {\mrf} whose factor depends on the proportion of flow lost in an optimal solution to.
\citet{bertsimas2013power} showed a guarantee of $\frac{k+1}{k+1+\lfloor k/2 \rfloor \lceil k/2 \rceil}$ for the same algorithm and provide an approximation for {\rni} with the same factor, using an elegant parametric primal-dual analysis.
\mbox{\citet{baffier2016parametric}} show that a different approach based on so-called $k$-route flows~\citep{aggarwal2002multiroute} results in a $\frac{1}{k+1}$-approximation.
No hardness-of-approximation results are known for the problem, except for the strong $\NP$-hardness of {\mrf}~\citep{DisserMatuschke2016} ruling out fully polynomial-time approximation schemes.

\subsubsection*{Network interdiction.}
\textsc{Network Interdiction} (also referred to as \textsc{Network Flow Interdiction} in literature) has been studied extensively. \mbox{\citet{wood1993deterministic}} showed that the problem is $\NP$-hard when $k$ is part of the input (whereas the problem can be solved by enumeration when $k$ is constant).
In some special cases such as planar graphs the problem can be solved in polynomial time~\citep{zenklusen2010network}.
However, the general case is also as hard to approximate as $k$-\textsc{Densest Subgraph}, for which no constant-factor approximations are known~\citep{guruganesh2014improved,chestnut2017hardness}.

\subsubsection*{Other related problems.}
Also numerous variants of robust flows, e.g., with possibility of rerouting~\citep{BertsimasNasrabadiStiller2013,matuschke2020rerouting}, the adversary attacking individual flow paths~\citep{matuschke2016protecting}, or flows over time~\citep{gottschalk2016robust,biefel2025robust} have been studied. 
A particularly closely related variant of {\mrf}/{\rni} was recently studied by~\citet{dahan2021probability}: 
In their security game, the interdictor is not restricted by a budget limiting the number of failing arcs, but interdicting an arc incurs a cost incorporated in the objective. 
This can be seen as a Lagrangian relaxation of the original problems.
\citet{dahan2021probability} showed that optimal primal and dual solutions can be found in polynomial time by projecting the interdictor's randomized strategies to probability marginals on the arcs, when the underlying digraph is acyclic.
This result was later extended to arbitrary digraphs and randomized interdiction on more general systems~\citep{matuschke2023decomposition-full,matuschke2024decomposing}.

% \subsection{Complexity classes}
% \label{app:complexity}

% TODO: describe P^NP[log], Sigma_2^P, Pi_2^P, polynomial hierarchy, PSPACE?

\clearpage

\section{Missing proofs from \cref{sec:mrfr-np}}
\label{app:mrfr-np}

We show the missing direction from the proof of \cref{lem:coloring-path-all-vertices}.

\begin{restatable}{lemma}{restateLemColoringToFlow}\label{lem:coloring-to-flow}
    If $\fraccol{G} \leq \ell$, then there is an $s$-$t$-flow in $D$ for capacities $u \equiv 1$ with $\sum_{P \in \paths} x_P = \ell$ and $\max_{S \in \mathcal{S}_{H,2}} \lambda(x, S) \leq 1$.
\end{restatable}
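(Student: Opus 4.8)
The plan is to reverse the argument given for the forward direction of \cref{lem:coloring-path-all-vertices}. Suppose $\fraccol{G} \leq \ell$, so there is a fractional coloring $y \in \mathbb{Q}_+^{\mathcal{I}(G)}$ with $\sum_{I \ni v} y_I = 1$ for every $v \in V$ and $\sum_{I \in \mathcal{I}(G)} y_I = \ell'$ for some $\ell' \leq \ell$. By adding a suitable amount of weight on singleton independent sets we may assume $\ell' = \ell$ exactly (adding a singleton $\{v\}$ would violate the equality $\sum_{I \ni v} y_I = 1$, so instead I would pad by replacing the coloring with the standard trick: since $\fraccol{G}\le\ell$, there is a \emph{distribution} over independent sets covering each vertex with probability exactly $1/\ell$ when $\ell'=\ell$; more carefully, scale so that $\sum_I y_I$ is an integer multiple, or simply work with $\sum_{P} x_P = \ell' \le \ell$ and observe we can afford to route strictly less than $\ell$ units — but the lemma wants exactly $\ell$, so I will first note that WLOG $\fraccol{G}=\ell$ by the monotonicity of the statement, or handle the slack by noting that routing $\ell'\le\ell$ units already gives a flow, and then augmenting by an arbitrary unit $s$-$t$-flow along a single path repeated $\ell-\ell'$ times does not increase $\max_{S\in\mathcal S_{H,2}}\lambda(x,S)$ beyond $1$ as long as those extra paths each contain exactly one arc of each incompatible pair — which one can arrange by choosing, for each bundle $A_i$ with $e_i=\{v,w\}$, the arc $a_{iv}$, giving an independent set path). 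I would fold this into the argument cleanly by simply assuming $\fraccol G = \ell$ after an initial remark.

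Given the coloring $y$, for each independent set $I$ with $y_I > 0$ I would construct a single $s$-$t$-path $P_I$ in $D$ as follows: for each bundle $A_i$ representing edge $e_i = \{v,w\}$, the path $P_I$ uses the arc $a_{iv}$ if $v \in I$, the arc $a_{iw}$ if $w \in I$, and an arbitrary ``dummy'' arc of $A_i$ (one not equal to $a_{iv}$ or $a_{iw}$) if $I \cap e_i = \emptyset$. Since $I$ is independent, at most one of $v,w$ lies in $I$, so this is well-defined; here I use the hypothesis $\ell \ge 2$ to guarantee that $|A_i| = \ell \ge 2$ so the two named arcs are distinct, and when $\ell \ge 3$ there is even a genuine dummy arc, while for $\ell = 2$ the ``dummy'' case forces a choice between $a_{iv}$ and $a_{iw}$ — but that is harmless. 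Then set $x_{P_I} := y_I$ (summing contributions if several $I$ yield the same path). This $x$ is a flow: each bundle $A_i$ carries total flow $\sum_I y_I = \ell$ spread over its arcs, and arc $a_{iv}$ carries $\sum_{I \ni v} y_I = 1 \le 1 = u_{a_{iv}}$, while the at most one dummy arc of $A_i$ used by paths carries $\sum_{I : I \cap e_i = \emptyset} y_I \le \ell$ — wait, this can exceed $1$. So I must be more careful and spread the ``dummy'' usage across \emph{all} non-named arcs of $A_i$; since there are $\ell - 2$ such arcs only when $\ell \ge 3$, for $\ell = 2$ there are none, so for $\ell=2$ every path must use one of $a_{iv}, a_{iw}$. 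The fix: for $\ell = 2$, a fractional $2$-coloring means $G$ is bipartite-like; in general for $\ell=2$ we'd need $\sum_{I:I\cap e_i=\emptyset}y_I = 0$, i.e. every $I$ with $y_I>0$ meets every edge — equivalently $y$ is supported on sets that are maximal independent... This is getting delicate, so the cleaner route is: use a \emph{uniform} distribution viewpoint. Write $y$ as a convex-combination-like sum $\ell$ copies worth; actually the robust fix is to not worry about dummy arcs carrying too much — instead, for bundle $A_i$, distribute the total flow $\ell$ so that $a_{iv}$ gets exactly $\sum_{I\ni v}y_I = 1$, $a_{iw}$ gets exactly $1$, and the remaining $\ell - 2$ units are split \emph{equally} among the other $\ell-2$ arcs of $A_i$ (one unit each), which is consistent and possible precisely because $\sum_{I \ni v}y_I + \sum_{I\ni w}y_I = 2$ counts each $I$ meeting $e_i$ once (as $I$ independent meets $e_i$ in at most one vertex) so $\sum_{I: I\cap e_i \ne\emptyset}y_I \le 2$, hence $\sum_{I:I\cap e_i=\emptyset}y_I = \ell - \sum_{I:I\cap e_i\ne\emptyset}y_I \ge \ell - 2$, and actually equals $\ell-2$ only if every edge is met by the full weight of all but $\ldots$ — hmm, it could be strictly larger, meaning $\sum_{I\ni v}y_I + \sum_{I\ni w}y_I < 2$, contradicting $\sum_{I\ni v}y_I=1$. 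So in fact $\sum_{I:I\cap e_i=\emptyset}y_I = \ell - 2$ \emph{exactly}, which is what makes the equal split of the dummy weight work out to one unit per dummy arc. Good — so the construction is consistent.

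Having built $x \in X$ with $\sum_P x_P = \ell$ and every arc saturated to capacity $1$, I would verify the robustness bound $\max_{S \in \mathcal{S}_{H,2}} \lambda(x, S) \le 1$. A scenario $S \in \mathcal{S}_{H,2}$ with $|S| = 2$ is an edge of $H$, i.e. $S = \{a_{iv}, a_{jv}\}$ for some $v \in e_i \cap e_j$, $i \ne j$; a scenario with $|S| = 1$ is a single arc carrying at most $1$ unit, so $\lambda(x,S)\le 1$ trivially. For the two-arc case, I compute $\lambda(x, S) = x[a_{iv}] + x[a_{jv}] - \sum_{P \ni a_{iv}, a_{jv}} x_P = 2 - \sum_{P \ni a_{iv}, a_{jv}} x_P$, so it suffices to show that the flow on paths containing both $a_{iv}$ and $a_{jv}$ is at least $1$. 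By construction, a path $P_I$ contains $a_{iv}$ iff $v \in I$ (for the genuine-choice paths) and likewise $a_{jv}$ iff $v \in I$; so the paths built from independent sets $I$ with $v \in I$ contain both arcs, and they carry total weight $\sum_{I \ni v} y_I = 1$. The only subtlety is that some ``dummy'' choices in the $\ell=2$ regime might have put flow on $a_{iv}$ via a path where $v \notin I$ — but in the consistent construction above, $a_{iv}$ carries \emph{exactly} $1$ unit, all of it from paths with $v \in I$, so there is no such leakage. Hence $\sum_{P \ni a_{iv}, a_{jv}} x_P \ge 1$ and $\lambda(x,S) \le 1$, completing the proof. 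I expect the main obstacle to be precisely this bookkeeping around the dummy arcs and the $\ell = 2$ edge case — making sure the flow is feasible (no arc overloaded) while simultaneously keeping all the named arcs $a_{iv}$ carrying flow \emph{only} from paths that genuinely include vertex $v$, so that the intersection-flow lower bound goes through cleanly.
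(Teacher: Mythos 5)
Your approach is essentially the paper's: build a path flow from the fractional coloring so that each named arc $a_{iv}$ is used precisely by paths built from sets $I$ with $v\in I$ (hence carries unit flow), while the $\ell-2$ unnamed arcs of each bundle share the remaining $\ell-2$ units. Then for any $S=\{a_{iv},a_{jv}\}\in\mathcal{S}_{H,2}$ the same unit of flow sits on both arcs, so $\lambda(x,S)=2-\sum_{P\ni a_{iv},a_{jv}}x_P=1$. Your self-correction that the dummy flow must be spread over all $\ell-2$ unnamed arcs, and your check that $\sum_{I:I\cap e_i=\emptyset}y_I=\ell-2$, both agree with the paper's explicit construction via the transversal sets $\mathcal{Q}_I$ (which you should write out rather than specifying arc flows and gesturing at a decomposition).

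The one genuine gap is the padding step for the case $\chi_{\text{f}}(G)<\ell$. None of the alternatives you float works: adding weight to singleton independent sets destroys $\sum_{I\ni v}y_I=1$ (you notice this); ``WLOG $\chi_{\text{f}}(G)=\ell$ by monotonicity'' is not a valid reduction, since the lemma demands a flow of value exactly $\ell$ regardless of $\chi_{\text{f}}(G)$; and the augmentation you actually propose — routing $\ell-\ell'$ extra units along a path that picks $a_{iv}$ from each bundle — would overload those arcs, which already carry unit flow. The paper's fix is to set $y_\emptyset:=\ell-\sum_{I\neq\emptyset}y_I$ and route that weight along paths consisting entirely of unnamed (dummy) arcs, one per bundle. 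This is also the \emph{only} padding consistent with your own arc-flow computation (named arcs at exactly $1$, dummy arcs sharing $\ell-2$) and with the invariant $a_{iv}\in P\Rightarrow v\in I_P$ that your robustness argument relies on. Once you pad this way, your construction and verification go through.
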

%\restateLemColoringToFlow*
\begin{proof}
    Let $y \in \mathbb{Q}_+^{\mathcal{I}(G)}$ be an optimal fractional coloring, i.e., $\sum_{I \in \mathcal{I}(G) : v \in I} y_I = 1$ for all $v \in V$ and $\sum_{I \in \mathcal{I}(G)} y_I = \fraccol{G} \leq \ell$.
    By setting $y_{\emptyset} := \ell - \sum_{I \in \mathcal{I}(G), I \neq \emptyset} y_I$ we can ensure that $\sum_{I \in \mathcal{I}(G)} y_I = \ell$ without loss of generality. %and define $\mathcal{I}'(G) := \mathcal{I}(G) \cup \{\emptyset\}$.
    
    For $I \in \mathcal{I}(G)$ define $P_I :=  \{a_{iv} : v \in I, e_i \in \delta_G(v)\}$.
    Note that $|P_I \cap A'_i| \leq 1$ for all $i \in [m]$ because $I$ is an independent set in $G$.
    Let us further define $\bar{A}_i := A'_i \setminus \{a_{iv}, a_{iw}\}$, where $e_i = \{v, w\}$, and let $\mathcal{A}_I := \{\bar{A}_i \st I \cap e_i = \emptyset\}$.
    Moreover, let $\mathcal{Q}_I := \{Q \st |Q \cap A| = 1 \ \forall\, A \in \mathcal{A}_I\}$ denote the set of transversals of $\mathcal{A}_I$.
    Note that by construction $P_I \cup Q$ is an $s$-$t$-path in $D'$ for every $Q \in \mathcal{Q}_I$.
    
    We construct a flow $x$ by setting $x_{P_I \cup Q} := \frac{y_I}{|\mathcal{Q}_I|}$ for every $I \in \mathcal{I}(G)$ and every $Q \in \mathcal{Q}_I$.
    Note that $\sum_{P \in \paths} x_P = \sum_{I \in \mathcal{I}(G)} y_I = \ell$ by construction.
    Moreover, $x[a_{iv}] = \sum_{I \in \mathcal{I}(G) : v \in I} y_I = 1$ for all $v \in V$ and $i \in [m]$ with $e_i \in \delta_G(v)$, again by construction.

    Now consider any $S \in \mathcal{S}_{H,2}$.
    If $|S| = 1$ then $\lambda(x, S) = 1$.
    If $|S| = 2$, then $S$ must be an edges in $H$. By construction of $E_H$, there must be some $v \in V$ and $i, j \in [m]$ such that $S = \{a_{vi}, a_{vj}\}$ and $v \in e_i \cap e_j$.
    Note that our construction of $x$ implies that for every $P \in \paths$ with $x_P > 0$ it holds that $a_{vi} \in P$ if and only if $a_{vj} \in P$. 
    Hence $\lambda(x, S) = \sum_{P : P \cap S \neq \emptyset} x_P = x[a_{vi}] = x[a_{vj}] = 1$. \qed
\end{proof}

\clearpage

\section{{\mrfrx} is $P^{\NP[\log]}$-hard (Proof of \cref{thm:mrf-pnp})}
\label{app:mrfr-pnp}

In this section, we prove the following theorem, which together with \cref{thm:main-reduction} implies \cref{thm:mrf-pnp}.
\begin{theorem}\label{thm:mrfr-pnp}
    {\mrfrx} is $P^{\NP[\log]}$-hard.
\end{theorem}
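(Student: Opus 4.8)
The plan is to prove $P^{\NP[\log]}$-hardness of {\mrfrx} via \emph{Wagner's Theorem}, as announced in the technical overview. Recall that Wagner's Theorem provides a generic route to $P^{\NP[\log]}$-hardness: it suffices to exhibit a polynomial-time reduction from the following parity problem. Given a sequence $G_1, \dots, G_{2t}$ of instances of some $\NP$-complete problem that is \emph{monotone} in the sense that $G_j$ being a YES instance implies $G_{j-1}$ is a YES instance, decide whether the number of YES instances among $G_1, \dots, G_{2t}$ is odd. I would instantiate this with a suitable $\NP$-complete problem for which monotone sequences are easy to build, and the most natural choice given the machinery already in hand is \textsc{Fractional Graph Coloring}: by padding (e.g., taking disjoint unions with cliques or adjusting the target $\ell$), one can turn an arbitrary sequence of \textsc{Fractional Graph Coloring} instances into a monotone one, so that Wagner's Theorem applies with this problem.

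The core of the reduction is to encode the entire monotone sequence inside a \emph{single} {\mrfrx} instance, using the compatibility graph $H$ as the extra degree of freedom. Concretely, I would chain together gadgets: for each index $j$, build a block of the digraph that behaves like the \textsc{Fractional Graph Coloring}-to-{\mrfrx} construction from \cref{sec:mrfr-np} applied to $G_j$, so that the planner can push the target amount of flow through block $j$ if and only if $\chi_{\text{f}}(G_j) \le \ell_j$. The blocks are placed in series along the $s$-$t$ axis (as in the bundle-of-parallel-arcs construction, where each $A_i$ corresponds to an edge), and the compatibility graph edges within block $j$ enforce exactly the independent-set constraints of $G_j$ as in \cref{lem:coloring-path-all-vertices}. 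The subtlety is that the interdictor with budget $k=2$ can attack at most one pair of arcs; the design must ensure that the worst-case loss bound $k-1 = 1$ simultaneously certifies feasibility in \emph{every} block, so that a feasible flow of the prescribed total value exists iff \emph{all} $G_j$ up to some threshold are YES instances — and then a final accounting gadget should read off the \emph{parity} of that threshold. Here one also wants to use \textsc{Clique} instances in $H$ (as hinted in the overview: ``we encode the odd instances as \textsc{Clique} problems in the compatibility graph''): the odd-indexed instances are verified not by a coloring gadget but by forcing the planner onto paths that must avoid/intersect certain cliques in $H$, so that the total routable flow increments by one precisely when an additional pair $(G_{2j-1}, G_{2j})$ flips from NO to YES, making the parity directly observable as (total flow value) $\bmod\ 2$ against the demand $\theta$.

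In the order I would carry this out: (1) state Wagner's Theorem precisely and fix the $\NP$-complete base problem as \textsc{Fractional Graph Coloring}, verifying monotonicity of padded sequences; (2) describe the block gadget for a single instance, essentially recalling \cref{sec:mrfr-np} but with per-block target $\ell_j$, and prove the block-level ``flow of value $\ell_j$ survives iff $\chi_{\text{f}}(G_j)\le\ell_j$'' statement reusing \cref{lem:coloring-path-all-vertices}; (3) compose the blocks in series with a carefully chosen global demand $\theta$ and a parity-reading gadget, and set $k=2$; (4) prove correctness in two directions — a feasible {\mrfrx} solution of value $\theta$ forces exactly the ``YES-prefix'' structure and its parity, and conversely any valid prefix yields such a solution; (5) check the reduction is polynomial-time (the number of blocks is polynomial and each block has polynomial size). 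Finally, invoke \cref{thm:main-reduction} to transfer $P^{\NP[\log]}$-hardness from {\mrfrx} to {\mrfdec} and {\rnidec}, completing the proof of \cref{thm:mrf-pnp}.

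The main obstacle I anticipate is step (3)–(4): getting a \emph{single} {\mrfrx} instance with a \emph{single} budget $k=2$ to faithfully report a parity, rather than just a threshold. The danger is that the interdictor's choice of which pair to attack could be ``spent'' on a block that is not the critical one, so that the loss bound $k-1$ does not actually pin down the whole prefix; conversely, the planner might be able to cheat by rerouting flow between blocks. Overcoming this requires a construction where (a) the blocks are coupled so that a shortfall in any one block propagates to the total surviving flow, forcing the worst-case interdiction to always target the ``weakest'' block, and (b) the bookkeeping (via auxiliary arcs, their capacities, and compatibility-graph cliques encoding the odd instances) is arranged so the attainable total flow value, read modulo $2$, equals the parity in question. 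I expect this to be the delicate, construction-heavy part of the argument, with the rest being relatively routine adaptations of \cref{sec:mrfr-np} and Wagner's framework.
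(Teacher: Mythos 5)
Your high-level plan matches the paper — Wagner's Theorem, using the compatibility graph to smuggle in \textsc{Clique} constraints for the odd-indexed instances and the coloring gadget for the even-indexed ones — but the crucial mechanism by which a single {\mrfrx} instance reports the parity is different from what the paper does, and the version you sketch has a gap that I don't think can be closed as stated.

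The paper never tries to ``read off the parity as (total flow value) $\bmod\ 2$''. {\mrfrx} is a decision problem with a \emph{fixed} demand $\theta$ given in the input; there is no continuous ``attainable flow value'' that the reduction can inspect modulo $2$. Instead, the paper exploits the monotonicity of Wagner's sequence directly: under monotonicity, the YES instances form a prefix, so the number of YES instances is odd if and only if there exists some $r$ with $I_{2r-1}$ YES and $I_{2r}$ NO. This turns parity into a \emph{disjunction} over pairs $(I_{2r-1}, I_{2r})$ of the predicate ``first is YES and second is NO''. The paper's \cref{lem:clique-flow} builds, for each pair, a single intermediate {\mrfrx} instance $\bar{J}_r$ that is a NO instance exactly when this predicate holds (encoding $I_{2r-1}$ as a \textsc{Clique} constraint in the compatibility graph, budget $\bar k_r = n+2$ depending on the target clique size, and $I_{2r}$ as a $k'=2$ coloring gadget via \cref{thm:mrfr-2-np}). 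The $\bar{J}_r$ are then composed \emph{in parallel} — identifying all sources and all sinks but keeping the digraphs and compatibility graphs otherwise disjoint — so that the combined instance is YES iff every $\bar{J}_r$ is YES (\cref{lem:all-yes}); hence it is NO iff some $\bar{J}_r$ is NO, iff the parity is odd (\cref{lem:exists-no}). Finally, closure of $P^{\NP[\log]}$ under complement transfers hardness from the complement to {\mrfrx} itself.

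Concretely, three things in your outline would need to change. First, you fix $k=2$ globally, but the \textsc{Clique}-in-$H$ encoding of the odd instances forces the budget to be roughly the target clique size plus two; the paper handles this by padding the intermediate instances so they all share a common, possibly large, budget $k$. Second, you compose blocks \emph{in series}, which entangles feasibility across blocks and is precisely the difficulty you flag in steps (3)–(4); the paper sidesteps this entirely by composing in parallel, where the flow and the interdictor's choices decouple block by block. Third, the ``parity-reading gadget'' is not needed and, as a flow-value-mod-two device, cannot exist within the {\mrfrx} framework; the monotonicity-to-disjunction trick (odd count $\Leftrightarrow$ some pair flips YES$\to$NO) is the idea that makes the whole reduction a pure AND of per-pair conditions, which is exactly what parallel composition implements.
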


To prove \cref{thm:mrfr-pnp}, we apply a theorem by \citet{wagner1987more}. 
For the formal statement of Wagner's result, we introduce the following notation.
For a decision problem $\mathcal{A}$, let $Y(\mathcal{A})$ denote the set of YES instances of $\mathcal{A}$.
We say that a sequence $I_1, \dots, I_{\ell}$ of instances of $\mathcal{A}$ is \emph{monotone} if $A_i \in \mathcal{A}$ implies $A_{j} \in Y(\mathcal{A})$ for all $j < i$.

\begin{theorem}[{\citep[Theorem~5.2]{wagner1987more}}]\label{thm:wagner}
    Let $\mathcal{A}$ be some $\NP$-complete decision problem and let $\mathcal{B}$ be some decision problem.
    If there is a polynomial-time computable function $f$ that takes as input $\ell \in \mathbb{N}$ and a monotone sequence $I_1, \dots, I_{2\ell}$ of instances of $\mathcal{A}$ and outputs an instance $B$ of $\mathcal{B}$ such that $B \in Y(\mathcal{B})$ if and only if $|\{i \in [2\ell] \st I_i \in Y(\mathcal{A})\}|$ is odd, then $\mathcal{B}$ is $P^{\NP[\log]}$-hard.
\end{theorem}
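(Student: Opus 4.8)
The plan is to reconstruct \citeauthor{wagner1987more}'s proof, whose core is a polynomial-time reduction of an arbitrary language $L \in P^{\NP[\log]}$ to the ``parity of a monotone sequence'' problem for \textsc{Sat}; the hypothesized function $f$, together with an $\NP$-completeness Karp reduction of \textsc{Sat} to $\mathcal{A}$, then finishes the job. First I would fix $L \in P^{\NP[\log]}$, decided by a deterministic polynomial-time oracle Turing machine $M$ that makes $q(n) = O(\log n)$ queries to a \textsc{Sat} oracle on inputs of length $n$. The first technical step is the familiar adaptive-to-parallel move: for each of the $2^{q(n)} = \mathrm{poly}(n)$ possible answer strings $\sigma$, simulate $M$ on $x$ while feeding it the answers $\sigma$, and collect every query formula ever posed across all $\sigma$ into a single list $\psi_1,\dots,\psi_m$ with $m = q(n)\cdot 2^{q(n)} = \mathrm{poly}(n)$. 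This yields a polynomial-time function $h(x,\cdot)\colon\{0,1\}^m \to \{0,1\}$ that re-simulates $M$, reading oracle answers off its bit-vector argument, so that $x \in L$ iff $h(x,a^\star) = 1$, where $a^\star_i := [\psi_i \in \textsc{Sat}]$.

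Next I would isolate the \emph{census} $t := |\{i : \psi_i \in \textsc{Sat}\}|$. The key point is that once $t$ is known, ``$x \in L$'' becomes an $\NP$ predicate $\mathrm{Acc}(x,t)$: guess a $0/1$-vector $b$ with exactly $t$ ones together with satisfying assignments for the formulas indexed by those ones, and check $h(x,b) = 1$; any such $b$ must equal $a^\star$, since it exhibits $t$ satisfiable formulas and there are only $t$ of them. Likewise ``$t \ge j$'' is an $\NP$ predicate for each $j$. Writing $\beta := [\,x \in L\,] = [\mathrm{Acc}(x,t)]$, I would then build a sequence $\Phi_1,\dots,\Phi_{2m+1}$ of $\NP$ predicates whose truth pattern is exactly ``$i \le 2t+\beta$'': for even $i = 2j$ set $\Phi_i := (t \ge j)$, and for odd $i = 2j+1$ set $\Phi_i := (t \ge j+1) \,\vee\, \big((t \ge j) \wedge \mathrm{Acc}(x,j)\big)$. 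A three-way case check ($t<j$, $t=j$, $t>j$) confirms $\Phi_i$ holds iff $i \le 2t+\beta$ — crucially, the awkward equality test ``$t = j$'' never has to be expressed, because $t > j$ already satisfies the first disjunct, so no $\coNP$ clause sneaks in. Hence the sequence is monotone, and its number of true entries is exactly $2t + \beta \le 2m+1$, which is odd iff $x \in L$. Turning each $\Phi_i$ into a \textsc{Sat} instance $\varphi_i$ via Cook--Levin, and appending one fixed unsatisfiable formula so the length becomes even with $\ell := m+1$ (length $2\ell = 2m+2$), produces a polynomial-time map from $x$ to a monotone sequence $\varphi_1,\dots,\varphi_{2\ell}$ of \textsc{Sat} instances having an odd number of satisfiable members iff $x \in L$.

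Finally, since $\mathcal{A}$ is $\NP$-complete there is a Karp reduction $g$ from \textsc{Sat} to $\mathcal{A}$; applying $g$ coordinatewise preserves both monotonicity and the number of YES instances, so $B := f\big(\ell,\, g(\varphi_1),\dots,g(\varphi_{2\ell})\big)$ satisfies $B \in Y(\mathcal{B})$ iff the count is odd iff $x \in L$. Composing all these polynomial-time steps gives $L \le_m^{p} \mathcal{B}$, and since $L \in P^{\NP[\log]}$ was arbitrary, $\mathcal{B}$ is $P^{\NP[\log]}$-hard. I expect the main obstacle to be precisely the encoding in the second paragraph: one needs a monotone sequence of $\NP$-expressible predicates that simultaneously performs the binary search for the census $t$ \emph{and} reads off the final answer bit $\mathrm{Acc}(x,t)$. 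The naive idea of appending a single ``parity-correction'' formula fails, since a true formula at the end of a monotone sequence forces every earlier formula to be true; the ``$i \le 2t+\beta$'' trick above is what resolves this, and checking that every $\Phi_i$ is genuinely in $\NP$ is the delicate point.
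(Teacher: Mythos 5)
The paper offers no proof of this statement to compare against: it is imported verbatim from Wagner's paper (Theorem~5.2 there) and used as a black box, so what you have written is a reconstruction of an external result rather than an alternative to an in-paper argument. Your reconstruction is correct and is essentially the standard census (``mind-change'') proof underlying Wagner's criterion. The adaptive-to-parallel step is valid because $2^{q(n)}$ is polynomial for $q(n)=O(\log n)$, and every query posed when answers are read off a bit-vector $b$ was already collected, since the computation path under $b$ coincides with the path under some enumerated answer string. Your key point — that $\mathrm{Acc}(x,j)$ is an $\NP$ predicate whose witness $b$ is forced to equal $a^\star$ precisely when $j$ equals the true census $t$ — is right, and the three-way case analysis correctly shows that its unpredictable behaviour for $j\neq t$ is harmless: for $t>j$ the disjunct $t\ge j+1$ dominates, and for $t<j$ no witness exists, so the truth pattern of $\Phi_1,\dots,\Phi_{2m+1}$ is exactly $i\le 2t+\beta$, giving a monotone sequence with an odd number of YES entries iff $x\in L$. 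Coordinatewise application of a Karp reduction from \textsc{Sat} to $\mathcal{A}$ preserves both monotonicity and the count, and padding with one unsatisfiable instance keeps the YES entries a prefix while making the length even, so composing with $f$ yields a polynomial-time many-one reduction from an arbitrary $L\in P^{\NP[\log]}$ to $\mathcal{B}$. The only detail I would add for completeness is that duplicate query formulas in the collected list are harmless: the census then counts indices with multiplicity, and the unique valid witness for $\mathrm{Acc}(x,t)$ is still $a^\star$, which assigns duplicates consistently.
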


To apply \cref{thm:wagner}, we first show the following lemma.

\begin{restatable}{lemma}{restateLemCliqueFlow}\label{lem:clique-flow}
    Let $\mathcal{A}$ be some $\NP$-complete decision problem.
    There exists a poly\-nomial-time computable function $g$ that takes as input two instances $I, I'$ of $\mathcal{A}$ and outputs an instance $\bar{J}$ of {\mrfrx} such that $\bar{J} \in Y(\textup{\mrfrx})$ if and only if $I \notin Y(\mathcal{A})$ or $I' \in Y(\mathcal{A})$.
\end{restatable}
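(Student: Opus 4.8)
The plan is to reduce $\mathcal{A}$ simultaneously to two problems we can already embed into \mrfrx: for $I'$ we reuse the \textsc{Fractional Graph Coloring} gadget of \cref{sec:mrfr-np}, and for $I$ we encode a \textsc{Clique} instance into the compatibility graph. Concretely, $g$ first computes, via the standard chain $\mathcal{A}\le_p\textsc{3-Sat}\le_p\textsc{Clique}$ (padding the formula with trivial always-true clauses so that the number of clauses is $c\ge 2$), a graph $G_I$ on $3c$ vertices together with $q:=c$ such that $G_I$ contains no clique on more than $q$ vertices, and contains a $q$-clique if and only if $I\in Y(\mathcal{A})$. It also computes, using \cref{thm:fractional-coloring-hardness}, a graph $G'$ and an integer $\ell\ge 2$ (w.l.o.g., as in \cref{sec:mrfr-np}) with $\fraccol{G'}\le\ell$ iff $I'\in Y(\mathcal{A})$.

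The output is an \mrfrx\ instance $\bar J$ with interdiction budget $k:=q+2$, formed as the parallel composition of a \emph{clique lane} $B$ — a bundle of $|V(G_I)|$ parallel unit-capacity $s$-$t$ arcs $\{b_v : v\in V(G_I)\}$ — and a \emph{coloring lane} $C$ — the DAG produced by the reduction of \cref{sec:mrfr-np} on $(G',\ell)$, with its distinguished arcs $a_{iv}$ — where the two sources are identified to $s$, the two sinks to $t$, and the arc sets are otherwise disjoint. The demand is $\theta:=|V(G_I)|+\ell$, which is exactly the capacity of the cut $\{s\}$, so every $s$-$t$-flow of value $\theta$ saturates all of $B$ and routes exactly $\ell$ units through $C$, saturating every bundle of $C$. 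The compatibility graph $H$ puts $G_I$ on $B$; on $C$ it links each pair $a_{iv},a_{jv}$ for \emph{consecutive} edges $e_i,e_j$ at a vertex $v$ of $G'$ (this still forces color classes by the argument in the proof of \cref{lem:coloring-path-all-vertices}, but keeps the largest clique inside $C$ of size $2$); it makes every $b_v$ adjacent to every distinguished arc of $C$; and it leaves all remaining arcs isolated.

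The point of this design is that, since $G_I$ has no clique larger than $q$, $H$ has no $(q+1)$-clique inside $B$ and no $k$-clique there; hence a clique of $H$ of size $k=q+2$ exists precisely when it consists of a $q$-clique of $G_I$ together with a linked pair $a_{iv},a_{jv}$, i.e.\ precisely when $I\in Y(\mathcal{A})$. Moreover $B$ and $C$ are parallel, so no $s$-$t$-path meets arcs of both and $\lambda(x,S_B\cup S_C)=\lambda(x,S_B)+\lambda(x,S_C)$. I would then verify four cases. (i) If $I\notin Y(\mathcal{A})$: any scenario $S$ has $|S\cap B|\le q-1$ and $|S\cap C|\le 2$, so $\lambda(x,S)\le(q-1)+2=k-1$ for the uniform flow; $\bar J$ is a YES instance. (ii) If $I'\in Y(\mathcal{A})$: route $C$ as color classes (as in \cref{lem:coloring-path-all-vertices}, so every linked pair loses exactly $1$ unit) and $B$ uniformly; then $\lambda(x,S)\le|S\cap B|+1\le q+1=k-1$, so $\bar J$ is YES regardless of $I$. (iii)+(iv) Conversely, suppose $I\in Y(\mathcal{A})$ and $\bar J$ is YES, with witnessing flow $x$; since $x$ saturates $B$, for every $q$-clique $C_I$ of $G_I$ and every linked pair $a_{iv},a_{jv}$ the set $S=C_I\cup\{a_{iv},a_{jv}\}$ is a valid scenario, and $\lambda(x,S)=\lambda(x,C_I)+\lambda(x,\{a_{iv},a_{jv}\})=q+\lambda(x,\{a_{iv},a_{jv}\})\le k-1$ forces $\lambda(x,\{a_{iv},a_{jv}\})\le 1$ for all linked pairs; by the proof of \cref{lem:coloring-path-all-vertices} this yields $\fraccol{G'}\le\ell$, i.e.\ $I'\in Y(\mathcal{A})$. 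Hence $\bar J$ is a YES instance iff $I\notin Y(\mathcal{A})$ or $I'\in Y(\mathcal{A})$, and every step is clearly polynomial.

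The main obstacle is getting $H$ exactly right, and it is resolved by three design choices working together: using a \textsc{3-Sat}-based \textsc{Clique} instance, which caps the clique number of $G_I$ at the threshold $q$ so that no oversized clique slips into $H$ and makes $\bar J$ trivially a NO instance; realizing the \textsc{Clique} arcs as a \emph{saturated} bundle in a lane run \emph{parallel} to the coloring gadget, so that interdicting a $q$-clique there consumes exactly $q$ of the budget with zero overlap on the coloring flow; and restricting the within-$C$ links to consecutive edges, so that cliques inside $C$ stay of size $2$ while the forcing argument of \cref{lem:coloring-path-all-vertices} still goes through. The delicate direction is the converse: one must be sure that robustness against the combined scenarios $C_I\cup\{a_{iv},a_{jv}\}$ genuinely propagates to a full fractional coloring of $G'$ — which is exactly what the reuse of \cref{lem:coloring-path-all-vertices} provides.
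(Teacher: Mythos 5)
Your proof is correct and follows essentially the same strategy as the paper: encode $I$ as a \textsc{Clique} instance whose vertices become parallel $s$-$t$-arcs with compatibility edges mirroring the graph, compose this in parallel with a budget-$2$ {\mrfrx} gadget for $I'$, set the combined budget to the clique threshold plus~$2$, and let saturation force the clique arcs to consume exactly their size from the budget. The only notable implementation difference is how you cap cliques inside the coloring lane at size~$2$ --- you thin $E_H$ to consecutive-edge links and rerun the forcing argument of \cref{lem:coloring-path-all-vertices} by chaining, whereas the paper works with an arbitrary budget-$2$ {\mrfrx} instance and subdivides arcs so that its compatibility graph becomes a matching; both variants are sound.
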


\begin{proof}
    We first transform instances $I$ and $I'$ of $\mathcal{A}$ to intermediary instances as follows.
    
    We transform $I$ into an instance $J^{\textsc{Clique}}$ of \textsc{Clique} with graph $G = (V, E)$ and integer $n \in \mathbb{N}$ such that $G$ has a clique of size $n$ if and only if $I \in Y(\mathcal{A})$. 
    Note that this transformation can be done in polynomial time since \textsc{Clique} is $\NP$-complete~\citep{garey2002computers}, and that we may additionally assume that $G$ has no clique of size strictly larger than $n$ (this is implicit in the reduction from \textsc{Satisfiability} used in \citep{garey2002computers}).

    We transform the instance $I'$ to an instance $J^{\textsc{Flow}}$ of {\mrfrx} with interdiction budget $k' = 2$, digraph $D' = (V', A')$, source $s$, sink $t$, demand $\theta'$, and compatibility graph $H' = (A', E')$, such that $J^{\textsc{Flow}}$ admits an $s$-$t$-path flow $x$ in $D'$ of value $\theta'$ with $\max_{S \in \mathcal{S}_{H',2}} \lambda(x, S) \leq k' - 1$ if and only if $I' \in Y(\mathcal{A})$.
    Note that also this transformation can be done in polynomial time due to the $\NP$-completeness of {\mrfrx} with $k = 2$ shown in \cref{thm:mrfr-2-np}. We may assume without loss of generality that the capacity of a minimum $s$-$t$-cut in $D'$ is exactly~$\theta'$ and that $H'$ is a matching (by subdividing any arcs in $A'$ that have degree larger $1$ in $H'$).

    Now let $\bar{A} := A' \cup \{a_v \st v \in V\}$, where each $a_v$ for $v \in V$ is an arc from $s$ to~$t$.
    Moreover, let $\bar{E} := E' \cup \big\{ \{a_v, a_w\} \st \{v, w\} \in E \big\} \cup \big\{ \{a_v, a'\} \st v \in V, a' \in A' \big\}$.
    We obtain the {\mrfrx} instance $\bar{J}$ using the digraph $\bar{D} := (V', \bar{A})$ with the same node set $V'$, source $s$, and sink $t$ as $J^{\textsc{Flow}}$, with interdiction budget $\bar{k} := n + 2$, demand $\bar{\theta} := \theta' + |V|$, and with compatibility graph $\bar{H} = (\bar{A}, \bar{E})$.
    See \cref{fig:reduction-pnp} for a depiction of the construction.

    Let $\bar{\paths}$ denote the set of $s$-$t$-paths in $\bar{D}$ and let $X_{\bar{\theta}}$ denote the set of $s$-$t$-flows in $\bar{D}$ with respect to unit capacities and value $\sum_{P \in \bar{\paths}} x_P = \bar{\theta}$.
    We show that there is $x \in X_{\bar{\theta}}$ with $\max_{S \in \mathcal{S}_{\bar{H},2}} \lambda(x, S) \leq \bar{k} - 1$
    if and only if $I \notin Y(\mathcal{A})$ or $I' \in Y(\mathcal{A})$.
    To this end, let $X_{\theta'}$ denote the set of $s$-$t$-flows in $D'$ with unit capacities and value $\sum_{P \in \paths'} x_P = \theta'$ (where $\paths'$ is the set of $s$-$t$-paths in $D'$).

    Note that there is a one-to-one correspondence between flows $X_{\bar{\theta}}$ and flows in $X_{\bar{\theta}}$:
    Given $x' \in X_{\theta'}$, the corresponding flow $\bar{x} \in X_{\bar{\theta}}$ is given by $\bar{x}_P = x'_P$ for any $s$-$t$-path $P$ in $D'$ and $\bar{x}_{P_v} = 1$ for all $v \in V$, where $Q_v$ is the path consisting of the single arc $a_v$.
    In particular, $X_{\bar{\theta}} \neq \emptyset$ because $X_{\theta'} \neq \emptyset$ (because the min-cut size in $D'$ is $\theta'$).
    Recalling that $H'$ is a matching, note moreover that any $\bar{S} \in \mathcal{S}_{\bar{H}, \bar{k}}$ with $|\bar{S}| > 2$ is of the form $\bar{S}(C, S') := S' \cup \{a_v \st v \in C\}$ for some $S' \in \mathcal{S}_{H', 2}$ and some clique $C$ in $G$, and that conversely $\bar{S}(C, S') \in \mathcal{S}_{\bar{H}, \bar{k}}$ for any $S' \in \mathcal{S}_{H', 2}$ and any clique $C$ of $G$ (which by our earlier assumption cannot be larger than $n = k - 2$).
    
    Finally, note that for any pair of corresponding flows $x' \in X_{\theta'}$ and $\bar{x} \in X_{\bar{\theta}}$, it holds that $\sum_{P \in \paths' : \bar{S}(C, S') \cap P \neq \emptyset} \bar{x}_P = |C| + \sum_{P \in \paths' : S' \cap P \neq \emptyset} x'_P$ for any $S' \in \mathcal{S}_{H', 2}$ and any clique $C$ in $G$.
    Thus, $\bar{J}$ is a NO instance if and only if for every $\bar{x} \in X_{\bar{\theta}}$ there is $S' \in \mathcal{S}_{H', 2}$ and a clique $C$ in $G$ with $\sum_{P \in \paths' : \bar{S}(C, S') \cap P \neq \emptyset} \bar{x}_P \geq \bar{k} \geq n + 2$, which is the case if and only if $\sum_{P \in \paths' : S' \cap P \neq \emptyset} x'_P = 2$ and $|C| = n$.
    Hence, $\bar{J}$ is a NO instance if and only if $J^{\textsc{Clique}}$ (and hence $I$) is a YES instance and $J^{\textsc{Flow}}$ (and hence $I'$) is an NO instance. \qed
\end{proof}

\begin{figure}[t]
    \centering

    \begin{tikzpicture}

        % graph G

        \draw[fill=black!10!white] (-0.5, 0.5) rectangle ++(3, 1);
        \path (0, 0.75) node[node, label=above:{$v_1$}] (v1) {}
            ++(1, 0) node (vdots) {\dots}
            ++(1, 0) node[node, label=above:{$v_m$}] (vn) {};
        \path (1, 1.25) node {$G$};

        \draw[fill=black!10!white] (-0.5, 0) rectangle ++(3, -1);
        \path (0, -0.25) node[node] (a1) {}
            ++(1, 0) node (adots) {\dots}
            ++(1, 0) node[node] (am) {};
        \path (1, -0.75) node {$H'$};

        \draw (v1) edge (a1);
        \draw (v1) edge (adots);
        \draw (v1) edge (am);
        \draw (vdots) edge (a1);
        \draw (vdots) edge (adots);
        \draw (vdots) edge (am);
        \draw (vn) edge (a1);
        \draw (vn) edge (adots);
        \draw (vn) edge (am);
        
        % digraph D
        
        \path (4, -0.25) node (sc) {};
        
        \draw[fill=black!10!white] (sc) ++ (-0.2, 0.25) rectangle ++(4.4, -1);
        \path (sc) ++(2, -0.25) node {$D'$};
        
        \path (sc) node[node, label=below:{$s$}] (s) {}
            ++(4, 0)  node[node, label=below:{$t$}] (t) {};
        
        \draw[-latex, bend left=90] (s) edge node[above, sloped] {$a_{v_1}$} (t);
        \path (s) ++(2, 1.1) node {$\vdots$};
        \draw[-latex, bend left=35] (s) edge node[below, sloped] {$a_{v_m}$} (t);
        
    \end{tikzpicture}

    \vspace*{-0.2cm}
    
    \caption{Illustration of the construction from the proof of \cref{lem:clique-flow}. The figure on the left shows the compatibility graph $\bar{H}$ of the constructed instance, which consists of the graph $G$ from the clique instance $J^{\textsc{Clique}}$ with nodes $v_1, \dots, v_m$ and the compatibility graph $H'$ of the {\mrf} instance $J^{\textsc{Flow}}$ (the edges of $G$ and $H'$ are omitted in the figure). The figure on the right shows the digraph $\bar{D}$ of the constructed instance, which consists of the digraph $D'$ of $J^{\textsc{Flow}}$ (the arcs of $D'$ are omitted in the figure) and the arcs $a_v$ for each node $v$ of $G$.}
    \label{fig:reduction-pnp}

\end{figure}
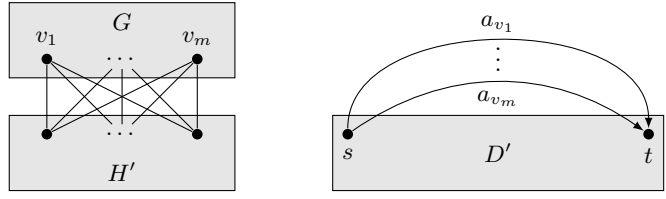

Now, to apply \cref{thm:wagner}, consider a monotone sequence $I_1, \dots, I_{2\ell}$ of instances of an arbitrary $\NP$-complete decisionn problem $\mathcal{A}$ (e.g., \textsc{Satisfiability}).
For every $r \in [\ell]$, we apply \cref{lem:clique-flow} with $I = I_{2r-1}$ and $I' = I_{2r}$ to obtain an instance $\bar{J}_r$ of {\mrfrx} with digraph $\bar{D}_r = (\bar{V}_r, \bar{A}_r)$, interdiction budget $\bar{k}_r$, demand $\bar{\theta}_r$, and compatibility graph $\bar{H}_r = (\bar{A}_r, \bar{E}_r)$ such that there is an $s$-$t$-flow $x$ of value $\bar{\theta}_r$ in $\bar{D}_r$ with $\min_{S \in \mathcal{S}_{\bar{H}_r, \bar{k}_r}} \mu(x, S)$ if and only if $I_{2i-1}$ is a NO instance or $I_{2i}$ is a YES instance for $\mathcal{A}$.

Without loss of generality, we can assume that $\bar{\theta}_r$ is the maximum flow value in $\bar{D}_r$ for unit capacities and that there is a $k \in \mathbb{N}$ such that $\bar{k}_r = k$ for all $r \in [\ell]$.
For the latter, choose $k := \max_{i \in [\ell]} \bar{k}_r$ and modify $\bar{J}_r$ for $i \in [\ell]$ with $\bar{k}_r < k$ as follows: Introduce $k - \bar{k}_r$ additional arcs from $s$ to $t$ into $\bar{D}_r$, and in the compatibility graph $\bar{H}_r$ make all of them adjacent to all other arcs; further increase the demand to $\bar{\theta}_r + k - \bar{k}_r$ and the interdiction budget to $k$. 
For the former, simply split the source $s$ into two nodes $s$ and $s'$ with a bundle of $\bar{\theta}_r$ arcs from $s$ to $s'$, all of which are singletons in $\bar{H}_r$.
The resulting instance is a YES instance if and only if the original instance was a YES instance.

We combine the instances $\bar{J}_r$ for $r \in [\ell]$ to a single instance $J$ of {\mrfrx} as follows.
To obtain the digraph $D = (V, A)$, we take the union of all digraphs $\bar{D}_r$, identifying the sources $s$ with one another and identifying the sinks $t$ with one another.
The compatibility graph $H = (A, E_H)$ consists of the union of the compatibility graphs of the instances $\bar{J}_r$.
We set the interdiction budget to $k$ (i.e., the same value as in each instance $\bar{J}_r$) and the demand to $\theta = \sum_{i \in [\ell]} \bar{\theta}_r$.

\begin{lemma}\label{lem:all-yes}
	The {\mrfrx} instance $J$ is a YES instance if and only if $\bar{J}_r$ is a YES instance for all $r \in [\ell]$.
\end{lemma}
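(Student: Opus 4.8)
The plan is to treat $J$ as essentially a disjoint union of the instances $\bar J_r$, glued only at the common source $s$ and common sink $t$, and to transfer flows and failure sets between $J$ and the $\bar J_r$ componentwise.

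The step I would carry out first, and which I expect to be the crux, is a structural observation. Since any two of the digraphs $\bar D_r$ share no node other than $s$ and $t$, and the arc set $A$ of $D$ is the disjoint union of the $\bar A_r$, every $s$-$t$-path $P$ of $D$ lies entirely within a single copy $\bar D_r$: the first arc of $P$ leaves $s$ into some $\bar A_r$; if its head is an internal node of $\bar D_r$, then every arc leaving that node again lies in $\bar A_r$, and iterating this along $P$ (a simple $s$-$t$-path) shows $P \subseteq \bar A_r$. Hence the set $\paths$ of $s$-$t$-paths of $D$ is the disjoint union $\bigcup_{r \in [\ell]} \bar{\paths}_r$, where $\bar{\paths}_r$ is the set of $s$-$t$-paths of $\bar D_r$. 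Likewise, since $H$ is the disjoint union of the graphs $\bar H_r$ over the partition $A = \bigcup_r \bar A_r$, every $S \in \mathcal{S}_{H,k}$ is a clique of a single $\bar H_r$, i.e.\ $S \in \mathcal{S}_{\bar H_r,k}$. I would then record the consequence that, for any flow $x$ on $\paths$, denoting by $x^r$ its restriction to $\bar{\paths}_r$, one has $\lambda(x,S) = \lambda(x^r,S)$ whenever $S \subseteq \bar A_r$, because paths from the other copies avoid $\bar A_r$ entirely.

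Granting this, the two directions are short. For the ``if'' direction, suppose each $\bar J_r$ is a YES instance with witness $x^r$ (value $\bar\theta_r$, and $\lambda(x^r,S) \le k-1$ for all $S \in \mathcal{S}_{\bar H_r,k}$). Set $x := \sum_{r} x^r$, viewed as a flow on $\paths$; the disjointness of the $\bar A_r$ gives $x[a] = x^r[a] \le 1$ for $a \in \bar A_r$, so $x$ respects the unit capacities, and $\sum_{P \in \paths} x_P = \sum_r \bar\theta_r = \theta$. For $S \in \mathcal{S}_{H,k}$, pick the copy $r$ with $S \subseteq \bar A_r$; then $\lambda(x,S) = \lambda(x^r,S) \le k-1$. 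Hence $x$ witnesses that $J$ is a YES instance. For the ``only if'' direction, suppose $x$ witnesses that $J$ is a YES instance, and let $x^r$ be its restriction to $\bar{\paths}_r$, of value $\theta_r$, so that $\sum_r \theta_r = \theta$. Each $x^r$ is a feasible unit-capacity $s$-$t$-flow in $\bar D_r$, hence $\theta_r \le \bar\theta_r$, since $\bar\theta_r$ is (w.l.o.g.) the unit-capacity maximum-flow value of $\bar D_r$; combined with $\sum_r \theta_r = \theta = \sum_r \bar\theta_r$ this forces $\theta_r = \bar\theta_r$ for every $r$. Finally, any $S \in \mathcal{S}_{\bar H_r,k}$ also lies in $\mathcal{S}_{H,k}$, so $\lambda(x^r,S) = \lambda(x,S) \le k-1$, and thus $x^r$ witnesses that $\bar J_r$ is a YES instance.

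The capacity and value bookkeeping is routine; the part that needs genuine care is the first step — the clean decomposition of $\paths$ and of the clique family $\mathcal{S}_{H,k}$ across the $\ell$ copies, together with the max-flow tightness argument yielding $\theta_r = \bar\theta_r$ — since everything else reduces to invoking the corresponding property of a single instance $\bar J_r$.
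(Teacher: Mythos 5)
Your proposal is correct and follows essentially the same route as the paper's own (quite terse) proof: decompose both the path set and the clique family across the $\ell$ copies, use the ``without loss of generality $\bar\theta_r$ is the max-flow value'' assumption to force each restricted flow to have value exactly $\bar\theta_r$, and then transfer the interdiction bound componentwise. You have merely spelled out the structural observations that the paper states in a single sentence.
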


\begin{proof}
	Note that any flow $x$ of value $\theta$ in $D$ with unit capacities is the sum of flows $\bar{x}_r$ of value $\bar{\theta}_r$ in $\bar{D}_r$ with unit capacities.
	Note further that the compatibility graph $H$ of $J$ does not contain any edges between arcs in $D$ that belong to distinct $\bar{D}_r$, and hence $S \subseteq A$ is in $\mathcal{S}_{H,k}$ if and only if there is $r \in [\ell]$ such that $S \in \mathcal{S}_{\bar{H}_r, k}$.
	From this it follows that $\max_{S \in \mathcal{S}_{H,k}} \lambda(x, S) \leq k - 1$ if and only if $\max_{S \in \mathcal{S}_{\bar{H}_r,k}} \lambda(\bar{x}_r, S) \leq k-1$ for all $r \in [\ell]$.
	Hence $J$ is a YES instance if and only if $\bar{J}_r$ is a YES instance for all $r \in [\ell]$. \qed
\end{proof}

\begin{lemma}\label{lem:exists-no}
	There exists $r \in [\ell]$ such that $\bar{J}_r$ is a NO instance if and only if $|\{i \in [2\ell] \st I_i \in Y(\mathcal{A})\}|$ is odd.
\end{lemma}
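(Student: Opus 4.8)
The plan is to combine the monotonicity of the sequence $I_1, \dots, I_{2\ell}$ with the logical characterization of $\bar{J}_r$ supplied by \cref{lem:clique-flow}. First I would record the structural consequence of monotonicity: the YES instances among $I_1, \dots, I_{2\ell}$ form a prefix, so if we set $p := |\{i \in [2\ell] \st I_i \in Y(\mathcal{A})\}|$, then $I_i \in Y(\mathcal{A})$ holds if and only if $i \leq p$. In particular $|\{i \in [2\ell] \st I_i \in Y(\mathcal{A})\}|$ is odd exactly when $p$ is odd, so it suffices to show that some $\bar{J}_r$ is a NO instance precisely when $p$ is odd.

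Next I would take the contrapositive of the equivalence in \cref{lem:clique-flow}: since $\bar{J}_r$ is obtained by applying that lemma with $I = I_{2r-1}$ and $I' = I_{2r}$, the instance $\bar{J}_r$ is a NO instance if and only if $I_{2r-1} \in Y(\mathcal{A})$ and $I_{2r} \notin Y(\mathcal{A})$. Using the prefix description of $p$, the first condition reads $2r-1 \leq p$ and the second reads $2r > p$; for integers these together force $p = 2r-1$. Hence $\bar{J}_r$ is a NO instance exactly for the unique index $r = (p+1)/2$, and such an $r \in [\ell]$ exists if and only if $p$ is an odd number in $\{1, 3, \dots, 2\ell-1\}$, i.e.\ (as $0 \le p \le 2\ell$ always) if and only if $p$ is odd.

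Putting the two observations together yields that there exists $r \in [\ell]$ with $\bar{J}_r$ a NO instance if and only if $p$ is odd, which is the claim. I do not anticipate a genuine obstacle here; the only point requiring care is the off-by-one bookkeeping between the sequence index $i$ and the pairing index $r$ (together with checking that the boundary cases $p = 0$ and $p = 2\ell$, both even, are handled correctly — indeed no $\bar{J}_r$ is then a NO instance), and stating the contrapositive of \cref{lem:clique-flow} with the correct polarity (``$\bar{J}_r$ NO $\iff$ $I_{2r-1}$ YES and $I_{2r}$ NO'').
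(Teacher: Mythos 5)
Your proof is correct and follows essentially the same approach as the paper: use monotonicity to characterize when some $I_{2r-1}$ is a YES while $I_{2r}$ is a NO (exactly when the YES-count is odd), then invert the equivalence from \cref{lem:clique-flow} to translate that into $\bar{J}_r$ being a NO instance. You simply spell out the prefix argument and the index bookkeeping a bit more explicitly than the paper, which leaves it implicit.
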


\begin{proof}
Note that $|\{i \in [2\ell] \st I_i \in Y(\mathcal{A})\}|$ is odd if and only if there exists $r \in [\ell]$ such that $I_{2r - 1} \in Y(\mathcal{A})$ and $I_{2r} \notin Y(\mathcal{A})$.
By \cref{lem:clique-flow}, $I_{2r - 1} \in Y(\mathcal{A})$ and $I_{2r} \notin Y(\mathcal{A})$ if and only if $\bar{J}_r$ is a NO instance. \qed
\end{proof}

Together, \cref{lem:all-yes,lem:exists-no} imply that $J$ is a NO instance if and only if $|\{i \in [2\ell] \st I_i \in Y(\mathcal{A})\}|$. This implies that the complement of {\mrfrx} is $P^{\NP[\log]}$-hard. Since $P^{\NP[\log]}$ is closed under complements, also {\mrfrx} is $P^{\NP[\log]}$-hard, completing the proof of \cref{thm:mrfr-pnp}.

\clearpage

\section{Integral {\mrfrx} is $\Sigma_2^P$-hard (Proof of \cref{thm:mrf-sigma})}
\label{app:mrfr-sigma}

In this section, we prove the following theorem, which together with \cref{thm:main-reduction} implies \cref{thm:mrf-sigma}.

\begin{theorem}\label{thm:integral-mrfr-sigma}
    Integral {\mrfrx} is $\Sigma_2^P$-hard.
\end{theorem}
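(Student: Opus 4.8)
The plan is to reduce from \textsc{Clique Interdiction}, which by \citep{rutenburg1994propositional} is $\Sigma_2^P$-complete: given a graph $G = (V,E)$ and integers $b, c \in \mathbb{N}$, decide whether there is $B \subseteq V$ with $|B| \leq b$ such that $G - B$ contains no clique of size $c$ --- equivalently, whether some $B$ with $|B| \leq b$ meets every $c$-clique of $G$. I would assume $c \geq 2$ and $|V| > b$, sending the remaining (trivially YES/NO) cases to fixed instances. The guiding observation is that the two problems have the same quantifier structure: a YES instance of \textsc{Clique Interdiction} asserts ``$\exists B\ \forall C$'', and a YES instance of \textsc{Integral \mrfrx} asserts ``$\exists x\ \forall S$''. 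I would therefore arrange the reduction so that an integral flow $x$ of the planner encodes the interdictor's vertex set $B$, while an interdiction clique $S$ of the \mrfrx{} instance encodes a clique $C$ of $G$.

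Concretely, with $n := |V|$, let the DAG $D$ consist of a source $s$, a sink $t$, and $n$ parallel arcs $a_v$ $(v \in V)$ from $s$ to $t$, with $u \equiv 1$ (subdivide each $a_v$ once if a simple digraph is required). Let the compatibility graph $H$ on the arc set $\{a_v : v \in V\}$ be an isomorphic copy of $G$ via $a_v \leftrightarrow v$, so that $\mathcal{S}_{H,k}$ consists exactly of the sets $\{a_v : v \in C\}$ for cliques $C$ of $G$ with $|C| \leq k$. Finally, set the interdiction budget $k := c$ and the demand $\theta := n - b$.

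For correctness: by integrality and unit capacities, every feasible flow $x$ with $\sum_{P \in \paths} x_P = \theta = n - b$ is the indicator vector of a choice of $n - b$ of the arcs $a_v$, so $B := \{v : x_{a_v} = 0\}$ gives a bijection between such flows and sets $B \subseteq V$ with $|B| = b$. For $S = \{a_v : v \in C\} \in \mathcal{S}_{H,k}$: if $|C| < c$ then $\lambda(x, S) \leq |C| \leq c - 1 = k - 1$ automatically, while if $|C| = c$ then $\lambda(x, S) = |C \setminus B| = c - |C \cap B|$, which is at most $k - 1$ exactly when $C \cap B \neq \emptyset$. Hence some feasible $x$ has $\max_{S \in \mathcal{S}_{H,k}} \lambda(x, S) \leq k - 1$ if and only if some $B$ with $|B| = b$ meets every $c$-clique of $G$; since a superset of a hitting set is again a hitting set (and $|V| > b$), this is equivalent to $(G,b,c)$ being a YES instance of \textsc{Clique Interdiction}. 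As \cref{thm:main-reduction} applies verbatim to the integral versions, \cref{thm:integral-mrfr-sigma} together with \cref{thm:main-reduction} yields \cref{thm:mrf-sigma}.

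The point I would flag is conceptual rather than computational: the reduction hinges on integrality and genuinely fails for fractional flows --- a fractional planner could place $(n-b)/n$ units on every arc $a_v$, giving $\lambda(x, S) = c(n-b)/n$ for each $c$-clique, which is at most $c-1$ whenever $n \leq cb$, a regime one can always enter by padding $G$ with isolated vertices. So the fractional relaxation of the constructed instance is a trivial YES instance independent of $G$, which is precisely why the theorem is stated only for \emph{integral} \mrfrx{}. The remaining obligations are routine: pinning down the exact formulation of \textsc{Clique Interdiction} in \citep{rutenburg1994propositional} (reconciling ``no clique of size $c$'' with ``clique number $< c$''), dispatching the degenerate cases $c \leq 1$ and $|V| \leq b$, and checking the syntactic requirements of \mrfrx{} (a DAG with unit capacities and a compatibility graph on its arcs). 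A more elaborate construction --- forcing a distinguished unit of flow along a long selector path whose arcs are the $a_v$ for $v \in B$, so that feasibility demands this path intersect every large clique of $H$ --- would work equally well and might be preferred if one wants the $a_v$-arcs embedded in a richer network, but the parallel-arc instance above already suffices.
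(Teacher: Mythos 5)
Your reduction is correct, and it takes a genuinely different --- considerably simpler --- route than the paper's own proof of \cref{thm:integral-mrfr-sigma}. Both reduce from \textsc{Clique Interdiction} (the paper's $(\ell, r)$ is your $(c, b)$), but the paper's construction in \cref{app:mrfr-sigma} is an elaborate gadget: a ``selector'' chain through $v_1, \dots, v_{n+1}$ with two parallel choice-arcs $a^0_v, a^1_v$ at each stage, auxiliary hubs $s^0, s^1, t^0, t^1$, three further bundles of direct $s$-$t$ arcs, budget $k = \ell + 1$, and demand $\theta = |V| + \ell$. There, the clique $\bar S = A_{st} \cup \{a_s, a_t\}$ in $H$ forces one unit of flow onto a distinguished selector path $P'$, the hitting set is read off as $R = \{v \st a^1_v \in P'\}$, and the extra $+1$ in the budget pays for the marker arc $a_s$ that sits in every critical clique $\{a_s\} \cup \{a^1_v \st v \in C\}$. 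You dispense with all of that: with $n$ parallel arcs $a_v$ from $s$ to $t$, unit capacities, $H \cong G$, $k = c$, and $\theta = n - b$, an integral flow is literally an indicator vector, $B = \{v \st x_{a_v} = 0\}$ has size exactly $b$, and the clique constraints collapse at once to ``$B$ meets every $c$-clique.'' Your accounting in both directions is sound, including the points that cliques of size $< c$ are vacuous, that $|B| = b$ and $|B| \leq b$ coincide once $|V| > b$, and that \cref{thm:main-reduction} applies verbatim to integral flows so the pipeline to \cref{thm:mrf-sigma} goes through. The one slip is in the side commentary on the fractional relaxation: padding $G$ with isolated vertices increases $n$ but not $b$, which pushes the instance \emph{out of} the regime $n \leq cb$ where the uniform fractional flow is feasible, not into it; to enter that regime one must also inflate $b$ by the same amount (an answer-preserving change for $c \geq 2$, since isolated vertices lie in no $c$-clique). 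This minor inversion does not affect the reduction itself. On the whole your parallel-arc encoding is the cleaner argument; the selector-path machinery in the paper does not appear to purchase anything that your construction does not already deliver.
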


We prove \cref{thm:integral-mrfr-sigma} by reduction from the problem \textsc{Clique Interdiction}, which is defined as follows:
Given a graph $G = (V, E)$ and two numbers $\ell, r \in \mathbb{N}$, 
the goal is to decide whether there is $R \subseteq V$ with $|R| \leq r$ such that $|C \cap R| \geq 1$ for all $C \in \mathcal{C}_{\ell}(G)$, where $\mathcal{C}_{\ell}(G)$ is the set of all cliques of size $\ell$ in $G$.
\citet{rutenburg1994propositional} showed that \textsc{Clique Interdiction} is $\Sigma_2^P$-complete.

Given an instance of \textsc{Clique Interdiction} with graph $G = (V, E)$ and numbers $\ell, r \in \mathbb{N}$, we construct an instance of \textsc{Integral {\mrfrx}} with digraph~$D = (V', A)$, interdiction budget $k := \ell + 1$, demand $\theta := |V| + \ell$, and compatibility graph $H = (A, E_H)$ described below.

First, we fix an arbitrary linear order $v_1, \dots, v_n$ of the nodes in $V$.
The node set $V'$ of $D$ is given by
    $V' := \{s, s^0, s^1, v_{n+1}, t^0, t^1, t\} \cup \{v_i, y_i^0, z_i^0, y_i^1, z_i^1 \st i \in [n]\}$.
The arc set $A$ of $D$ contains the following arcs:
\begin{itemize}
    \item the arcs $a_s := (s, v_1)$ and $a_t := (v_{n+1}, t)$,
    \item the arcs $(v_i, y^b_i)$, $(y^b_i, z^b_i)$, $(z^b_i, v_{i+1})$ for each $i \in [n]$ and $b \in \{0, 1\}$,
    \item a bundle $A^0_{s}$ of $r$ parallel arcs from $s$ to $s^0$,
    \item a bundle $A^0_{t}$ of $r$ parallel arcs from $t^0$ to $t$,
    \item arcs $(s^0, y^0_i)$ and $(z^0_i, t^0)$ for each $i \in [n]$,
    \item a bundle $A^1_{s}$ of $|V| - r$ parallel arcs from $s$ to $s^1$,
    \item a bundle $A^1_{t}$ of $|V| - r$ parallel arcs from $t^1$ to $t$,
    \item arcs $(s^1, y^1_i)$ and $(z^1_i, t^1)$ for each $i \in [n]$,
    \item a bundle $A_{st}$ of arcs $k-2$ arcs from $s$ to $t$.
\end{itemize}
For $v = v_i \in V$, we use $a^0_v$ and $a^1_v$, respectively, to denote the arcs $(y^0_i, z^0_i)$ and $(y^1_i, z^1_i)$, respectively.
We further define $\bar{S} := A_{st} \cup \{a_s, a_t\}$.
The edge set of the compatibility graph $H$ is given by 
\begin{align*}
    E_H := \big\{ \{a^1_v, a^1_w\} \st \{v, w\} \in E \big\} \cup \big\{ \{a_s, a^1_v\} \st v \in V \big\} \cup \binom{\bar{S}}{2}.
\end{align*}
See \cref{fig:reduction-mrfr-sigma} for an illustration of the constructed graph $D'$.

\begin{figure}[t]
    \centering

    \begin{tikzpicture}
        
        \path (4, 0) node[node, label=left:{$s$}] (s) {}
        
            ++(1.75, 0)  node[node, label=below:{$v_1$}] (v1) {}
            ++(0.75, 0.75)  node[node, label=below:{$y^0_1$}] (y01) {}
            +(0, -1.5)  node[node, label=above:{\;\;$y^1_1$}] (y11) {}
            ++(1.2, 0)  node[node, label=below:{$z^0_1$}] (z01) {}
            +(0, -1.5)  node[node, label=above:{$z^1_1$}] (z11) {}
            ++(0.75, -0.75)  node[node, label=below:{\;\;$v_2$}] (v2) {}
            
            ++(0.75, 0) node {\dots}
            
            ++(0.75, 0)  node[node, label=below:{$v_n$}] (vn) {}
            ++(0.75, 0.75)  node[node, label=below:{$y^0_n$}] (y0n) {}
            +(0, -1.5)  node[node, label=above:{\;\;$y^1_n$}] (y1n) {}
            ++(1.2, 0)  node[node, label=below:{$z^0_n$}] (z0n) {}
            +(0, -1.5)  node[node, label=above:{$z^1_n$}] (z1n) {}
            ++(0.75, -0.75)  node[node, label=below:{\;\;\;\;\;$v_{n+1}$}] (vnp) {}
            
            ++(1.75, 0)  node[node, label=right:{$t$}] (t) {};
        
        \draw[-latex] (s) edge node[above, pos=0.7] {$a_s$} (v1);
        \draw[-latex] (v1) edge (y01);
        \draw[-latex] (v1) edge (y11);
        \draw[-latex] (y01) edge node[above] {$a^0_{v_1}$} (z01);
        \draw[-latex] (y11) edge node[below] {$a^1_{v_1}$} (z11);
        \draw[-latex] (z01) edge (v2);
        \draw[-latex] (z11) edge (v2);
        
        \draw[-latex] (vn) edge (y0n);
        \draw[-latex] (vn) edge (y1n);
        \draw[-latex] (y0n) edge node[above] {$a^0_{v_n}$} (z0n);
        \draw[-latex] (y1n) edge node[below] {$a^1_{v_n}$} (z1n);
        \draw[-latex] (z0n) edge (vnp);
        \draw[-latex] (z1n) edge (vnp);
        \draw[-latex] (vnp) edge node[above, pos=0.3] {$a_t$} (t);

        \path (s) ++(1.5, 2) node[node, label=above:{$s^0$}] (s0) {}
            ++(0, -4)node[node, label=below:{$s^1$}] (s1) {};

        \draw[-latex, bend left=45] (s) edge (s0);
        \draw[-latex, bend right=45] (s) edge (s0);
        \draw[dotted, thick] (s) ++(25:0.5) arc (25:85:0.5);
        \path (s) ++(0.7, 1.1) node[align=center] {$A^0_s$\\[-2pt] \scriptsize ($r$ arcs)};
        
        \draw[-latex, bend left=45] (s) edge (s1);
        \draw[-latex, bend right=45] (s) edge (s1);
        \draw[dotted, thick] (s) ++(-25:0.5) arc (-25:-85:0.5);
        \path (s) ++(0.8, -1) node[align=center] {$A^1_s$\\[-2pt] \scriptsize ($n-r$ \\[-2pt] \scriptsize arcs)};
        
        \draw[-latex] (s0) edge (y01);
        \draw[dotted, thick] (s0) ++(-5:0.5) arc (-5:-45:0.5);
        \draw[-latex, out=0, in=135] (s0) edge (y0n);
        
        \draw[-latex] (s1) edge (y11);
        \draw[dotted, thick] (s1) ++(5:0.5) arc (5:45:0.5);
        \draw[-latex, out=0, in=-135] (s1) edge (y1n);

        \path (t) ++(-1.5, 2) node[node, label=above:{$t^0$}] (t0) {}
            ++(0, -4) node[node, label=below:{$t^1$}] (t1) {};
        
        \draw[-latex, bend left=45] (t0) edge (t);
        \draw[-latex, bend right=45] (t0) edge (t);
        \draw[dotted, thick] (t0) ++(-25:0.5) arc (-25:-85:0.5);
        \path (t) ++(-0.7, 1.1) node[align=center] {$A^0_t$\\[-2pt] \scriptsize ($r$ arcs)};
        
        \draw[-latex, bend left=45] (t1) edge (t);
        \draw[-latex, bend right=45] (t1) edge (t);
        \draw[dotted, thick] (t1) ++(25:0.5) arc (25:85:0.5);
        \path (t) ++(-0.8, -1) node[align=center] {$A^1_t$\\[-2pt] \scriptsize ($n-r$ \\[-2pt] \scriptsize arcs)};
        
        \draw[-latex, out=45, in=180] (z01) edge (t0);
        \draw[dotted, thick] (t0) ++(185:0.5) arc (185:225:0.5);
        \draw[-latex] (z0n) edge (t0);
        
        \draw[-latex, out=-45, in=-180] (z11) edge (t1);
        \draw[dotted, thick] (t1) ++(-185:0.5) arc (-185:-225:0.5);
        \draw[-latex] (z1n) edge (t1);
        
    \end{tikzpicture}

    \vspace*{-0.2cm}
    
    \caption{Illustration of the digraph $D$ constructed in the proof of \cref{thm:integral-mrfr-sigma}. The figure omits the bundle $A_{st}$ with $k-2$ arcs from $s$ to $t$.}
    \label{fig:reduction-mrfr-sigma}

\end{figure}
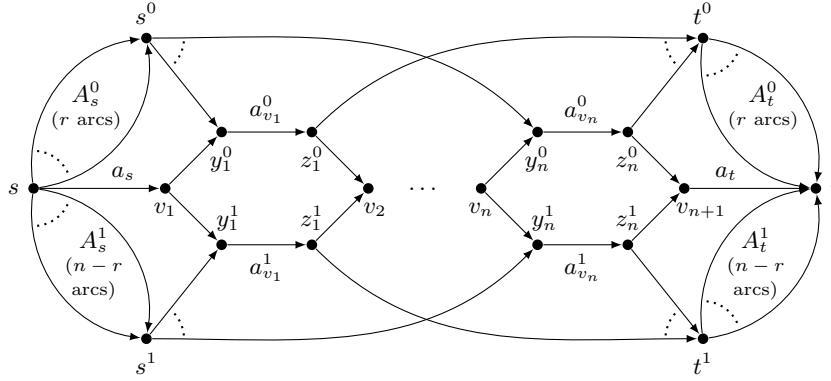

Let $\paths$ denote the set of $s$-$t$-paths in $D$ and let $X^{\star}$ denote the set of integral $s$-$t$-flows of value $\theta$ in $D$ with $\lambda(x, S) \leq k - 1$ for all $S \in \mathcal{S}_{H,k}$.
We show that $X^{\star} \neq \emptyset$ if and only if there is $R \subseteq V$ with $|R| \leq r$ and $R \cap C \neq \emptyset$ for all~\mbox{$C \in \mathcal{C}_{\ell}(G)$}.

\subsubsection*{From flows to node sets interdicting cliques.} We now show that the existence of a flow $x \in X^{\star}$ yields a set of nodes in $R$ interdicting all cliques of size~$\ell$.
We first show the existence of a flow-carrying path $P'$ that contains both~$a_s$ and~$a_t$.

\begin{lemma}\label{lem:sigma-path-s-t}
    Let $x \in X^{\star}$.
    Then there is $P' \in \paths$ with $x_{P'} = 1$ at $a_s, a_t \in P'$.
\end{lemma}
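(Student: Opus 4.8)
The plan is to derive everything from a single, carefully chosen interdiction set, namely $\bar S = A_{st} \cup \{a_s, a_t\}$. By construction $\binom{\bar S}{2} \subseteq E_H$, so $\bar S$ is a clique in $H$, and since $a_s$ and $a_t$ lie outside $A_{st}$ we have $|\bar S| = (k-2) + 2 = k$. Hence $\bar S \in \mathcal{S}_{H,k}$, and the definition of $X^{\star}$ yields $\lambda(x, \bar S) \le k - 1$; the whole argument consists of turning this inequality into the existence of the claimed path $P'$.

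First I would run a cut argument to pin down the flow on $a_s$ and $a_t$. The arcs leaving $s$ are precisely $a_s$ together with the bundles $A^0_s$, $A^1_s$, and $A_{st}$, of total capacity $1 + r + (|V| - r) + (k - 2) = |V| + \ell = \theta$, using $k = \ell + 1$; symmetrically, the arcs entering $t$ (namely $a_t$ together with $A^0_t$, $A^1_t$, and $A_{st}$) also have total capacity $\theta$. Since $x$ is an $s$-$t$-flow of value $\theta$, it must saturate both of these cuts; in particular $x[a_s] = x[a_t] = 1$ and $x$ saturates every arc of $A_{st}$.

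Next I would evaluate $\lambda(x, \bar S)$. Every path in $\paths$ that meets $A_{st}$ is a single-arc $s$-$t$-path consisting of one arc of $A_{st}$, so such paths contain neither $a_s$ nor $a_t$ and together carry exactly $k - 2$ units of flow (one per saturated arc of $A_{st}$); conversely, any path containing $a_s$ or $a_t$ necessarily avoids $A_{st}$. Applying inclusion--exclusion over the pair $\{a_s, a_t\}$, the paths meeting $\bar S$ but not $A_{st}$ carry $x[a_s] + x[a_t] - \beta = 2 - \beta$ units, where $\beta := \sum_{P \in \paths : a_s, a_t \in P} x_P$. Hence $\lambda(x, \bar S) = (k - 2) + (2 - \beta) = k - \beta$. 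Combining with $\lambda(x, \bar S) \le k - 1$ gives $\beta \ge 1$, while integrality of $x$ together with $x[a_s] = 1$ gives $\beta \le 1$; therefore $\beta = 1$, and the unique path $P'$ with $a_s, a_t \in P'$ satisfies $x_{P'} = 1$, as desired.

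The only step that needs a bit of care is the evaluation of $\lambda(x, \bar S)$: one must check that a flow-carrying path through an arc of $A_{st}$ is exactly that one arc (so that its flow is counted once and is disjoint from the contributions of $a_s$ and $a_t$) and that the inclusion--exclusion for $a_s$ and $a_t$ is set up correctly. Beyond that bookkeeping I expect no real obstacle — everything else follows from the forced saturation of $\delta^+_D(s)$ (equivalently $\delta^-_D(t)$) and from integrality of the flow.
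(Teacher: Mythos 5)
Your proposal is correct and takes essentially the same approach as the paper: both start from the cut argument forcing $x[a]=1$ on every arc of $\delta^+(s)\cup\delta^-(t)$, then apply the feasibility bound $\lambda(x,\bar S)\le k-1$ for the clique $\bar S=A_{st}\cup\{a_s,a_t\}$ and conclude via a counting argument that some flow-carrying path must meet $\bar S$ twice, which (since every arc of $A_{st}$ is itself an entire $s$-$t$-path) can only happen through $\{a_s,a_t\}$. The paper phrases the counting step more compactly as $\lambda(x,\bar S)\le k-1<\sum_{a\in\bar S}x[a]=k$, so some path has $|P'\cap\bar S|\ge 2$, while you unfold the same fact via inclusion--exclusion over $\{a_s,a_t\}$; the two calculations are equivalent.
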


\begin{proof}
    Note that $x[a] = 1$ for all $a \in \delta^+(s) \cup \delta^-(t)$ because $x$ has value $\theta = |V| + \ell = |\delta^+(s)| = |\delta^-(t)|$.
    In particular, this implies $x[a] = 1$ for all $a \in \bar{S}$.
    Note that $\lambda(x, \bar{S}) \leq  k - 1 < \sum_{a \in \bar{S}} x[a]$ implies that there must be $P' \in \paths$ with $x_{P'} = 1$ and $|P \cap \bar{S}| \geq 2$.
    As all arcs in $A_{st}$ are starting at $s$ and ending at~$t$, the only two arcs in $\bar{S}$ that can appear simultaneously on an $s$-$t$-path are $a_s$ and~$a_t$. \qed
\end{proof}

We now show that $P'$ is complemented by a set of $|V|-r$ paths containing those arcs $a^1_v$ that are not in $P'$.

\begin{lemma}\label{lem:sigma-paths-structure}
    Let $x \in X^{\star}$ and $P'$ as in \cref{lem:sigma-path-s-t}.
    Then there are $\bar{r} := |V| - r$ paths $Q_1, \dots, Q_{\bar{r}} \in \paths$ such that $x_{Q_i} = 1$ and $|Q_i \cap \{a^1_v \st v \in V\}| = 1$ for all $i \in [\bar{r}]$.
    Moreover, $a^1_v \in P'$ for every $v \in V$ for which $a^1_v \notin Q_i$ for all $i \in [\bar{r}]$.
\end{lemma}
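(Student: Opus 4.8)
The plan is to exploit the capacity structure of the "$b=1$ side" of the network, together with the $s$-$t$-path $P'$ from \cref{lem:sigma-path-s-t}, to extract the desired $\bar r = |V| - r$ flow-carrying paths. First I would observe that, since $x$ has value $\theta = |V| + \ell = |\delta^+(s)|$ and $|u| \equiv 1$, the flow saturates every arc leaving $s$; in particular each of the $|V| - r$ arcs in the bundle $A^1_s$ carries exactly one unit, and the same holds for the $|V|-r$ arcs of $A^1_t$ entering $t$. Decomposing $x$ (which is integral) into unit-capacity $s$-$t$-paths, the flow on the arcs of $A^1_s$ yields exactly $\bar r$ integral paths $Q_1, \dots, Q_{\bar r}$ with $x_{Q_i} = 1$, each entering the "level-$1$ gadget" through a distinct arc $(s^1, y^1_{j})$. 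I would then argue that such a path must traverse exactly one arc of the form $a^1_v$: after entering $s^1$ and then some $y^1_j$, the only way out of $y^1_j$ is via $a^1_{v_j} = (y^1_j, z^1_j)$, and from $z^1_j$ the path either returns towards $t^1$ (via $(z^1_j, t^1)$, then through $A^1_t$ to $t$) or continues along the spine $z^1_j$-$v_{j+1}$-$\dots$; crucially, to re-enter a level-$1$ node $y^1_{j'}$ it must pass through some spine node $v_{j'}$, but the path cannot hit the same level twice because of the acyclic left-to-right layering, so $|Q_i \cap \{a^1_v \st v \in V\}| = 1$ for each $i$ — giving the first assertion.

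For the second assertion — that $a^1_v \in P'$ whenever $a^1_v$ is missed by all the $Q_i$ — I would count how often each arc $a^1_v = (y^1_v, z^1_v)$ is used. Each $a^1_v$ lies on the spine between $v$ and $v+1$, so every $s$-$t$-path in $D$ that passes through the consecutive spine nodes $v_i, v_{i+1}$ uses exactly one of the two parallel arcs $a^0_{v_i}, a^1_{v_i}$; conversely a path entering $y^1_j$ from $s^1$ uses $a^1_{v_j}$. The key capacity fact is that the total flow that must "traverse level $i$" (i.e.\ go from $v_i$ to $v_{i+1}$, through either $a^0_{v_i}$ or $a^1_{v_i}$) equals the flow on the spine arc into $v_i$ plus whatever enters $v_i$ from $s^0, s^1$, and by a conservation/telescoping argument this is pinned down exactly; combined with $x[a^1_{v_i}] = 1$ (which follows since the arcs $(s^1, y^1_i)$ have capacity $1$ and $x[(s^1, y^1_i)]=1$ by saturation of $A^1_s$), we get that the single unit of flow on $a^1_v$ is carried either by one of the $Q_i$ (if that $Q_i$ enters through $y^1_v$) or — if no $Q_i$ uses it — by a path that reaches $y^1_v$ via the spine from $v$. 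In the latter case I would trace that path backwards along the spine to $v_1$ and hence to $a_s$, and forwards along the spine to $v_{n+1}$ and hence to $a_t$; since by \cref{lem:sigma-path-s-t} $P'$ is the unique flow-carrying path containing both $a_s$ and $a_t$ (both are used with total flow $1$ by the saturation argument, as in the proof of \cref{lem:sigma-path-s-t}), this forces that path to be $P'$, so $a^1_v \in P'$.

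The main obstacle I expect is the bookkeeping in the second paragraph: one has to rule out that the unit of flow on $a^1_v$ is split between a "short" path entering from $s^1$ and "long" spine-traversing flow, and more subtly that the spine-traversing flow through $a^1_v$ actually belongs to $P'$ rather than to some other path that uses $a_s$ but not $a_t$ (or vice versa). Handling this cleanly requires nailing down that $x[a_s] = x[a_t] = 1$ and that the only arcs in $\bar S$ co-occurring on an $s$-$t$-path are $a_s, a_t$ (already established in \cref{lem:sigma-path-s-t}), together with a careful level-by-level conservation argument showing there is no "slack" on the spine arcs $(z^b_i, v_{i+1})$ and $(v_i, y^b_i)$ — so that every unit reaching $v_i$ from the left must continue through exactly one of $a^0_{v_i}, a^1_{v_i}$, leaving no room for the $a^1_v$-flow to be anything other than either one of the $Q_i$ or $P'$.
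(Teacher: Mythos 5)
The proposal misses the paper's key counting argument and instead relies on two claims that are false as stated.

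First, you argue that each path $Q_i$ through $s^1$ uses exactly one arc $a^1_v$ because ``the path cannot hit the same level twice because of the acyclic left-to-right layering.'' This is not correct: acyclicity only forces the path to visit spine nodes in increasing order, but nothing local prevents a path such as $s$-$s^1$-$y^1_1$-$z^1_1$-$v_2$-$y^1_2$-$z^1_2$-$t^1$-$t$ from using \emph{two} arcs $a^1_{v_1}$, $a^1_{v_2}$ at different levels. Similarly, a path entering via $s^0$ could use $a^1_w$ arcs for some levels $w$. Second, you assert that $x[a^1_{v_i}] = 1$ ``follows since the arcs $(s^1, y^1_i)$ have capacity $1$ and $x[(s^1, y^1_i)] = 1$ by saturation of $A^1_s$.'' But $A^1_s$ is the bundle from $s$ to $s^1$; saturation of that bundle only forces $\sum_i x[(s^1, y^1_i)] = |V| - r$, which leaves room for some $(s^1, y^1_i)$ to carry no flow and for some $a^1_{v_i}$ to be unused. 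In fact both of these facts (each $Q_i$ uses exactly one $a^1_v$ and every $a^1_v$ is flow-carrying) are true, but only as consequences of a global argument, not of the local reasoning you give.

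The paper closes the gap with a tight count over the $2|V|$ arcs $\{a^0_v, a^1_v : v \in V\}$: the flow decomposes into $\theta = |V| + \ell$ arc-disjoint unit paths, of which $\ell - 1$ are the single-arc $A_{st}$ paths, $r$ visit $s^0$, $\bar r = |V| - r$ visit $s^1$, and the last is $P'$. Since $P'$ contains $a_s$ and $a_t$ it crosses each two-arc cut $\{a^0_v, a^1_v\}$ exactly once, using $|V|$ of the $2|V|$ arcs. Each of the $r + \bar r$ remaining non-trivial paths uses at least one such arc (via the forced arc out of $y^b_i$). Since $|V| + (r + \bar r) = 2|V|$ exhausts the supply, equality is forced: each of those paths uses exactly one, all $2|V|$ arcs are used, and hence any $a^1_v$ missed by every $Q_i$ must lie on $P'$. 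Your conservation/telescoping sketch in the second paragraph gestures toward something like this but never pins down the crucial global accounting, and as written it depends on the two erroneous local claims above.
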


\begin{proof}
    Note that $\paths_x := \{P \in \paths : x_P = 1\}$ consists of $\theta = |V| + |\ell|$ arc-disjoint $s$-$t$-paths.
    These are:
    \begin{itemize}
        \item $k - 2 = \ell - 1$ paths are consisting of a single arc from $A_{st}$,
        \item $r$ paths visiting $s^0$, each of which contains at least one arc $a^0_v$ for some $v \in V$,
        \item $|V| - r$ paths visiting $s^1$, each of which contains at least one arc $a^1_v$ for some~$v \in V$,
        \item the path $P'$.
    \end{itemize}
    Because $P'$ contains both $a_s$ and $a_t$, it must visit both $v_1$ and $v_{n+1}$.
    Note that for every $v \in V$, the set $\{a^0_v, a^1_v\}$ is a $v_1$-$v_{n+1}$-cut, and hence, for each $v \in V$, the path $P'$ must contain either $a^0_v$ or $a^1_v$.
    Because $|\{a^0_v, a^1_v \st v \in V\}| = 2|V|$ and $P'$ contains $|V|$ of these nodes, each of the $r$ paths visiting $s^0$ must contain exactly one arc of the form $a^0_v$ and each of the $\bar{r} = |V| - r$ paths visiting $s^1$ must contain exactly one arc of the form $a^1_v$. 
    The latter paths thus comprise the desired paths $Q_1, \dots, Q_{\bar{r}}$. \qed
\end{proof}

\begin{lemma}\label{lem:sigma-flow-interdicts-cliques}
    Let $x \in X^{\star}$ and $P'$ as in \cref{lem:sigma-path-s-t}. 
    Let $R := \{v \in V \st a^1_v \in P'\}$.
    Then $|R| = r$ and $R \cap C \neq \emptyset$ for all $C \in \mathcal{C}_{\ell}(G)$.
\end{lemma}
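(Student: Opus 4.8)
The plan is to extract from \cref{lem:sigma-path-s-t,lem:sigma-paths-structure} a precise picture of which of the arcs $a^0_v, a^1_v$ the flow-carrying paths use, and then, for the clique part, to build an explicit failure scenario $S$ in the compatibility graph whose removal destroys $k$ units of flow as soon as $R$ fails to hit some clique of size $\ell$, contradicting $x \in X^{\star}$.

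First I would use that, since $x$ is integral of value $\theta = |V| + \ell$ with $|\delta^+(s)| = |\delta^-(t)| = \theta$, the set of flow-carrying paths $\paths_x := \{P \in \paths \st x_P = 1\}$ consists of exactly $\theta$ pairwise arc-disjoint $s$-$t$-paths (as already observed in the proof of \cref{lem:sigma-paths-structure}). By \cref{lem:sigma-paths-structure}, $P'$ contains exactly one of the arcs $a^0_v, a^1_v$ for each $v \in V$; the $\bar{r} := |V| - r$ paths $Q_1, \dots, Q_{\bar{r}}$ each contain exactly one arc of the form $a^1_v$; these $\bar{r}$ arcs are distinct by arc-disjointness; and the ``moreover'' part of that lemma shows that the remaining $r$ arcs of the form $a^1_v$ all lie on $P'$. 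Since $P'$ and the $Q_i$ are arc-disjoint, $R = \{v \in V \st a^1_v \in P'\}$ is precisely the set of $v$ whose arc $a^1_v$ is used by none of the $Q_i$, and hence $|R| = r$.

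For the second claim I would argue by contradiction. Suppose some $C \in \mathcal{C}_{\ell}(G)$ satisfies $R \cap C = \emptyset$, i.e.\ $a^1_v \notin P'$ for every $v \in C$. Consider $S := \{a_s\} \cup \{a^1_v \st v \in C\}$, a set of $\ell + 1 = k$ arcs. Because $C$ is a clique in $G$, every pair $\{a^1_v, a^1_w\}$ with $v, w \in C$ lies in $E_H$, and $\{a_s, a^1_v\} \in E_H$ for all $v$ by construction; hence $S$ is a clique of size $k$ in $H$, so $S \in \mathcal{S}_{H,k}$. It remains to see that $S$ cuts off $k$ units of flow. The path $P'$ meets $S$ through $a_s$. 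For each $v \in C$ we have $a^1_v \notin P'$, so by \cref{lem:sigma-paths-structure} the arc $a^1_v$ lies on one of the paths $Q_i$; since each $Q_i$ contains exactly one arc of the form $a^1_v$, the $\ell$ vertices of $C$ pick out $\ell$ pairwise distinct paths among $Q_1, \dots, Q_{\bar{r}}$, none of which is $P'$ (as $P'$ contains $a_s$ whereas each $Q_i$ begins with an arc of $A^1_s$). Together with $P'$ these are $k$ distinct paths in $\paths_x$ meeting $S$, so $\lambda(x, S) \geq k > k - 1$, contradicting $x \in X^{\star}$; thus $R \cap C \neq \emptyset$ for all $C \in \mathcal{C}_{\ell}(G)$.

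I expect the only delicate point to be the final counting step: checking that the $\ell$ paths chosen for the vertices of $C$ are genuinely distinct from one another and from $P'$, so that they jointly contribute $k$ full units to $\lambda(x, S)$. This follows cleanly from arc-disjointness of $\paths_x$ together with the property $|Q_i \cap \{a^1_v \st v \in V\}| = 1$ established in \cref{lem:sigma-paths-structure}; no further estimates are needed, and everything else is a direct reading of the two preceding lemmas.
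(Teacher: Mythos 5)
Your proof is correct and follows essentially the same approach as the paper's: both use the interdiction set $S = \{a_s\} \cup \{a^1_v \st v \in C\}$, verify it is a $k$-clique in $H$, and count the flow-carrying paths meeting $S$ via the structure established in \cref{lem:sigma-path-s-t,lem:sigma-paths-structure}. The only cosmetic difference is that you argue by contradiction whereas the paper directly computes $\lambda(x,S) = 1 + |C \setminus R|$ and reads off $|C \setminus R| \le \ell - 1$; these are equivalent.
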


\begin{proof}
    Note that for every $v \in V$, either $a^1_v \in P'$ or $a^1_v \in Q_i$ for some $i \in [\bar{r}]$, where $Q_1, \dots, Q_{\bar{r}}$ are the paths given by \cref{lem:sigma-paths-structure}.
    This implies that $|R| = r$.
    
    Now consider any $C \in \mathcal{C}_{\ell}(G)$. 
    Note that $S := \{a^1_v \st v \in C\} \cup \{a_s\} \in \mathcal{S}_{H,k}$.
    Hence, $\sum_{P : P \cap S \neq \emptyset} x_P \leq k - 1$.
    Observe that the paths intersecting $S$ are exactly the paths $P'$ and $Q_i$ for $i \in [\bar{r}]$ with $a^1_v \in Q_i$ for some $v \in C$.
    Because for every $v \in C$, either $v \in R$ or there is a unique $i \in [\bar{r}]$ with $a^1_v \in Q_i$, we obtain $\sum_{P : P \cap S \neq \emptyset} x_P = 1 + |C \setminus R|$.
    Hence $|C \setminus R| \leq k - 2 = \ell - 1$. \qed
\end{proof}

\subsubsection*{From nodes sets interdicting cliques to flows.} 
We now show that conversely, a set of $r$ nodes interdicting all cliques of size $\ell$ implies the existence of a flow $x \in X^{\star}$.

\begin{lemma}
    Let $R \subseteq V$ with $|R| \leq r$ and $C \cap R \neq \emptyset$ for all $C \in \mathcal{C}_{\ell}(G)$.
    Then $X^{\star} \neq \emptyset$.
\end{lemma}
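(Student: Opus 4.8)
The plan is to build an explicit integral $s$-$t$-flow $x$ of value $\theta = |V| + \ell$ in $D$ that keeps $\lambda(x,S) \le k-1 = \ell$ for every clique $S$ of $H$ of size at most $k=\ell+1$; this certifies $x \in X^{\star}$, so $X^{\star}\neq\emptyset$. First I would normalize the interdiction set: by adding arbitrary vertices we may assume $|R| = r$ (this only enlarges $R$ and hence preserves $C\cap R\neq\emptyset$ for all $C\in\mathcal{C}_\ell(G)$; we may also assume $r\le |V|$, as otherwise the instance is trivially YES). The key path is the $s$-$t$-path $P'$ that uses $a_s$, runs through the gadget chain $v_1,\dots,v_{n+1}$ choosing at $v=v_i$ the arc $a^1_v$ if $v\in R$ and the arc $a^0_v$ if $v\notin R$, and ends with $a_t$; set $x_{P'}=1$. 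In addition, route one unit of flow along each of the $k-2=\ell-1$ single arcs in $A_{st}$; for each $v=v_i\in R$ one unit along $s\to A^0_s\to s^0\to y^0_i\to a^0_v\to z^0_i\to t^0\to A^0_t\to t$, using a private arc of $A^0_s$ and of $A^0_t$; and for each $v=v_i\notin R$ one unit along the analogous $s^1$-path $Q_v$ through $a^1_v$, using private arcs of $A^1_s$ and $A^1_t$. The total value is $1 + (\ell-1) + |R| + (|V|-|R|) = \ell + |V| = \theta$.

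Second, I would verify feasibility, which is routine arc-by-arc bookkeeping: all path multiplicities are $1$, so $x$ is integral; the only arcs that could receive flow from two paths are the branch arcs $a^b_v$, but $P'$ uses $a^1_v$ exactly when $v\in R$ while $Q_v$ uses it exactly when $v\notin R$ (and symmetrically $a^0_v$ is used by $P'$ when $v\notin R$ and by the $s^0$-path when $v\in R$), so each $a^b_v$ carries exactly one unit; the bundle arcs are assigned privately and the counts $r$, $|V|-r$, $\ell-1$ match the respective bundle sizes. Hence $x$ is a feasible integral $s$-$t$-flow of value $\theta$.

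The substantive part is bounding $\lambda(x,S)$. Here I would first record the clique structure of $H$: the only non-isolated vertices are $a_s$, $a_t$, the arcs of $A_{st}$, and the arcs $\{a^1_v\st v\in V\}$; the set $\bar S = A_{st}\cup\{a_s,a_t\}$ is a clique of size exactly $k$, the arcs $a^1_v$ induce a copy of $G$, $a_s$ is joined to every $a^1_v$, and there is no edge between $A_{st}\cup\{a_t\}$ and the $a^1_v$'s. Consequently any clique $S$ with $|S|\le k$ either satisfies $S\subseteq \bar S$, in which case monotonicity gives $\lambda(x,S)\le \lambda(x,\bar S) = (\ell-1)+1 = k-1$ (only the $\ell-1$ single-arc $A_{st}$-paths and $P'$ meet $\bar S$); or $S\subseteq \{a_s\}\cup\{a^1_v\st v\in C\}$ for the clique $C := \{v\st a^1_v\in S\}$ of $G$. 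In the latter case the only flow-carrying paths meeting $S$ are $P'$ (when $a_s\in S$ or $R\cap C\neq\emptyset$) and the distinct paths $Q_v$ for $v\in C\setminus R$, so $\lambda(x,S)\le |C\setminus R| + 1$. If $a_s\in S$ then $|C|\le k-1 = \ell$, and whether $|C|=\ell$ (so $R$ meets $C$) or $|C|<\ell$, we get $|C\setminus R|\le \ell-1$ and thus $\lambda(x,S)\le \ell = k-1$.

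The one genuinely delicate case — which I expect to be the crux — is $a_s\notin S$ with $|C| = \ell+1 = k$, where a single interdicted vertex does not suffice. The argument is that every $\ell$-element subset of the $(\ell+1)$-clique $C$ is itself an $\ell$-clique, hence meets $R$; if $R\cap C$ were empty or a singleton $\{u\}$, then $C\setminus\{u\}$ would be an $\ell$-clique disjoint from $R$, a contradiction. Therefore $|R\cap C|\ge 2$, so $|C\setminus R|\le \ell-1$ and $\lambda(x,S)\le (\ell-1) + 1 = k-1$. The remaining sub-cases ($|C|=\ell$, where $R$ meets $C$, and $|C|<\ell$, where $|C\setminus R| + 1 \le |C| \le \ell-1+1 = \ell$) are immediate. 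Having checked $\lambda(x,S)\le k-1$ for every clique $S$ of $H$ with $|S|\le k$, we conclude $x\in X^{\star}$. \qed
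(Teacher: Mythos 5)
Your construction of the flow $x$ is identical to the paper's (path $P'=P_R$ through $a_s$, $a^1_v$ for $v\in R$, $a^0_v$ for $v\notin R$, $a_t$; private $s^0$-paths for $v\in R$, $s^1$-paths for $v\notin R$, and the $k-2$ arcs of $A_{st}$), and your case analysis of cliques $S$ in $H$ matches, including the key $|R\cap C|\ge 2$ observation for the $(\ell+1)$-clique case; the only difference is cosmetic, namely that you bound $\lambda(x,S)$ by directly enumerating the flow-carrying paths meeting $S$, whereas the paper bounds it via $|P_R\cap S|\ge 2$ together with $\lambda(x,S)\le\sum_{a\in S}x[a]-1$. (Two trivial nits: the inequality ``$|C\setminus R|+1\le|C|$'' in the $|C|<\ell$ subcase should read ``$\le|C|+1$'', and the dichotomy of cliques omits the harmless singleton $\{a\}$ for isolated $a$.)
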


\begin{proof}
Without loss of generality, we can assume $|R| = r$ (by adding arbitrary vertices to $R$).
We construct a flow $x \in X$ as follows:
\begin{itemize}
    \item Let $P_R$ be the unique $s$-$t$-path in $D$ containing the arcs $a_s$, $a^1_v$ for $v \in R$, $a^0_v$ for $v \in V \setminus R$, and $a_t$. Set $x_{P_R} = 1$.
    \item Fix an arbitrary order $a^0_1, \dots, a^0_{r}$ of the arcs in $A^0_s$, $\bar{a}^0_1, \dots, \bar{a}^0_{r}$ of the arcs in $A^0_t$, and $w^0_1, \dots, w^0_{|V|-r}$ of the nodes in $R$, respectively.
    Let $P^0_i$ be the path
    consisting of $a^0_i$, $a^0_{w_i}$, $\bar{a}^0_i$
    and set $x_{P^0_i} = 1$ for each $i = 1, \dots, r$.
    \item Fix an arbitrary order $a^1_1, \dots, a^1_{|V|-r}$ of the arcs in $A^1_s$, $\bar{a}^1_1, \dots, \bar{a}^1_{|V|-r}$ of the arcs in $A^1_t$, and $w^1_1, \dots, w^1_{|V|-r}$ of the nodes in $V \setminus R$, respectively.
    Let $P^1_i$ be the path
    consisting of $a^1_i$, $a^1_{w_i}$, $\bar{a}^1_i$
    and set $x_{P^1_i} = 1$ for each $i = 1, \dots, |V|-r$.
    \item Set $x_{a} = 1$ for each of the $k-2$ one-arc paths in $A_{st}$.
\end{itemize}

Note that $\sum_{P \in \mathcal{P}} x_P = |V| + 1 + k-2 = |V|+\ell=\theta$.
Now consider $S \in \mathcal{S}_k$.
We show that $\lambda(x, S) \leq k-1$.
If $|S| < k$, there is nothing to show.
For $|S| = k$, we show that $|S \cap P_r| \geq 2$, which implies $\lambda(x, S) \leq k-1$.
We distinguish the following cases:
\begin{itemize}
    \item If $S = \bar{S}$ then $|\bar{S} \cap P_r| = 2$ because $P_r$ contains both $a_s$ and $a_t$.
    \item If $S \cap \{a^1_v \st v \in V\} \neq \emptyset$ then $C := \{v \in V \st a^1_v \in S\}$ is a clique in $G$.
    If $a_s \in S$, then $|C| = k-1 = \ell$ and hence there is $v \in R \cap C$. Because $a_s \in P_r$ and $a^1_v \in P_r$, we obtain $|P_r \cap S| \geq 2$.
    If $a_s \notin S$, then $|C| = k = \ell + 1$ and hence there is $|R \cap C| \geq 2$ (as otherwise there is an $\ell$-clique contained in $C$ that is not hit by $R$). Let $v, v' \in R \cap C$ and note that $a^1_v, a^1_{v'} \in P_r \cap S$. \qed
\end{itemize}

\end{proof}

\clearpage

\section{Reduction from {\mrfrx} to {\mrfmx} (Proof of \cref{thm:mrfr-to-mrfm})}
\label{app:mrfr-to-mrfm}

In this section we provide a complete proof of \cref{thm:mrfr-to-mrfm}.
Note that the construction presented in \cref{sec:mrfr-to-mrfm} was a simplified version for the sake of conveying the high-level idea.
In order to ensure properties~\ref{prop:capacities-M} to \ref{prop:cut-M}, the actual construction presented here is more involved.
For convenience, we restate the theorem.

\restateThmMRFRtoMRFM*

Conceptually, the main differences between the construction presented below and the construction in \cref{sec:mrfr-to-mrfm} are the following:
\begin{itemize}
    \item each commodity has its own dedicated source and sink, accounting for property~\ref{prop:commodity-source-sink},
    \item the role of the commodities $A$ is taken over by commodity $0$ to accommodate property~\ref{prop:i0},
    \item every commodity from $F$ routes two units of flow through each arc to fulfill property~\ref{prop:commodity-arc},
    \item a cut of capacity $\theta$ is implanted in the initial {\mrfr} instance to ensure property~\ref{prop:cut-M}.
\end{itemize}

\subsection{Construction of the {\mrfmx} instance}
    We are given an instance of {\mrfrx} specified by a DAG $D = (V, A)$, compatibility graph $H = (A, E_H)$, interdiction budget $k \geq 2$, and demand $\theta$.
    Without loss of generality, we can assume that $|\delta^+_D(s)| = \theta$ and that there are $k-1$ parallel arcs $a_1, \dots, a_{k-1}$ from $s$ to $t$ in $D$, such that $\{a_i, a_j\} \in E_H$ for $i, j \in [k-1]$ with $i \neq j$.
    For the former property, split the source in two nodes $s$ and $\bar{s}$ connected by a bundle of $\theta$ arcs from $s$ to $\bar{s}$, each of which is a singleton in $E_H$; for the latter simply add $a_1, \dots, a_{k-1}$ to $D$, adding the clique $\{a_1, \dots, a_{k-1}\}$ to $H$, and increasing $\theta$ by $k-1$.
    
    We construct an instance of {\mrfmx} with digraph $D' = (V', A')$, capacities~$u'$, set of commodities $K$, and the same interdiction budget $k$ as as follows.
We first fix a linear order $\succ$ of the arcs of $D$ that is consistent with the DAG structure in the sense that for $a, a' \in A$ with $a \succ a'$ there is no path in $D$ from the tail of $a$ to the head of $a'$. Without loss of generality, we can assume that $a_1, \dots, a_{k-1}$ are the final $k-1$ arcs in the order (there internal order does not matter to us).
We define $F := \{(a, a') \st \{a, a'\} \notin E_H, a \succ a'\}$, i.e., $F$ contains all pairs of incompatible arcs, internally ordered by $\succ$. 
Moreover, let $F^+_a := \{(a, a') \st a' \in A, (a, a') \in F\}$ and $F^-_a := \{(a', a) \st a' \in A (a', a) \in F\}$.
Note that by construction $F^-_{a_i} = \emptyset$ for $i \in [k-1]$.
    
The maximum capacity used in our construction will be $$M := \max\{2(|A|-1)|F| + 3, \theta + 2\}.$$
The set of commodities is $K := \{0\} \cup F$.
The node set of $D'$ is defined by $$V' := V \cup \{v^+_a, v^-_a \st a \in A\} \cup \{s_i, t_i \st i \in K\}.$$
The arc set $A'$ of $D'$ and the capacities~$u'$ are given as follows. The arc set $A'$ consists of the two arcs $(s_{i_0}, s)$ and $(t, t_{i_0})$ with capacity $\theta$ each,
an arc $(v^-_a, v^+_{a'})$ with capacity $2$ for each $(a, a') \in F$,
and the following arcs for every~\mbox{$a = (v, w) \in A$}:
\begin{itemize}
    \item arc $(v, v^+_a)$ with capacity $1$,
    \item arc $(v^+_a, v^-_a)$ with capacity $M$,
    \item arc $(v^-_a, w)$ with capacity $1$,
    %\item for each $a' \in F^+_a$ the arc $(v^-_{a}, v^+_{a'})$ with capacity $2$,
    \item for each $i \in F \setminus F^-_a$ the arc $(s_i, v^+_a)$ with capacity $2$,
    \item for each $i \in F \setminus F^+_a$ the arc $(v^-_a, t_i)$ with capacity $2$,
    \item and arcs $(s_0, v^+_a)$ and $(v^-_a, t_0)$ both with capacity $M - 2|F| - 1$.
\end{itemize}

The demand of commodity $i \in F$ is $d_i = 2(|A|-1)$ and the demand of commodity $0$ is $d_0 = \theta + |A| \cdot (M - 2|F| - 1)$.

\subsection{Properties~\ref{prop:capacities-M} to \ref{prop:cut-M} and uniqueness of the flow of commodities $F$}
Note that the capacities in the constructed instance fulfill property~\ref{prop:capacities-M}.
One also readily verifies that the demand of each commodity $i \in K$ equals the total total out-capacity of its source $s_i$ and the total in-capacity of its sink $t_i$, i.e., $d_i = \sum_{a \in \delta^+_{D'}(s_i)} u'_a = \sum_{a \in \delta^-_{D'}(t_i)} u'_a$. 
Thus, the instance also fulfills property~\ref{prop:commodity-source-sink}.
Moreover, property~\ref{prop:i0} is fulfilled for $i_0 = 0$ (note that indeed $M = \sum_{i \in F} d_i + 3$).
Now consider the set 
\begin{align*}
    U := \{s_i \st i \in K\} \cup \{s\} \cup \{v^+_{a_i} \st i \in [k-1]\}.
\end{align*}
Note that $\delta_{D'}^+(U)$ contains the following arcs:
\begin{itemize}
    \item the $k-1$ arcs $(v^+_{a_i}, v^-_{a_i})$ having a total capacity of $(k-1)M$,
    \item for each $i \in F$, the arcs in $\delta^+_{D'}(s_i)$ except $(s_i, v^+_{a_j})$ for $j \in [k-1]$ (all of which are exit in $D'$ because $F^-_{a_j} = \emptyset$), having a total capacity of $\sum_{i \in F} (d_i - 2(k-1))$,
    \item the arcs in $\delta^+_{D'}(s_i)$ except for the arc $(s_0, s)$ and the arcs $(s_i, v^+_{a_j})$ for $j \in [k-1]$  (again all of which exist in $D'$), having a total capacity of $d_0  - \theta - (k-1) \cdot (M-2|F|-1)$,
    \item the arcs in $\delta^+_{D'}(s)$ except for the arcs $(s, v^+_{a_j})$ for $j \in [k-1]$ (all of which exit in $D'$ because $a_j$ is an out-arc of $s$ in $D$), having a total capacitiy of $\theta - (k - 1)$.
\end{itemize}
Adding up these capacities, we obtain 
\begin{align*}
    \textstyle \sum_{a \in \delta_{D'}^+(U)} u'_a \; = \;& \textstyle (k-1)M - 2(k-1)|F| + \sum_{i \in F} d_i \\
    & + d_0 - \theta - (k-1) \cdot (M-2|F|-1) + \theta - (k-1) = \textstyle \sum_{i \in K} d_i.
\end{align*}
Moreover, as each of the arcs $(v^+_{a_i}, v^-_{a_i})$ has capacity $M$, the node set $U$ fulfills the requirements of property~\ref{prop:cut-M}.
We will now show that the flow of the commodities $F = K \setminus \{i_0\}$ is actually uniquely determined by the instance. 
As a byproduct, this will also reveal that the instance fulfills property~\ref{prop:commodity-arc}.

\subsubsection{Unique flow for commodities $F$.}
Let $\paths_i$ denote the set of $s_i$-$t_i$-paths in $D'$
Let $X_K$ the set of all multicommodity flows for commodities $K$ fulfilling demand $d$ and respecting the capacities $u'$.

A useful property of the construction is that the flow of all commodities in $F = K \setminus \{0\}$ is in fact already uniquely determined by the instance, i.e., any two flows $x', x'' \in X_K$ can only differ in their flow for commodity $0$.
To see this, we define the following flow $\hat{x}$.
For each $i = (a, a') \in F$ we route the flow of commodity $i$ as follows:
\begin{itemize}
    \item route $2$ units of flow along the path $s_i$-$v^+_a$-$v^-_a$-$v^+_{a'}$-$v^-_{a'}$-$t_i$,
    \item for each $\bar{a} \in A \setminus \{a, a'\}$, route $2$ units of flow along the path $s_i$-$v^+_{\bar{a}}$-$v^-_{\bar{a}}$-$t_i$.
\end{itemize}

\begin{restatable}{lemma}{restateLemCommoditiesDetermined}\label{lem:commodities-determined}
    Let $x' \in X_K$. Then $x'_P = \hat{x}_P$ for every $i \in K \setminus \{i_0\}$ and $P \in \paths_i$.
\end{restatable}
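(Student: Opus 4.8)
The plan is to show that an arbitrary multicommodity flow $x' \in X_K$ is forced, arc by arc, to route every commodity of $F$ exactly as $\hat{x}$ does. The foundation is a saturation observation: for every commodity $i \in K$ the total out-capacity of $s_i$ and the total in-capacity of $t_i$ both equal $d_i$ (this is exactly the verification of property~\ref{prop:commodity-source-sink} carried out above), and since $\delta^-_{D'}(s_i) = \delta^+_{D'}(t_i) = \emptyset$ no other commodity can pass through $s_i$ or $t_i$. Hence in $x'$ every arc incident to $s_i$ carries precisely its capacity in commodity-$i$ flow, and likewise for $t_i$. Concretely, for $i = (a,a') \in F$ this pins $2$ units of commodity $i$ on $(s_i, v^+_{\bar a})$ for each $\bar a \in A \setminus \{a'\}$ and on $(v^-_{\bar a}, t_i)$ for each $\bar a \in A \setminus \{a\}$; for commodity $0$ it pins $M - 2|F| - 1$ units on each $(s_0, v^+_{\bar a})$ and each $(v^-_{\bar a}, t_0)$, $\theta$ units on $(s_0, s)$ and $(t, t_0)$, and, since $\delta^+_D(s)$ is a cut of $D$ of capacity $\theta$, one unit on each $(s, v^+_a)$ with $a \in \delta^+_D(s)$.

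Since $(v^+_{\bar a}, v^-_{\bar a})$ is the only arc leaving $v^+_{\bar a}$ and the only arc entering $v^-_{\bar a}$, these saturations already force, for every $i \in F$, at least $2$ units of commodity $i$ across each middle arc $(v^+_{\bar a}, v^-_{\bar a})$ (coming from $s_i$ when $\bar a \neq a'$, and forced from the $t_i$-side when $\bar a = a'$) — this is property~\ref{prop:commodity-arc} — and at least $M - 2|F| - 1$ units of commodity $0$ across every middle arc. As each middle arc has capacity $M$, its total load lies in $[M-1, M]$ and the commodities of $F$ together put at most $2|F| + 1$ units across it.

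The heart of the argument — and the step I expect to be the main obstacle — is to upgrade this to: each $i \in F$ puts \emph{exactly} $2$ units across each middle arc, i.e.\ takes no detour. I would argue by contradiction: if some $F$-commodity puts a third unit across $(v^+_{\bar a}, v^-_{\bar a})$, then the capacity-$M$ bound forces commodity $0$ to put exactly $M - 2|F| - 1$ there, which together with the saturations of $(v^-_{\bar a}, t_0)$ and $(s_0, v^+_{\bar a})$ forces commodity $0$ to route that traffic straight along $s_0$-$v^+_{\bar a}$-$v^-_{\bar a}$-$t_0$ and hence to leave the capacity-$1$ arcs $(v, v^+_{\bar a}), (v^-_{\bar a}, w)$ and the capacity-$2$ compatibility arcs incident to $v^+_{\bar a}, v^-_{\bar a}$ free of commodity-$0$ flow and to avoid arc $\bar a$ in its $\theta$-routing through $D$; since $D$ has a cut of capacity exactly $\theta$, that routing is a maximum flow, so it must use every arc in every minimum $s$-$t$ cut of $D$, whence $\bar a$ lies outside all such cuts. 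The third commodity-$i$ unit must then enter $v^+_{\bar a}$ through one of the low-capacity arcs just mentioned; tracing it upstream produces a third commodity-$i$ unit across another middle arc, and one needs a suitable potential or ordering argument — exploiting the acyclicity of $D$, the orientation of the compatibility arcs relative to $\succ$, and the fact that minimum-cut arcs of $D$ are certainly used by commodity $0$ — to show this propagation cannot continue forever. Making this termination argument airtight is the delicate point; everything else is bookkeeping.

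Granting that each $i \in F$ crosses each middle arc exactly twice, commodity $i = (a,a')$ is then pinned down: its total middle-arc usage equals $2|A| = d_i + 2$, so all but $2$ of its units traverse a single middle arc and the remaining $2$ traverse exactly two. A single-middle-arc unit can only follow $s_i$-$v^+_{\bar a}$-$v^-_{\bar a}$-$t_i$ with $\bar a \in A \setminus \{a,a'\}$, and the saturations force the $2$ units from $s_i$ to $v^+_{\bar a}$ to proceed straight to $t_i$ with no detour (a detour would place a second commodity-$i$ unit on another middle arc). The two double-middle-arc units can only start at $v^+_a$, pass to $v^-_a$, and continue to some $v^+_{a''}$ with $(a, a'') \in F$; choosing $a'' \neq a'$ would put a third commodity-$i$ unit on middle arc $a''$, so $a'' = a'$ and the unit follows $s_i$-$v^+_a$-$v^-_a$-$v^+_{a'}$-$v^-_{a'}$-$t_i$. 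This is exactly $\hat{x}$, so $x'$ and $\hat{x}$ agree on all commodities of $F$.
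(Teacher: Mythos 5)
You correctly identify the saturation structure at each $s_i, t_i$, the lower bound of two commodity-$i$ units per middle arc, and the fact that the crux is upgrading ``$\geq 2$'' to ``$= 2$''. But at exactly that crux you explicitly leave a gap: you note a third unit would have to enter $v^+_{\bar a}$ via a capacity-$1$ or capacity-$2$ arc and propagate to another middle arc, and you defer to ``a suitable potential or ordering argument'' without producing one. This is not a cosmetic omission — observe that the propagation can move both ways in the order $\succ$: the subdivided arcs of $D$ point from $\succ$-smaller to $\succ$-larger, whereas each compatibility arc $(v^-_a, v^+_{a'})$ with $(a,a') \in F$ points from the $\succ$-larger $a$ to the $\succ$-smaller $a'$. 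A naive ``trace upstream'' argument therefore does not obviously terminate, and the digression about minimum cuts of $D$ does not resolve it. Your case analysis on commodity $0$ through a single middle arc also does not yet constrain where the alleged third unit of an $F$-commodity would come from, since the compatibility arcs have capacity $2$ and are not a priori saturated by a single commodity.

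The paper closes this gap with a forward induction over $\succ$ starting at its maximal element $a$, and, crucially, by working with the \emph{aggregated} quantities $x'_b[\bar a] := \sum_{i \in F^+_b} x'_i[\bar a]$ rather than individual commodities. The base case is anchored by the structural facts that the $\succ$-maximal arc ends at $t$ and satisfies $F^-_a = \emptyset$, so the flow-conservation identity~\eqref{eq:multiflow-bounds} has only one possible balance, forcing $x'_i[(v, v^+_a)] = 0$ and $\sum_{i \in F^+_a} x'_i[(v^-_a, v^+_{a'})] = 2$. The induction step then uses that all arcs in $\delta^+_D(w)$ are $\succ$-larger than $a = (v,w)$, so the induction hypothesis kills $x'_i[(v^-_a, w)]$, and a three-way case split on the relation of $b$ to $a$ pins down the remaining arc values. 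Once the aggregated arc flows match $\bar x$, a short cut-saturation argument disaggregates them to the individual commodities. So your outline has the right ingredients (saturation, property~\ref{prop:commodity-arc}, an ordering argument over $\succ$), but the concrete induction — in particular, starting from the $\succ$-maximal arc and aggregating over $F^+_b$ — is the missing idea that makes the termination you are worried about a non-issue.
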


Before we prove the lemma, observe that it indeed implies that any $x' \in X_K$ fulfills $\sum_{P \in \paths_i : (v^+_a, v^-_a) \in P} x'_P = \sum_{P \in \paths_i : (v^+_a, v^-_a) \in P} \hat{x}_P = 2$ for every $i \in K \setminus \{i_0\}$ and all $a \in A$, and that any arcs that are not of this form have capacity at most~$M - 2$. Hence the instance fulfills property~\ref{prop:commodity-arc}.

\begin{proof}[of \cref{lem:commodities-determined}]
    We first show that the aggregated arc flow of the commodities in $F_b^+$ for $b \in A$ is identical to that of $\bar{x}$.
    We then argue that this implies that that also the path flows of the individual commodities must agree.
    
    We will use the notation $x_i[a] := \sum_{P \in \paths_i : a \in P} x_P$ for any multicommodity flow $x \in D$, $i \in K$, and $a \in A$ 
    and $x_b[a] := \sum_{i \in F^+_b} x_i[a]$ for the aggregated flow of the commodities in $F^+_b$ for $b \in A$.
    We show that
    \begin{align}
        \textstyle x'_b[\bar{a}] =  \bar{x}_b[\bar{a}] \label{eq:commodities-almost-okay}
    \end{align}
    for all $\bar{a} \in A'$ and all $b \in A$.

    We start by making two useful observations.
    First, we observe that for all $i \in K$ and all $\bar{a} \in \delta_{D'}^+(s_i) \cup \delta_{D'}^-(t_i)$ it holds that  $x'_i[\bar{a}] = u'_{\bar{a}}$ and $x'_j[\bar{a}] = 0 $ for $j \in K \setminus \{i\}$, because $\sum_{a \in \delta_{D'}^+(s_i)} u'_a = d_i = \sum_{a \in \delta_{D'}^-(t_i)} u'_a$ for all $i \in K$. 
    In particular, \eqref{eq:commodities-almost-okay} holds for all $\bar{a} \in A'$ that are incident to a source or sink of $D'$.
    Second, we observe that $(v^+_a, v^-_a)$ is the only arc leaving $v^+_a$ and the only arc entering $v^-_a$ and therefore flow conservation dictates
    $\sum_{\bar{a} \in \delta_{D'}^-(v^+_a)} x'_i[\bar{a}] = x'_i[(v^+_a, v^-_a)] = \sum_{\bar{a} \in \delta_{D'}^+(v^-_a)} x'_i[\bar{a}]$ %\label{eq:vplus-vminus}
    for all $i \in K$ and all $a \in A$.
    Combining these two observations, we obtain
    \begin{align}
        x'_i[(v^+_a, v^-_a)] & \textstyle = u'_{(s_i, v^+_a)} + x'_i[(v, v^+_a)] + \sum_{a' : (a', a) \in F^-_a} x'_i[(v^-_{a'}, v^+_a)] \notag \\
        & \textstyle  = u'_{(v^-_a, t_i)} + x'_i[(v^-_a, w)] + \sum_{a' : (a, a') \in F^+_a} x'_i[(v^-_a, v^+_{a'})] \label{eq:multiflow-bounds}
    \end{align}
    for all $a = (v, w) \in A$ and $i \in K$, where we define $u'_{(s_i, v^+_a)} := 0$ for $i \in F^-_a$ and $u'_{(v^-_a, t_i)} := 0$ for $i \in F^+_a$ to reflect the fact that the respective arcs do not exist in $D'$.
    
    To show \eqref{eq:commodities-almost-okay}, we use the fact that every arc in $\bar{a} \in A'$ is incident to at least one node $v^+_a$ or $v^-_a$ for some $a \in A$.
    By induction over the position of $a \in A$ in the linear order of $\succ$, starting from its maximal element, we show that \eqref{eq:commodities-almost-okay} holds for all arcs $\bar{a}$ incident to $v^+_a$ or $v^-_a$ and for all $b \in A$.
    
    For the base case, let $a$ be the maximal element of $\succ$.
    Note that $\succ$-maximality of $a$ implies $a = (v, t)$ for some $v \in V$ and that $F^-_a = \emptyset$.
    The incoming arcs of $v^+_a$ in $D'$ are $(v, v^+_a)$ and the arcs $(s_i, v^+_a)$ for $i \in K$.
    The outgoing arcs of $v^-_a$ in $D'$ are $(v^-_a, t)$, the arcs $(v^-_a, t_i)$ for $i \in K$ and the arcs $(v^-_a, v^+_{a'})$ for $(a, a') \in F$.
    Combining \eqref{eq:multiflow-bounds} with $x'_i[(v^-_a, t)] = 0$ and $F^-_a = \emptyset$ yields
    \begin{align*}
       2 + x'_i[(v, v^+_a)] 
        = \elsum{a' : (a, a') \in F^+_a} x'_i[(v^-_a, v^+_{a'})]
    \end{align*}
    for all $i \in F^+_a$,
    which is only possible if $x'_i[(v, v^+_a)] = 0$ for all $i \in 
    F_a^+$ and $\sum_{i \in F^+_a} x'_i[(v^-_a, v^+_{a'})] = 2$ for all $(a, a') \in F^+_a$, because $u_{(a, a')} = 2$.
    We conclude 
    \begin{align*}
        \textstyle x'_b[(v^-_a, v^+_{a'})] = 2 = \bar{x}_b[(v^-_a, v^+_{a'})]
    \end{align*}
    for $b = a$ and all $(a, a') \in F^+_a$.
    For $i \in F \setminus F^+_a$, this further implies $x'_i[(v^-_a, v^+_{a'})] = 0$ for all $(a, a') \in F^+_a$ and thus $2 + x'_i[(v, v^+_a)] = 2$ by \eqref{eq:multiflow-bounds}.
    Thus, $x'_i[(v, v^+_a)] = 0$ for all $i \in K \setminus \{i_0\}$, concluding the base case.

    We now carry out the induction step for $a = (v, w) \in A$.
    Because $a \prec a'$ for all $a' \in \delta^+_{D}(w)$ and because $\delta^+_{D'}(w) = \{(w, s^+_{a'}) \st a' \in \delta^+_{D}(w)\}$, 
    the induction hypothesis implies $x_i[\bar{a}] = 0$ for all $\bar{a} \in \delta_{D'}^+(w)$ and all $i \in K \setminus \{i_0\}$.
    Thus, by flow conservation, $x_i[(v^-_a, w)] = 0$ for all $i \in K \setminus \{i_0\}$.

    Now let $b \in A$. We distinguish three cases.
    First, for $b = a$, summing \eqref{eq:multiflow-bounds} over all $i \in F^+_{a}$ 
    and using the fact that $x'_i[(v^-_{a'}, v^+_{a})] = 0$ for all $i \in F^+_a$ and all $(a', a) \in F$ by induction hypothesis, we obtain
        \begin{align*} 
            \textstyle 2 |F^+_{a}| + x'_a[(v, v_a^+)] = \sum_{a' : (a, a') \in F} x'_a[(v^-_a, v^+_{a'})],
        \end{align*}
    implying $x'_i[(v, v^+_a)] = 0$ for all $i \in F^+_a$ and $x'_a[(v^-_a, v^+_{a'})] = 2$ for all $(a, a') \in F$.
    Note that this in particular implies $x'_b(v^-_a, v^+_{a'})] = 0$ for all $(a, a') \in F$ and $b \in A \setminus \{a\}$ because $u'_{(v^-_a, v^+_{a'})} = 2$.
    
    Second, if $(b, a) \in F^-_a$, then 
    summing \eqref{eq:multiflow-bounds} over all $i \in F^+_{b}$ 
    and using the fact that $x'_b[(v^-_{b}, v^+_{a})] = 2$ as well as $x_b[(v^-_{a'}, v^+_{a})] = 0$ for all $(a', a) \in F^-_a$ with $a' \neq b$  yields
        \begin{align*} 
            \textstyle 2 |F^+_{b} \setminus \{(b, a)\}| + x'_b[(v, v_a^+)] + 2 = 2|F^+_{b}| + \sum_{a' \in F^+_a} x'_i[(v^-_a, v^+_{a'})] = 2|F^+_{b}|,
        \end{align*}
    implying $x'_b[(v, v^+_{a'})] = 0$.

    Third, for $b \in A \setminus \{a\}$ with $(b, a) \notin F^-_a$, summing \eqref{eq:multiflow-bounds} over all $i \in F^+_{b}$ and using the fact that $x_b[(v^-_{a'}, v^+_{a})] = 0$ for all $(a', a) \in F^-_a$ yields
        \begin{align*} 
            2 |F^+_{b}| + x'_b[(v, v_a^+)] = 2 |F^+_{b}| + \sum_{a' \in F^+_a} x'_i[(v^-_a, v^+_{a'})] = 2 |F^+_{b}|
        \end{align*}
    implying $x'_b[(v, v^+_{a'})] = 0$.
    This completes the induction.

    It remains to show that \eqref{eq:commodities-almost-okay} implies the lemma, i.e., $x'$ and $\bar{x}$ restricted to commodities $K \setminus \{i_0\}$ are not only identical with respect to their aggregated arc flow values but also the path flows of the individual commodities.
    For this, consider any commodity $i = (b, a') \in F^+_b$ for some $b \in A$.
    Note that we have already shown that $x_i[\bar{a}] = 0$ for all arcs $\bar{a}$ that are not of the form $(s_i, v^+_a)$, $(v^-_a, t_i)$, $(v^+_a, v^-_a)$ for some $a \in A$ or $(v^-_b, v^+_a)$ with $(b, a) \in F^+_b$.
    We will show that in fact $x_i[(v^-_b, v^+_{a'})] = 2$ and $x_i[(v^-_b, v^+_a)] = 0$ for $a \neq a'$. 
    Then the arc flow of commodity $i$ only supports a unique path decomposition that is identical with that of $\bar{x}$.
    To show that $x_i[(v^-_b, v^+_a)] = 2$, consider $U := 
    V' \setminus \{t_i, v^+_a, v^-_a\}$ and let $C := \{(v^-_{a'}, t_i) \st a' \in A \setminus \{b, a\}\} \cup \{(v^-_b, v^+_a)\}$.
    Note that $\delta^+_{D'}(U)$ is an $s_i$-$t_i$-cut and that $x'_i[\bar{a}] = 0$ for all $\bar{a} \in \delta^+_{D'}(U_i) \setminus C$. Since $u'[\bar{a}] = 2$ for all $\bar{a} \in C$ and $d_i = 2(|A| - 1) = 2|C|$, we conclude that all $x_i[\bar{a}] = 2$ for all $\bar{a} \in C$.
    As, by the same argument, we obtain $x_j[(v^-_b, v^+_{a'})] = 2$ for $j = (b, a') \in F^+_b$, we conclude that $x_i[(v^-_b, v^+_{a'})] = 0$ for all $a' \neq a$. \qed
\end{proof}

\subsection{Transformation of solutions}

    Let $\paths$ denote the set of $s$-$t$-paths in $D$ and let $X^{\star}$ denote the set of integral $s$-$t$-flows of value $\theta$ in $D$ with $\lambda(x, S) \leq k - 1$ for all $S \in \mathcal{S}_{H,k}$.
    Let $\mathcal{S}'_k := \{S \subseteq A' \st |S| \leq k\}$.
    We show that there is $x \in X^{\star}$ if and only if $x' \in X_K$ with $\lambda(x, S) \leq kM - 1$ for all $S \in \mathcal{S}'_k$, which completes the proof of \cref{thm:mrfm-to-mrf}.

    \begin{lemma}
        If $X^{\star} \neq \emptyset$, then there is $x' \in X_K$ with $\lambda(x, S) \leq kM - 1$ for all $S \in \mathcal{S}'_k$.
    \end{lemma}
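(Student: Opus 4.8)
The plan is to exhibit an explicit multicommodity flow $x' \in X_K$ built from a given $x \in X^{\star}$ and to bound its interdiction loss by a case analysis mirroring the one in \cref{sec:mrfr-to-mrfm}. For commodity $0$ I would superpose two flows: a copy $\phi(x)$ of $x$ (the $s_0$-$t_0$-flow that replaces each $a=(v,w)\in A$ by the subpath $v$-$v^+_a$-$v^-_a$-$w$ and then prepends $(s_0,s)$ and appends $(t,t_0)$), which carries $\theta$ units; and a flow $\tilde x$ sending $M-2|F|-1$ units along the path $s_0$-$v^+_a$-$v^-_a$-$t_0$ for every $a\in A$. For the commodities in $F$ I would use the flow $\hat x$ from \cref{lem:commodities-determined} (each $i=(a,a')\in F$ routing $2$ units along $s_i$-$v^+_a$-$v^-_a$-$v^+_{a'}$-$v^-_{a'}$-$t_i$ and $2$ units along $s_i$-$v^+_{\bar a}$-$v^-_{\bar a}$-$t_i$ for each $\bar a\in A\setminus\{a,a'\}$). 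Then $x':=\phi(x)+\tilde x+\hat x$.

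Verifying $x'\in X_K$ should be routine. Commodity $0$ receives $\theta+|A|(M-2|F|-1)=d_0$ units and each $i\in F$ receives $2(|A|-1)=d_i$. For capacities, every arc $(v^+_a,v^-_a)$ carries $\phi(x)[a]+2|F|+(M-2|F|-1)=\phi(x)[a]+M-1\le M$ since $x$ respects unit capacities, while every arc not of this form respects its capacity by construction; this computation also shows that $x'$ loads every arc not of the form $(v^+_a,v^-_a)$ with at most $M-1$ units (using $\theta\le M-2$). Since all three pieces are integral whenever $x$ is, the same construction handles the integral version.

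For the interdiction bound, fix $S\in\mathcal S'_k$ and suppose for contradiction that $\lambda(x',S)>kM-1$. The union bound $\lambda(x',S)\le\sum_{a\in S}x'[a]$ together with the load bounds above forces $|S|=k$ and $S=\{(v^+_a,v^-_a)\st a\in S_0\}$ for some $S_0\subseteq A$ with $|S_0|=k$. I would then split into two cases. If there is a pair $(a,a')\in F$ with $a,a'\in S_0$, then the special path of commodity $(a,a')$ meets both $(v^+_a,v^-_a)$ and $(v^+_{a'},v^-_{a'})$, so its $2$ units are double-counted in the union bound and $\lambda(x',S)\le kM-2$, a contradiction. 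Otherwise $S_0$ is a clique in $H$, so $S_0\in\mathcal S_{H,k}$ and $\lambda(x,S_0)\le k-1$; here I would compute $\lambda(x',S)$ exactly: $\phi(x)$ contributes $\lambda(x,S_0)$, $\tilde x$ contributes $k(M-2|F|-1)$, and each commodity $(b,b')\in F$ contributes exactly $2k$ (it sends $2$ units through each of the $k$ interdicted arcs $(v^+_a,v^-_a)$ with $a\in S_0$, and since $\{b,b'\}\notin E_H$ while $S_0$ is a clique, at most one of $b,b'$ lies in $S_0$, so its special path meets at most one arc of $S$). Summing gives $\lambda(x',S)=\lambda(x,S_0)+kM-k\le kM-1$, again a contradiction. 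Hence $\lambda(x',S)\le kM-1$ for all $S\in\mathcal S'_k$.

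The main obstacle I anticipate is the bookkeeping in the second case: verifying that each $F$-commodity contributes precisely $2k$ no matter how its underlying pair $(b,b')$ intersects $S_0$, and that the single intersection of its special path with $S$ is accounted for correctly. A secondary point is justifying the reduction of $S$ to a set of arcs of the form $(v^+_a,v^-_a)$; this rests on $x'[a]\le M-1$ for every other arc $a$, which uses the construction's inequalities $\theta\le M-2$ and $M\ge 2(|A|-1)|F|+3$, so I would make those explicit before invoking them.
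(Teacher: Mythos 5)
Your proposal is correct and follows essentially the same route as the paper's proof: you build $x'$ as the superposition of the subdivided $\phi(x)$ routed for commodity $0$, the uniform filler $\tilde x$ on the $s_0$-$v^+_a$-$v^-_a$-$t_0$ paths, and the canonical flow $\hat x$ for the $F$-commodities, then reduce the interdiction analysis to the case $S=\{(v^+_a,v^-_a)\st a\in S_0\}$ and split on whether $S_0$ is a clique in $H$. The only cosmetic difference is that in the clique case you compute the contribution of each $F$-commodity exactly ($2k$, using that at most one of its endpoints lies in $S_0$), whereas the paper bounds it from above by the arc-flow $\hat x[(v^+_a,v^-_a)]=2|F|$; both give $\lambda(x',S)\le kM-1$, and your exact accounting is a small simplification.
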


    \begin{proof}
    Let $x \in X^{\star}$.
    We construct a flow $x' \in X_K$ as follows.
    For every $i \in K \setminus \{i_0\}$ and every $P \in \paths_i$, set $x'_P := \bar{x}_P$.
    For every $P \in \paths$, let $Q_P$ be the $s$-$t$-path in $D'$ consisting of the arcs in $\{(v, v^+_a), (v^+_a, v^-_a), (v^-_a, w) \st a = (v, w) \in P\}$ and set $x'_{Q_P} := x_P$.
    Furthermore, for every $a \in A$, let $Q_a$ be the path consisting of the arcs $(s, v^+_a)$, $(v^+_a, v^-_a)$, $(v^-_a, t)$, and set $x'_{Q_a} := M-2|F|-1$.

    Note that $\sum_{P \in \paths_i} x'_P = \sum_{P \in \paths_i} \bar{x}_P = d_i$ for $i \in K \setminus \{i_0\}$.
    Moreover, observe that $\sum_{P \in \paths_{i_0}} x'_P = \sum_{P \in \paths} x_P + \sum_{a \in A} (M-2|F|-1) = d_{i_0}$.
    It is also easy to verify that $x'[a] \leq u'_{a}$ for all $a \in A'$.
    Thus $x' \in X_K$.
    
    Now assume by contradiction there is $S' \in \mathcal{S}'_k$ with $\lambda (x', S') > kM - 1$.
    Note that this is only possible if $u'_{\bar{a}} = M$ for all $\bar{a} \in S'$, i.e., $S' = \{(v^+_a, v^-_a) \st a \in S\}$ for some $S \in \mathcal{S}_k$.
    If $S \in \mathcal{S}_{H, k}$, then $\lambda(x, S) \leq k - 1$.
    Hence, 
    \begin{align*}
        \lambda(x', S') & \textstyle \leq \sum_{a \in S'} (\bar{x}[a] + x'_{Q_a}) + \lambda(x, S)\\
        & = |S|(2|F| + M - 2|F| - 1) + \lambda(x, S)\\
        & \leq |S|(M - 1) + k - 1 \leq kM - 1.
    \end{align*}
    Furthermore, if $S \notin \mathcal{S}_{H, k}$, then there is $a, a' \in S$ with $a \neq a'$ and $\{a, a'\} \notin E_H$.
    Assume w.l.o.g., that $a \succ a'$ and let $Q$ be the path $s_i$-$v^+_a$-$v^-_a$-$v^+_{a'}$-$v^+_{a'}$-$t_i$ for commodity $i = (a, a') \in F$. Then $|Q \cap S'| = 2$, which implies
    \begin{align*}
        \lambda(x', S') & \textstyle \leq \sum_{a \in S'} u'_{a'} - \bar{x}[Q] = |S'|M - 2 \leq kM - 2.
    \end{align*}
    We conclude that there is no $S' \in \mathcal{S}'_k$ with $\lambda (x', S') > kM - 1$. \qed
\end{proof}

    \begin{lemma}
        If there is $x' \in X_K$ with $\lambda(x, S) \leq kM - 1$ for all $S \in \mathcal{S}'_k$, then $X^{\star} \neq \emptyset$.
    \end{lemma}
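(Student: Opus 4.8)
The plan is to strip off the flow of the commodities in $F = K \setminus \{i_0\}$, which by \cref{lem:commodities-determined} is forced to equal $\hat x$, and then to read off from the flow of commodity $0$ a path flow $x$ in $D$ of value $\theta$ that survives worst‑case interdiction with loss at most $k-1$ on every clique of $H$.

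First I would nail down the commodity‑$0$ part $g := x' - \hat x$ at the arc level. Since $\hat x$ already saturates every arc of the form $(v^-_a, v^+_{a'})$, $(s_i, v^+_a)$, $(v^-_a, t_i)$ with $i \in F$, and commodity $0$ cannot touch arcs incident to $s_i$ or $t_i$ for $i \neq 0$, the flow $g$ is supported only on $(s_0, s)$, $(t, t_0)$, the subdivision arcs $(v, v^+_a), (v^+_a, v^-_a), (v^-_a, w)$, and the arcs $(s_0, v^+_a), (v^-_a, t_0)$. By property~\ref{prop:commodity-source-sink}, $g$ saturates $\delta^+_{D'}(s_0)$ and $\delta^-_{D'}(t_0)$, so $g[(s_0,s)] = g[(t,t_0)] = \theta$ and $g[(s_0, v^+_a)] = g[(v^-_a, t_0)] = M - 2|F| - 1$; flow conservation at $v^+_a$ and $v^-_a$ then pins $g[(v, v^+_a)] = g[(v^-_a, w)] =: \tilde x_a \le 1$ and $g[(v^+_a, v^-_a)] = \tilde x_a + M - 2|F| - 1$. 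Reading the balance conditions off $D'$ shows that $\tilde x$ is an $s$-$t$-flow of value $\theta$ in $D$ with $\tilde x_a \le 1$, integral if $x'$ is; equivalently, at the arc level $g$ is the lift of $\tilde x$ to $D'$ (each arc replaced by its subdivision, with the prefix $(s_0,s)$ and suffix $(t,t_0)$) together with, for each $a$, a ``short'' flow of $M-2|F|-1$ units on the path $(s_0, v^+_a)$-$(v^+_a, v^-_a)$-$(v^-_a, t_0)$.

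Next I would fix a decomposition $x$ of $\tilde x$ into $s$-$t$-paths and replace $x'$ by the ``canonical'' witness that routes commodity $0$ as these short flows plus the lifts of the paths of $x$. This is without loss of generality: the only interdiction sets $S' \in \mathcal{S}'_k$ that can make $\lambda(x', S')$ exceed $kM-1$ are the sets $S'_S := \{(v^+_a, v^-_a) : a \in S\}$ for a $k$-subset $S \subseteq A$ (the only arcs of capacity $M$), and for a fixed commodity‑$0$ arc flow the canonical routing minimizes the loss on $S'_S$ for a suitable choice of $x$. \textbf{The step I expect to be the main obstacle is showing that a single decomposition of $\tilde x$ simultaneously minimizes the interdiction loss on all the sets $S'_S$}, i.e.\ that there is one path flow $x$ with $\lambda(x, S) \le k-1$ for every clique $S$ of size $k$; here one has to exploit the capacity gap between the arcs $(v^+_a, v^-_a)$ and $(s_0, v^+_a)$ (capacities $M$ resp.\ $M - 2|F|-1$) and the unit‑capacity subdivision arcs $(v, v^+_a), (v^-_a, w)$, which forces the commodity‑$0$ flow into exactly the form above and leaves only the decomposition of $\tilde x$ as a degree of freedom.

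Finally, with this canonical $x'$ in hand, $x$ is itself a path flow in $D$ with $\sum_{P \in \paths} x_P = \theta$ (integral if $x'$ is), and it remains to bound $\lambda(x, S)$ for $S \in \mathcal{S}_{H,k}$. If $|S| \le k - 1$ then $\lambda(x, S) \le \sum_{a \in S}\tilde x_a \le k-1$. If $|S| = k$, put $S' = S'_S$. A path of $\hat x$ meets $S'$ exactly when one of the (one or two) subdivision arcs it traverses lies in $S$; since $S$ is a clique in $H$ while every $(a, a') \in F$ has $\{a, a'\} \notin E_H$, no commodity of $F$ has both of its arcs in $S$, so each of the $|F|$ commodities contributes $2|S|$ units and $\lambda(\hat x, S') = 2|S||F|$. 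The short flows contribute $|S|(M - 2|F| - 1)$, and the lifts contribute $\lambda(x, S)$, so
\begin{align*}
    kM - 1 \;\ge\; \lambda(x', S') \;=\; 2|S||F| + |S|(M - 2|F| - 1) + \lambda(x, S) \;=\; |S|(M-1) + \lambda(x, S),
\end{align*}
and with $|S| = k$ this yields $\lambda(x, S) \le k-1$. Hence $x \in X^{\star}$, so $X^{\star} \neq \emptyset$.
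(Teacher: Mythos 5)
Your setup and arc-flow analysis are correct and match the paper: invoking \cref{lem:commodities-determined}, isolating $g := x' - \hat{x}$, using property~\ref{prop:commodity-source-sink} to saturate $\delta^+_{D'}(s_0)$ and $\delta^-_{D'}(t_0)$, and reading off $\tilde{x}$ as an $s$-$t$-flow of value $\theta$ in $D$ are all sound. Your closing computation ($\lambda(x', S'_S) = 2|S||F| + |S|(M-2|F|-1) + \lambda(x, S)$ for a clique $S$, using that no $(a,a') \in F$ has both endpoints in $S$) is also correct for a canonical $x'$.

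However, there is a genuine gap at exactly the step you flagged, and the resolution you sketch does not work. You claim the capacity gap forces the commodity-$0$ flow into the canonical form ``short paths plus lifts of $s$-$t$-paths,'' leaving only the decomposition of $\tilde{x}$ as a degree of freedom. That conclusion holds at the arc-flow level (which you showed), but not at the path-flow level: the commodity-$0$ part of $x'$ may use \emph{invalid} paths that enter an arc gadget via $(v, v^+_a)$ and then exit to $t_0$ via $(v^-_a, t_0)$ without traversing $(v^-_a, w)$, or vice versa. Nothing in the capacity structure forbids this, so ``the decomposition of $\tilde{x}$'' is not the only residual degree of freedom, and the path flow $x$ you want to extract is not directly readable from $x'$. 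Moreover, it is not the case that an \emph{arbitrary} decomposition of $\tilde{x}$ works --- different decompositions give different $\lambda(x, S)$ values, and nothing in your argument picks out a good one.

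The paper closes this gap with an uncrossing argument: if $Q$ is an invalid path using $(v, v^+_a)$ but not $(v^-_a, w)$, flow conservation (using $g[(v, v^+_a)] = g[(v^-_a, w)]$, which you already derived) produces a complementary invalid path $R$ using $(v^-_a, w)$ but not $(v, v^+_a)$; one then replaces $\varepsilon = \min\{x'_Q, x'_R\}$ units on $Q$ and $R$ by the same amount on the recombined path $Q[s_0, v^+_a] \cdot (v^+_a, v^-_a) \cdot R[v^-_a, t_0]$ and on the short path $(s_0, v^+_a)$-$(v^+_a, v^-_a)$-$(v^-_a, t_0)$, and checks case by case that $\lambda(\cdot, S)$ does not increase for any $S \in \mathcal{S}'_k$. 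Iterating this makes $x'$ canonical without changing its arc flow, and the resulting path decomposition of $\tilde{x}$ (taken from the long paths of the uncrossed $x'$) is exactly the one you need. Without this uncrossing lemma, you cannot justify the ``without loss of generality'' in your second paragraph, which is the crux of the proof.
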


\begin{proof}

    \cref{lem:commodities-determined} implies that $x'_P = \bar{x}_P$ for all $i \in K \setminus \{i_0\}$ and all $P \in \paths_i$.
    We further establish that we can also assume that the flow of commodity $i_0$ in $x'$ adheres to a certain structure.
    We call a path $P \in \paths'_{i_0}$ \emph{invalid}, if there is $a = (v, w) \in A$ such that $|\{(v, v^+_a), (v^-_a, w)\} \cap P| = 1$.
    We show that without loss of generality, we can assume $x'_P = 0$ for all invalid $P \in \paths'_{i_0}$.

    Consider any $a = (v, w) \in A$.
    Recall that $x'_i[(s_0, v^+_a)] = M - |2F| - 1 = x'_i[(v^-_a, t_0)] $, because the in-capacities of $s_0$ and out-capacities of $t_0$ both equal the demand $d_0$.
    Moreover, $x'_i[(v^-_{b}, v^+_{b'})] = 0$ for all $(b, b') \in F$.
    Thus, we obtain $x'_{i_0}[(v, v^+_a)] = x'_{i_0}[(v^-_a, w)]$ by flow conservation.
    Hence if there is $Q \in \paths_{0}$ with $x'_Q > 0$ and $(v, v^+_a) \in Q$ but $(v^-_a, w) \notin Q$, then there is also $R \in \paths_{0}$ with $x'_Q > 0$ and $(v^-_a, w) \in R$ but $(v, v^+_a) \notin R$.
    Note that this also implies $(v^-_a, t_{0}) \in Q$ and $(s_{0}, v^+_a) \in R$.
    Let $Q'$ be the path consisting of the concatenation of the prefix $Q[s_{0}, v^+_a]$, the arc $(v^+_a, v^-_a)$, and the suffix $R[v^-_a, t_{0}]$.
    Let $R'$ be the path consisting of the arcs $(s_{0}, v^+_a)$, $(v^+_a, v^-_a)$, $(v^-_a, t_{0})$.
    Let $x''$ be the flow constructed from $x'$ by reducing the flow on both $Q$ and $R$ by $\varepsilon := \min \{x'_Q, x'_R\}$ and increasing the flow on both $Q'$ and $R'$ by $\varepsilon$.
    Note that $x'_i[\bar{a}] = x''_i[\bar{a}]$ for all $\bar{a} \in A'$ by construction.
    Now consider any $S \in \mathcal{S}'_k$.
    We show that $\lambda(x'', S) \leq kM - 1$.
    Note that this is trivially the case if there is $\bar{a} \in S$ with $u'_{\bar{a}} < M$.
    Hence we can assume that $S \subseteq \{(v^+_{a'}, v^-_{a'}) \st a' \in A\}$.
    We distinguish three cases.
    \begin{itemize}
        \item $Q' \cap S = \emptyset = R' \cap S$: In this case $\lambda(x'', S) \leq \lambda(x', S) \leq kM - 1$ because $x''_P = x'_P$ for all $P \in \paths' \setminus \{Q', R'\}$.
        \item $Q' \cap S \neq \emptyset$, $R' \cap S = \emptyset$: Then $R \cap S \neq \emptyset$ or $Q \cap S \neq \emptyset$ because $Q' \subseteq Q \cup R$. Thus $\lambda(x'', S) = \lambda(x'_P, S) \leq kM - 1$.
        \item $R' \cap S \neq \emptyset$: Note this implies $(v^+_a, v^-_a) \in S$ and hence $Q, R, Q', R'$ are all intersected by $S$. Therefore $\lambda(x'', S) = \lambda(x', S) \leq kM - 1$.
    \end{itemize}
    We conclude that $x'' \in X_K$ and $\lambda(x'', S) \leq kM - 1$ for all $S \in \mathcal{S}'_k$.
    Therefore, we can assume without loss of generality that $x'_P = 0$ for all invalid paths $P \in \paths_0$.

    We call a path $P \in \paths_0$ \emph{long} if starts with $(s_0, s)$ and ends with $(t_0, t)$.
    Note that for every long $P \in \paths_{0}$, the arcs $\{a \in A \st (v^+_a, v^-_a) \in P\}$ form an $s$-$t$-path in $D$, which we denote by $\pi(P)$. Note that $\pi$ is in fact a bijection between the $s$-$t$-paths of $D'$ and the corresponding $s_0$-$t_0$-paths using the subdivided arcs.
    
    Define flow $x$ in $D$ by $x_P := x'_{\pi^{-1}(P)}$ for $P \in \paths$.
    Note that $x[a] \leq  1$ for all $a = (v, w) \in A$, because \begin{align*}
        \textstyle \sum_{P \in \paths : a \in P} x_P \leq \sum_{P \in \paths_0 : (v, v^+_a) \in P} x'_P \leq u'_{(v, v^+_a)} = 1.
    \end{align*}
    Moreover, because $x'$ does not have any invalid paths, every path $P$ starting with $(s, s_0)$ and $x'_P > 0$ is long, and hence $\sum_{P \in \paths} x_P = x'_0[(s, s_0)] = \theta$.
    Let $S \in \mathcal{S}_{H,k}$ and $\bar{S} := \{(v^+_a, v^-_a) : a \in S\}$.
    Note that because $S$ is a clique in $H$, there is no $a, a' \in S$  with $(a, a') \in F$. Therefore $|P \cap \bar{S}| \leq 1$ for all $P \in \paths_{i}$ with $x'_P > 0$ for $i \in K \setminus \{0\}$.
    Similarly, $|P \cap \bar{S}| \leq 1$ for every valid paths $P \in \paths_{0}$ that are not long, because such a $P$ contains exactly one arc of the form $(v^+_a, v^-_a)$.
    Therefore
    \begin{align*}
        kM - 1 & \geq \textstyle  \lambda(x', \bar{S}) = k(M - 1) + \sum_{P \in \paths^{\star}_0 : P \cap P \cap \bar{S} \neq \emptyset} x'_P \\
        & \textstyle = k(M - 1) + \sum_{P \in \paths : P \cap S \neq \emptyset} x_P = k(M-1) + \lambda(x, S),
    \end{align*}
    where $\paths^{\star}_0$ denotes the long paths in $\paths_0$.
    We conclude that $\lambda(x, S) \leq k - 1$. \qed
\end{proof}

\clearpage

\section{Reduction from {\mrfmx} to {\mrfdec} (Missing proofs and details from \cref{sec:mrfm-to-mrf})}
\label{app:mrfm-to-mrf}

\subsection{Immune arcs}
\label{app:mrfm-to-mrf-immune}

In the construction presented in \cref{sec:mrfm-to-mrf}, we make use of immune arcs that cannot be interdicted.
While such arcs are not part of the problem definition, their use is without loss of generality in this context, and they are only used here to simplify the presentation of the construction.
To simulate an immune arcs $a = (v, w)$ with capacity $u_a \in \mathbb{Z}_+$, one simply replaces it by a bundle of $u_a$ regular (i.e., not immune) arcs from $v$ to $w$, each with capacity $1$.
As we show in the analysis of our construction, a YES instance of {\mrfm} results in an instance of {\mrfdec} in which there is a flow $x'$ of value $\Delta$ with $\lambda(x', S) \leq kM - 1$ (when forbidding to interdict the immune arcs).
In the instance where the immune arcs are replaced by bundles of capacity-$1$ arcs, one can derive a flow $x''$ from $x'$ by allocating flow paths using an immune arc arbitrarily to the arcs in the bundle (when flow is integral, this allocation can also be done integrally).
Note that for any $S \subseteq A'$ containing at least one arc of capacity $1$ it holds that $\lambda(x', S) \leq (k-1)M' + 1 < kM - 1$, and hence the instance with the bundles replacing the immune arcs is still a YES instance. Conversely, introducing bundles can not turn a NO instance into a YES instance as it gives the adversary more power.

   \subsection{Constructing a single-commodity flow from a multicommodity flow (Proof of \cref{lem:multi-to-single})}

    \begin{remark}[Preserving integrality of the flow]
    We remark that the flow $\psi(x)$ constructed from $x \in X_K$ as described in \cref{sec:mrfm-to-mrf} might not be integral even when $x$ is integral.
    However, if $x$ is integral, one can use an alternate construction where one first virtually splits each $s_i$-$t_i$-path with $x_P = \ell$ into $\ell$ paths of value one and then picks a perfect matching between the $d_i$ paths of value one and the arcs in $A_i$, extending each path using the arc it is matched to.
    \end{remark}

\restateLemMultiToSingle*
\begin{proof}
    Note that $\sum_{P \in \paths'} x'_P = \Delta$.
    We show that $\lambda_k(x') \leq kM - 1$. To this end, consider any $S \subseteq A'$ of at most $k$ non-immune arcs.
    We distinguish four cases and show that $\lambda(x', S) \leq kM - 1$ in each case.
    \begin{itemize}
        \item Case 1: $\max_{a \in S} u'_a \leq M$ and $u'_a < M$ for at least one $a \in S$.
        In this case, $\lambda(x', S) \leq \sum_{a \in S} u'_a \leq kM - 1$.

        \item Case 2: $u'_a = M$ for all $a \in S$. 
        This implies that $S \subseteq A$, i.e., all arcs in $S$ are contained in the digraph $D$ of the {\mrfmx} instance.
        Then $\lambda_k(x) \leq kM - 1$ implies $\lambda(x', S) = \lambda(\psi(x), S) = \lambda(x, S) \leq kM - 1$.
        
        \item Case 3: There is $a' \in S$ with $u'_{a'} > M$. 
        Note that the only non-immune arcs in $D'$ with capacity larger than $M$ are the arcs $(s'_i, s_i)$ for $i \in K'$, which have capacity $M' = M+2k-3$. 
        Hence let $i \in K'$ be such that $a' = (s'_i, s_i)$.
        We show that 
        \begin{align}
            \textstyle \zeta_{a} := x'[a] - \sum_{P \in \paths' : a, a' \in P} x'_P \leq M - 2 \label{eq:M-2-flow}
        \end{align} 
        for all non-immune arcs $a \in A'$.
        Note that this implies $$\textstyle \lambda(x', S) \leq x'[a'] + \sum_{a \in S \setminus \{a'\}} \zeta_a \leq M' + (k-1) (M-2) = kM - 1.$$
        To see that \eqref{eq:M-2-flow} holds for all $a \in A$, consider the following cases:
        \begin{itemize}
            \item Case 3a: $a = (s'_{j}, s_{j})$ for some $j \in K'$. This implies that both $a$ and $a'$ are contained in all paths of the form $s'$-$s'_1$-$s_1$-$\dots$-$s'_m$-$s_m$-$\bar{s}$-$t'$ used in the construction of $\bar{x}$). Therefore $\sum_{P : a, a' \in P} x'_{P} \geq 2k$ and hence $\zeta_{a} \leq M' - 2k = M - 3$.
            \item Case 3b: $a \in A$, i.e., $a$ is part of the digraph $D$. If $u_a \leq M - 2$, then $\zeta_a \leq M - 2$. If $u_a > M - 1$, then property~\ref{prop:commodity-arc} of \cref{thm:mrfr-to-mrfm} guarantees
            $\sum_{Q \in \paths' : a, a' \in Q} x'_Q = \sum_{P \in \paths_i : a \in P} x_P \geq 2$, and hence again $\zeta_{a} \leq M - 2$.
            \item Case 3c: $a \in \bar{A}$. Then there is a path $P$ of the form $s'$-$s'_1$-$s_1$-$\dots$-$s'_m$-$s_m$-$\bar{s}$-$t'$ containing both $a$ and $a'$ with $\bar{x}_P = 1$, implying $\zeta_{a} \leq u'_a - 1 = M - 2$.
            \item Case 3d: $a \in A_i$ for some $j \in K'$.
            If $i = j$ then $\sum_{P \in \paths': a, a'} x'_P \geq \psi(x)[a] = 1$ because all flow-carrying paths of $\psi(x)$ using $a \in A_i$ are extensions of $s_i$-$t_i$-paths and thus also use $a' = (s'_i, s_i)$.
            If $i \neq j$, then $\sum_{P \in \paths': a, a'} x'_P \geq \bar{x}_{P_{ija}} = 1$, where $P_{ija}$ is the path of the form $s'$-$s'_i$-$s_i$-$t_j$-$t'$ containing $a \in A_j$ as its last arc.
            Thus, in both cases $\sum_{P \in \paths': a, a' \in P} x'_P \geq 1$ and therefore $\zeta_a \leq u'_a - 1 = M - 2$.
            \item Case 3e: $u'_a \leq M - 2$. In this case $\zeta_a \leq x'[a] \leq M - 2$.
        \end{itemize}
    \end{itemize}
    We thus conclude $\lambda(x', S) \leq kM - 1$ in all cases as desired, and therefore $\min_{S \in \mathcal{S}'_k} \sum_{P \in \paths' : P \cap S = \emptyset} \geq \Delta - (kM - 1)$. \qed
\end{proof}

\subsection{Constructing a multicommodity flow from a single-commodity flow (Proof of \cref{lem:hat-x})}

To prove \cref{lem:hat-x}, we first show the following intermediary result. 

\begin{lemma}\label{lem:sauration}
    Let $x' \in X'$ with $\min_{S \in \mathcal{S}'_k} \sum_{P \in \paths' : P \cap S = \emptyset} x'_P \geq \Delta - (kM - 1)$.
    Then $\sum_{P \in \paths'} x'_P = \Delta$ and $x'$ sends $2k$ units of flow along paths of the form $s'$-$s'_1$-$s_1$-$s'_2$-$s_2$-$\dots$-$s'_m$-$s_m$-$\bar{s}$-$t'$.
\end{lemma}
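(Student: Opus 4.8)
The plan is to rewrite the hypothesis as a uniform bound on interdiction losses and then pin down the shape of $x'$ by examining a small family of $s'$-$t'$ cuts of capacity exactly $\Delta$. Write $\epsilon := \Delta - \sum_{P \in \paths'} x'_P \ge 0$ for the deficit of $x'$ below the trivial upper bound $\Delta = \sum_{a \in \delta_{D'}^+(s')} u'_a$ on its value. Then the hypothesis $\min_{S \in \mathcal{S}'_k} \sum_{P \in \paths' : P \cap S = \emptyset} x'_P \ge \Delta - (kM - 1)$ is precisely the statement that $\lambda(x', S) \le (kM - 1) - \epsilon$ for every $S \in \mathcal{S}'_k$. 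So the first assertion reduces to: whenever $\epsilon > 0$, exhibit a set $S$ of at most $k$ non-immune arcs with $\lambda(x', S) > (kM - 1) - \epsilon$; once $\epsilon = 0$ is established, the second assertion reduces to a short averaging computation.

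First I would do the cut bookkeeping. Let $U \subseteq V$ be the vertex set from property~\ref{prop:cut-M} of \cref{thm:mrfr-to-mrfm}, put $W := \{s'\} \cup \{s'_i : i \in K'\} \cup \{\bar s\} \cup U$ and $W' := W \setminus \{\bar s\}$. Using the definitions of $M'$, $\xi^s_i$, $\xi^t_i$ and the identities $\sum_{a \in \delta_D^+(U)} u_a = \sum_{i \in K} d_i$ and $|\{a \in \delta_D^+(U) : u_a = M\}| = k-1$ of property~\ref{prop:cut-M}, one verifies that $\delta_{D'}^+(s')$, $\delta_{D'}^+(W)$, and $\delta_{D'}^+(W')$ are $s'$-$t'$ cuts, all of capacity exactly $\Delta$. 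Moreover the nodes $s', s'_i, \bar s$ are unreachable from $V \setminus U$ and, by property~\ref{prop:commodity-source-sink}, no arc of $D$ enters a commodity source; hence every $s'$-$t'$ path crosses $\delta_{D'}^+(W)$ exactly once, so $\lambda(x', S) = \sum_{a \in S} x'[a]$ for every $S \subseteq \delta_{D'}^+(W)$. The only non-immune arcs on $\delta_{D'}^+(W)$ are the $2k$ arcs of $\bar A$ (each of capacity $M-1$) together with the arcs of $\delta_D^+(U)$, of which exactly $k-1$ have capacity $M$ and the rest have capacity at most $M-2$.

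Now suppose $\epsilon > 0$. Since $x'$ has value $\Delta - \epsilon$ and crosses $\delta_{D'}^+(W)$, the arcs of $\delta_{D'}^+(W)$ fall $\epsilon$ short of their total capacity. Letting $S$ consist of the $k-1$ capacity-$M$ arcs of $\delta_D^+(U)$ together with a carefully chosen arc of $\bar A$ and evaluating $\lambda(x', S) = \sum_{a \in S} x'[a] \le (kM-1) - \epsilon$, a first pass forces the entire residual $\epsilon$ onto those $k-1$ capacity-$M$ arcs and every other arc of $\delta_{D'}^+(W)$, in particular every arc of $\bar A$, to be saturated. Pushing this further by shifting $W$ across the capacity-$M$ arcs (which, thanks to the construction, again yields $s'$-$t'$ cuts of capacity $\Delta$) and by arguing at $\delta_{D'}^+(W')$, one localizes the residual onto arcs whose in- or out-neighborhood in $D'$ is too small to carry it, producing a contradiction; hence $\epsilon = 0$, i.e.\ $\sum_{P \in \paths'} x'_P = \Delta$. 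This makes $\delta_{D'}^+(s')$ and $\delta_{D'}^+(W')$ saturated cuts, so $x'[(s'_1, s_1)] = M'$, $x'[(s', \bar s)] = 2k(M-2)$, and $x'[(s_m, \bar s)] = 2k$. Now every long path uses $(s_m, \bar s)$, so the flow $L$ on long paths satisfies $L \le 2k$; conversely, any $s'$-$t'$ path using $(s'_1, s_1)$ that later reaches $\bar s$ must traverse the whole backbone $s'$-$s'_1$-$s_1$-$\cdots$-$s_m$-$\bar s$ (the only route from $s_1$ to $\bar s$) and leave via $\bar A$, i.e.\ is a long path, so $L$ is the flow on paths containing both $(s'_1, s_1)$ and some arc of $\bar A$. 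For any $(k-1)$-subset $T$ of $\bar A$, the bound $\lambda(x', \{(s'_1, s_1)\} \cup T) \le kM - 1$ and $x'[(s'_1, s_1)] = M'$ force $T$ to carry at most $(k-1)(M-2)$ units on paths avoiding $(s'_1, s_1)$; averaging over the $2k$ arcs of $\bar A$ shows $\bar A$ carries at most $2k(M-2)$ units off paths through $(s'_1, s_1)$, while it carries $x'[(s', \bar s)] + x'[(s_m, \bar s)] = 2k(M-1)$ in total, whence $L \ge 2k$, so $L = 2k$.

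I expect the genuinely delicate point to be the proof that $\sum_{P \in \paths'} x'_P = \Delta$: a single well-chosen interdiction set only pins the residual down to the $k-1$ capacity-$M$ arcs and leaves a slack of order $\epsilon/k$, so the residual must be chased through several of the capacity-$\Delta$ cuts before it has nowhere left to hide, and it is precisely here that the exact values $M$, $M-1$, $M' = M + 2k - 3$, $2k$, $\xi^s_i$, $\xi^t_i$ of the construction are used. The cut-capacity computations, the ``crosses exactly once'' observation, and the averaging argument for the long paths are routine by comparison.
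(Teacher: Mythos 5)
Your cut bookkeeping, the choice of $W$ (which agrees with the paper's $U' = U \cup \{s',\bar s\} \cup \{s'_i : i \in K'\}$), and the first saturation step (interdicting $S_U \cup \{\bar a\}$ for $\bar a \in \bar A$ to conclude $x'[a] = u'_a$ for all $a \in C \setminus S_U$) are all correct and match the paper. But there is a genuine gap at the very step you yourself flag as ``genuinely delicate'': you try to establish $\epsilon = 0$, i.e.\ $\sum_{P} x'_P = \Delta$, \emph{before} the rest of the argument, by ``shifting $W$ across the capacity-$M$ arcs'' and ``arguing at $\delta^+_{D'}(W')$'' until the residual ``has nowhere left to hide.'' This is not a proof. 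After the first pass the entire residual sits on the $k-1$ capacity-$M$ arcs of $S_U$, and those are saturated neither by the cut at $W$ nor by the cut at $W'$; shifting the cut across one of them yields a new set of arcs whose total capacity again equals $M$, so the residual can simply ride along through flow conservation and into the interior of $D$. No cut of capacity $\Delta$ alone forces $\epsilon = 0$, because the value bound $\sum_P x'_P \le \Delta$ is already tight for the cut; what one needs is an \emph{interdiction} constraint that involves one of the capacity-$M'$ arcs $(s'_i,s_i)$, and your cut-chasing never invokes one.

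Moreover, your second half is circular given this gap. To run the averaging argument you use $x'[(s'_1,s_1)] = M'$, which you obtain from ``$\delta^+_{D'}(s')$ and $\delta^+_{D'}(W')$ are saturated,'' i.e.\ from $\epsilon = 0$. The paper does the opposite: it never needs $\epsilon = 0$ or $x'[(s'_1,s_1)] = M'$ for the long-path structure. It combines the cut $\delta^+_{D'}(U_1)$ for $U_1 := \{s',\bar s, s'_1\}$, which gives $\sum_P x'_P \le \Delta + x'[(s'_1,s_1)] - M'$, with the interdiction bound at $\bar S \cup \{(s'_1,s_1)\}$ for a $\bar S$ chosen by averaging so that $\sum_{a\in\bar S}\varrho_a \le k-1$; in the resulting two-sided sandwich the unknown $x'[(s'_1,s_1)]$ \emph{cancels}, forcing equality throughout and hence $\varrho_a = 1$ for all $a\in\bar A$, i.e.\ the $2k$ unit long paths. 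Only then, using $x'[(s'_2,s_2)] = M'$ (deduced from the now-known saturation of $(s',s'_2)$ and $(s_1,s'_2)$) together with the overlap of $(s'_2,s_2)$ with $\bar A$, does the paper compute $\lambda(x',\bar S\cup\{(s'_2,s_2)\}) = kM-1$ exactly and conclude $\sum_P x'_P = \Delta$. In short, the order of deductions in your sketch is inverted relative to what the available inequalities can support, and the step that fills the gap in the paper (the $U_1$-cut / $\{(s'_1,s_1)\}\cup\bar S$ sandwich with cancellation of $x'[(s'_1,s_1)]$) is entirely absent from your proposal.

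A smaller issue: your parenthetical ``the rest have capacity at most $M-2$'' for $\delta_D^+(U)$ is not guaranteed by the stated properties (e.g.\ the construction in \cref{app:mrfr-to-mrfm} produces arcs of capacity $M-2|F|-1$, which equals $M-1$ when $|F|=0$); fortunately nothing in your argument actually uses it.
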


\begin{proof}
    By property~\ref{prop:cut-M} of \cref{thm:mrfr-to-mrfm}, there exists $U \subseteq V$ such that $s_0, \dots, s_m \in U$ and $t_0, \dots, t_m \notin U$ and such that $\sum_{a \in \delta^+(U)} u_a = \sum_{i \in K} d_i$ and $\delta_{D}^+(U)$ contains a set $S_U$ of $k - 1$ arcs of capacity $M$.
    Let $$U' := U \cup \{s', \bar{s}\} \cup \{s'_i \st i \in K\}.$$ and consider the corresponding $s'$-$t'$-cut $C := \delta_{D'}^+(U')$. 
    Note that 
    \begin{align*}
        \textstyle \sum_{a \in C} u'_a & \textstyle = \sum_{a \in \bar{A}} u'_a + \sum_{i \in K} \left(u'_{(s',t_i)} + u'_{(s_i, t')} + \sum_{j \in K' \setminus \{i\}} u'_{(s_i, t_j)}\right) + \sum_{a \in \delta_{D}^+(U)} u_a\\
        & = \textstyle 2k(M-1) + \sum_{i \in K'} \left(\xi_i^s + \xi_i^t + (m - 1) d_i \right) + \sum_{i \in K} d_i\\\
        & =  \Delta.
    \end{align*}
    Note that $\bar{A} \subseteq C$, and hence $\sum_{a \in C \setminus (S_U \cup \{\bar{a}\})} u'_a = \Delta - (kM - 1)$ for any \mbox{$\bar{a} \in \bar{A}$} because $u'_{a} = M$ for all $a \in S_U$ and $u'_{\bar{a}} = M-1$ for all $\bar{a} \in \bar{A}$. 
    From  $S_U \cup \{\bar{a}\} \in \mathcal{S}'_k$ we obtain $\sum_{P \in \paths' : P \cap (S_U \cup \{\bar{a}\}) = \emptyset} x'_P \geq \Delta - (kM-1)$ for any $\bar{a} \in \bar{A}$, which by the above is only possible if $x'[a] = u'_a$ for all $a \in C \setminus S_U$.
    In particular, $x'[a] = M-1$ for all $a \in \bar{A}$, which is only possible if $x[(s_m, \bar{s})] = 2k$.
    
    Let $U_1 := \{s', \bar{s}, s'_1\}$ and observe that
    \begin{align}
        \textstyle \sum_{P \in \paths'} x'_P & \textstyle  \leq \sum_{a \in \delta^+_{D'}(U_1)} x'[a] - \sum_{a \in \delta^-_{D'}(U_1)} x'[a] \label{eq:cut-s1}\\
        & \textstyle \leq \sum_{a \in \delta^+_{D'}(U_1) \setminus \{(s'_1, s_1)\}} u'_a + x'[(s'_1, s_1)] - x'[(s_m, \bar{s})] \notag\\
        & = \Delta + 2k - M' + x'[(s'_1, s_1)] - 2k \notag\\ 
        & = \Delta + x'[(s'_1, s_1)] - M'. \notag
    \end{align}
    For $a \in \bar{A}$, let $\varrho_{a} := \sum_{P \in \paths' : (s'_1, s_1), a \in P} x'_P$.
    Note that $\sum_{a \in \bar{A}} \varrho_{a} \leq 2k$,  because every path containing both $(s'_1, s_1)$ and some $a \in \bar{A}$ must traverse $(s_m, \bar{s})$.
    Thus, by averaging, there is $\bar{S} \in \binom{\bar{A}}{k-1}$ with $\sum_{a \in \bar{S}} \varrho_{a} \leq 2k \cdot (k-1) / |\bar{A}| = k-1$, which yields
    \begin{align}
        \lambda(x', \bar{S} \cup \{(s'_1, s_1)\}) &  \textstyle = x[(s'_1, s_1)] + (k-1)(M-1) - \sum_{a \in \bar{S}} \varrho_{a} \label{eq:barA}\\
        & \geq x[(s'_1, s_1)] + (k-1)(M - 2). \notag
    \end{align}
    Combining \eqref{eq:cut-s1} and \eqref{eq:barA}, we obtain 
    \begin{align*}
        \Delta - (kM - 1) & \textstyle \leq \sum_{P \in \paths'} x'_P - \lambda(x', \bar{S} \cup \{(s'_1, s_1)\})\\
        & \leq \Delta + x'[(s'_1, s_1)] - M' - (x[(s'_1, s_1)] + (k-1)(M-2)) \\
        & = \Delta - (kM - 1),
    \end{align*}
    from which we conclude that both \eqref{eq:cut-s1} and \eqref{eq:barA} must hold with equality throughout.
    This in particular implies that $x'[a] = u'[a]$ for all $a \in \delta_{D'}^+(U_1) \setminus \{(s'_1, s_1)\}$ and that $x'$ sends one unit of flow along each of the $2k$ paths with the node sequence $s'$-$s'_1$-$s_1$-$s'_2$-$s_2$-$\dots$-$s'_m$-$s_m$-$\bar{s}$-$t'$ (one for each $a \in \bar{A}$).

    Now let $\bar{S} \in \binom{\bar{A}}{k-1}$ and note that 
    \begin{align*}
        \lambda(x', \bar{S} \cup \{(s'_2, s_2)\}) = M' + (k-1)(M-2) = kM - 1,
    \end{align*} where the first identity follows from the fact that $(s'_2, s_2)$ and any $\bar{a} \in \bar{A}$ share exactly $1$ unit of flow on a common path (the only path containing both these arcs is the one with the node sequence $s'$-$s'_1$-$s_1$-$s'_2$-$s_2$-$\dots$-$s'_m$-$s_m$-$\bar{s}$-$t'$).
    Now observe that $\sum_{P \in \paths' : P \cap \bar{S} \cup \{(s'_2, s_2)\} = \emptyset} x'_P \geq \Delta - (kM -1)$ and hence $\sum_{P \in \paths'} x'_P = \Delta$. \qed
\end{proof}

\restateLemHatX*

\begin{proof}

    From \cref{lem:sauration} we know that $x'$ sends $\Delta$ units of flow from $s'$ to $t'$, of which $2k$ units traverse paths of the form $s'$-$s'_1$-$s_1$-$s'_2$-$s_2$-$\dots$-$s'_m$-$s_m$-$\bar{s}$-$t'$.
    In particular, this implies that all arcs in $A' \setminus A$ are saturated by $x'$, including all arcs incident to $s'$ and $t'$.
    
    For $i \in K'$ and $a \in A_i$, let
    $\gamma_{ai} := \sum_{P \in \paths' : (s'_i, s_i), a \in P} x'_P$.
    We show that $\gamma_{ai} \geq 1$.
    For this, consider $S := \{(s'_i, s_i), a\} \cup \bar{S}$, where $\bar{S} \in \binom{\bar{A}}{k-2}$ is chosen arbitrarily.
    Note that 
    \begin{align*}
        \lambda(x', S) & = M' + (k -1) (M-1) - |\bar{S}| - \gamma_{ai} = kM - \gamma_{ai},
    \end{align*}
    where the first identity follows from the fact that $(s'_i, s_i)$ and every $\bar{a} \in \bar{A}$ share exactly $1$ unit of flow in $x'$ on a common path (the one on the path with the node sequence $s'$-$s'_1$-$s_1$-$s'_2$-$s_2$-$\dots$-$s'_m$-$s_m$-$\bar{s}$-$t'$ ending with $\bar{a})$.
    Because $\lambda(x', S) \leq kM - 1$ we obtain $\gamma_{ai} \geq 1$ for all $i \in K'$ and $a \in A_i$.

    Note further that, because the arcs $(s_i, s'_{i+1})$ are all saturated by flow on the paths $s'$-$s'_1$-$s_1$-$s'_2$-$s_2$-$\dots$-$s'_m$-$s_m$-$\bar{s}$-$t'$, any flow path of $x'$ that contains both the arcs $(s_i, s'_{i+1})$ and $a \in A_i$ must enter $A$ at $s_i$ and contain an $s_i$-$t_i$-path in $D$ (recall that $\delta_{D}^-(s_j) = \emptyset = \delta_{D}^+(t_j)$ by property~\ref{prop:commodity-source-sink}, and if an $s'$-$t'$-path in $D'$ contains any arc of $A$, it must contain an $s_i$-$t_j$-path for some $i, j \in K$).
    We conclude that $x$ as defined in the statement of the lemma sends $\sum_{a \in A_i} \gamma_{ai} \geq d_i$ units of flow from $s_i$ to $t_i$ for each $i \in K'$. Because $\sum_{a \in \delta_D^+(s_i)} u_a = \sum_{a \in \delta_D^-(t_i)} u_a = d_i$, there must in fact be exactly $d_i$ units of flow from $s_i$ to $t_i$ in $x$.
    
    Further note that any path containing $(s', s_0)$ contains an arc of $A$ and must thus contain an $s_0$-$t_j$-path in $D$. 
    Because the in-capacity $\sum_{a \in \delta_D^-(t_j)} u_a = d_j$ of any sink $t_j$ for $j \neq 0$ is already saturated by the $d_j$ units of flow on $s_j$-$t_j$-paths, all $x'[(s', s_0)] = d_0$ units of flow on $(s', s_0)$ must be send along paths that contain an $s_0$-$t_0$-path in $D$ (together with $(s', s_0)$ and $(t_0, t')$).
    We conclude that $x \in X_K$.
    
    Now consider any $S \in \mathcal{S}_k$ and note that because $A \subseteq A'$ also $S \in \mathcal{S}'_k$.
    This implies $\lambda(x, S) = \lambda(x', S) \leq kM-1$, completing the proof of the lemma. \qed
\end{proof}

\end{document}